\theoremstyle{definition}
\newtheorem{defn}{\protect\definitionname}
\theoremstyle{plain}
\newtheorem{lem}{\protect\lemmaname}
\theoremstyle{plain}
\newtheorem{prop}{\protect\propositionname}
\theoremstyle{plain}
\newtheorem{cor}{\protect\corollaryname}
\@undefined\usepackage[usenames,dvips]{color}
\else\usepackage[usenames,dvipsnames]{color}
\definecolor{ChadDarkBlue}{rgb}{.1,0,.2}  
\definecolor{ChadBlue}{rgb}{.1,.1,.5}  
\definecolor{ChadRoyal}{rgb}{.2,.2,.8}  
\definecolor{ChadGreen}{rgb}{0,.4,0}    
\definecolor{ChadRed}{rgb}{.5,0,.5}  
\let\LaTeXtitle\title
\renewcommand{\title}[1]{\LaTeXtitle{\color{ChadBlue}{\LARGE #1}}}
\let\LaTeX@startsection\@startsection 
\renewcommand{\@startsection}[6]{\LaTeX@startsection%
{#1}{#2}{#3}{#4}{#5}{\color{ChadBlue}\raggedright #6}} 
\renewcommand \thesection {\@arabic\c@section.}
\renewcommand\thesubsection   {\thesection\@arabic\c@subsection.}
\renewcommand\thesubsubsection{\thesubsection \@arabic\c@subsubsection.}
\providecommand{\corollaryname}{Corollary}
\providecommand{\definitionname}{Definition}
\providecommand{\lemmaname}{Lemma}
\providecommand{\propositionname}{Proposition}
\begin{document}

\title{Economic Implications of Blockchain Platforms }

\author{Jun Aoyagi\thanks{University of California at Berkeley, Department of Economics. E-mail:
\href{jun.aoyagi@berkeley.edu}{jun.aoyagi@berkeley.edu}, tel: (510)
541-3046. 610 Evans Hall, Berkeley, CA 94720.}\and Daisuke Adachi\thanks{Yale University, Department of Economics. E-mail: \href{daisuke.adachi@yale.edu}{daisuke.adachi@yale.edu}.
28 Hillhouse Avenue, New Haven, CT 06511.\protect \\
First draft: February 2018. This paper was initially circulated under
the title ``Fundamental Values of Cryptocurrencies and Blockchain
Technology.'' We appreciate the constructive comments from Kosuke
Aoki, Gadi Barlevy, Shin-ichi Fukuda, Christine Parlour, Giuseppe
Perrone, Susumu Sato, Yoko Shibuya, Harald Uhlig, Noriyuki Yanagawa,
Yingge Yan, and seminar participants at SWET, UC Berkeley, University
of Tokyo, and Yale University; and we are gratful to Claire Valgardson
for copy editing. We are very grateful to Hideaki Kajiura and EY Advisory
\& Consulting for sharing their data and experience from their wine
blockchain project. }}

\date{September 2018}
\maketitle
\begin{abstract}
In an economy with asymmetric information, the smart contract in the
blockchain protocol mitigates uncertainty. Since, as a new trading
platform, the blockchain triggers segmentation of market and differentiation
of agents in both the sell and buy sides of the market, it reconfigures
the asymmetric information and generates spreads in asset price and
quality between itself and traditional platform. We show that marginal
innovation and sophistication of the smart contract have non-monotonic
effects on the trading value in the blockchain platform, its fundamental
value, the price of cryptocurrency, and consumers' welfare. Moreover,
a blockchain manager who controls the level of the innovation of the
smart contract has an incentive to keep it lower than the first best
when the underlying information asymmetry is not severe, leading to
welfare loss for consumers.\\
\textbf{JEL codes}: D47, D51, D53, G10, G20, L10 \\
\textbf{Keywords}: blockchain, smart contract, cryptocurrency, asymmetric
information, FinTech, market structure, two-sided markets
\end{abstract}
\newpage{}

\section{\setstretch{1.5}Introduction}

Since Bitcoin was proposed by \citet{nakamoto2008bitcoin}, the notion
of the blockchain has gone viral as a new, innovative way to manage
information. It provides a decentralized, public information management
system in which data can be recorded as valid only if a consensus
has been reached. Moreover, Ethereum, the second-largest blockchain,
has invented a protocol to implement \textit{a smart contract}---one
that is executed automatically based on specified conditions without
any centralized authorizations (\citealp{szabo1997formalizing}).
We can exploit this protocol to exchange assets, products, and information.
For example, many blockchain-based trading platforms have been launched,
such as those for foods (EY Advisory \& Consulting, Walmart), jewelry
(HyperLedger), arts and photography (Kodac), security (tZero), and
cryptocurrency (Waves, IDEX, Steller, Oasis, OKEx, Cashaa, and more).

In spite of this growth, the academic research on these topics is
still in its infancy. We contribute to the literature by proposing
a simple yet intuitive theory that explores the economic implications
of blockchain technology. In accordance with previous research (see
the next subsection), our primary focus is on the blockchain as a
new platform for exchanging goods and assets. Given that the technology
aims to improve information management, we consider an asymmetric
information problem regarding the assets traded. Moreover, since the
blockchain works as a new platform and is operated in parallel with
a traditional exchange with no blockchains, it has the features of
a multi-platform economy with two-sided markets, as described in the
field of industrial organization (IO). We investigate how innovation
in blockchain technology affects the segmentation of the trading platforms,
the price and quality of the assets traded, information asymmetry,
and consumers' welfare. We also define the fundamental values of the
blockchain platform and its attached cryptocurrency.

The smart contract is one of the most innovative aspects of the blockchain
system, which differentiates it from the traditional exchange protocol
with cash or credit. In traditional exchange, there is no way to eliminate
the asymmetric information \textit{a priori}, and the possibilities
of adverse selection and market breakdown are omnipresent. To mitigate
this problem, a typical economy relies on intermediations by a third
party, such as banks, insurance providers, and central securities
depositories, to offload the risks. In contrast, a blockchain transaction
is immune from information asymmetry due to the security mechanism
hard-wired into the protocol. As discussed in Section \ref{sec:Technology-Overview:-Cryptocurre},
transaction information stored in the blockchain is protected from
tampering, that is, rewriting the transaction record comes at a prohibitively
large cost. Crucially, Ethereum allows the transactions to be executed
based on sophisticated scripts; users can write a code on the blockchain
that describes the specific conditions they wish them to fulfill.
Hence, the transaction can be \textit{state contingent}, and the validity
of ``state information'' is highly credible. This highlights the
difference of the blockchain protocol from credit as a record-keeping
method, since the latter is not responsible for the actual transfer
and quality of goods, while both of them are automatically guaranteed
on Ethereum.

We consider non-atomic sellers and buyers who decide what type of
transaction platform to use to exchange assets whose quality (high
or low) is unknown to buyers. We define the smart contract in our
economy by claiming that a transaction by means of the blockchain
technology bears less information asymmetry. The traditional market
(cash-market or $C$-market) cannot detect low-quality assets, and
buyers face severe quality uncertainty. On the blockchain platform
($B$-market), in contrast, the low-quality assets can be detected
and excluded before trading occurs with some probability $\theta$,
which we call the security level (Appendix \ref{subs:motive} provides
a couple of examples as a micro-foundation for $\theta$). At first,
we take $\theta$ as an exogenous parameter, whereas in Subsection
\ref{subsec:Fundamental-Value-of} and onward, we study a manager
of the blockchain platform who controls $\theta$. Our main focus
is on how $\theta$ affects the activity of the entire economy.

 As the literature on two-sided markets suggests, the first direct
consequence of the emergence of the blockchain platform (a positive
$\theta$) is the differentiation and segmentation of \textit{both}
the sell and buy sides of the market. This segmentation is accompanied
by endogenous spreads in the quality and price of assets between the
blockchain and traditional platforms. Then, we find that marginal
innovation in the security level $\theta$ has non-monotonic effects
on the transaction \textit{value} in the blockchain platform, the
fundamental value of this platform, the price of cryptocurrency, and
consumers' welfare. That is, a more secure blockchain platform does
not necessarily induce more active transactions and better allocation
for consumers. 

A higher $\theta$ directly improves the quality of assets supplied
in the $B$-market, since the smart contract precludes a certain fraction
of bad assets. In addition, quality in the $B$-market is endogenously
amplified by the general equilibrium effect. The higher quality induces
a higher price of the assets traded through the $B$-market due to
a higher expected return.\footnote{The ``price of assets in the $j$-market'' is valued in terms of
cash, not cryptocurrency. Accordingly, we can see cryptocurrency as
an asset, and its price is also valued in terms of cash.} On the supply side, sellers of low-quality assets confront the price-liquidity
(rejection) tradeoff. They can obtain a higher return from trading
the asset in the blockchain platform, but at the risk of being rejected
and ending up consuming their own low-quality asset. On the other
hand, if a seller has a high-quality asset, the net return from selling
it in the $B$-market monotonically increases as the bid price goes
up with no fear of rejection. Thus, the reaction to innovation differs
depending on the nature of the seller's asset, endogenously boosting
the flow of high-quality assets into the $B$-market.

On the demand side, whether the higher security attracts more buyers
to the $B$-market depends on the relative rise in the assets' quality
versus the increase in the price, \emph{i.e.}, buyers face the traditional
price-quality tradeoff. We show that the improvement in the quality
is driven solely by the sell side's behavior, while the price change
is caused by the increase in the demand and decrease in the supply,
which leads to a larger increase in the price than in the quality.
As a result, a higher $\theta$ pushes the price up and reduces the
transaction volume, making the $B$-market ``an exclusive market
for a high-quality but expensive asset.'' 

If the primitive asymmetric information is not severe, it is easier
for buyers to give up the higher quality in the $B$-market and migrate
to the $C$-market to save the price cost. This implies that a higher
$\theta$ generates a larger decline in the transaction volume than
the increase in the price, leading to less activity in the $B$-market
as measured by the trading value.

To quantify the fundamental value of the blockchain itself, we also
consider a blockchain manager who proposes an \textit{ex-ante} contract
that enables traders to use the platform for a fee. Access to the
blockchain platform generates a strictly positive welfare gain for
market participants, which makes traders willing to pay. This positive
fee can be seen as the fundamental price of the blockchain technology,
and we show that its behavior has the same implications as the trading
value in the $B$-market and is non-monotonic in $\theta$. Therefore,
our model implies that the sophistication of the blockchain, measured
by a higher $\theta$, can reduce the technology's value as a trading
platform.

Based on this non-monotonicity, we discuss the optimal security level
$\theta$ for the platform manager. As mentioned earlier, a higher
$\theta$ can reduce the number of consumers who participate in the
$B$-market, the trading value, and the aggregate welfare gain for
consumers. This implies that the blockchain manager can charge only
a small fee, since, \textit{ex-ante}, each consumer expects that the
gain from participating in the exclusive $B$-market is small. This
gives the manager an incentive to keep $\theta$ lower than the first-best
level for consumers ($\theta=1$). In other words, she cares only
about the transaction value in the $B$-market and does not reckon
with the effect of $\theta$ on the activity in the $C$-market, making
her underestimate the benefit of an increase in $\theta$.\footnote{This is consistent with the literature on strategic management (\citealp{teece1986profiting};
\citealp{brandenburger1996value}) arguing that a firm may not adopt
innovation even though it improves consumers' welfare.}

This decline in the trading value and maximum possible fee tend to
occur when the asymmetric information is not severe and migration
is easier for consumers. In contrast to the conventional perspective,
our results indicate that the government should intervene in the market
to promote blockchain transactions when an information problem is
\textit{not} severe, while it does not need to meddle when it is severe. 

After a review of the related literature, Section 2 provides an overview
of the technology of the blockchain and cryptocurrency. Section 3
introduces the theoretical environment, while Section 4 analyzes comparative
statics to understand the effect of higher security in blockchain
technology. In Section 5, we propose the empirical hypotheses, and
Section 6 concludes the discussion. 

\subsection{Literature Review\label{subsec:Related-Literature}\label{Lit}}

The research on blockchain technology and cryptocurrencies (or FinTech,
in general) is expanding (see \citeauthor{harvey2016cryptofinance}
{[}\citeyear{harvey2016cryptofinance}{]} for a comprehensive review).
First, viewing the blockchain protocol as a new trading platform is
widely accepted. \citet{bartoletti2017empirical} provide empirical
evidence for the usage of the blockchain and the smart contract as
a platform. \citeauthor{chiu2017economics} (\citeyear{chiu2017economics},
\citeyear{chiu2018blockchain}) analyze the optimal design of the
blockchain to guarantee ``Delivery vs. Payment'' by considering
an economy with an intertemporal risk of settlement. \citet*{cong2018tokenomics}
develop a model in which the demand and price dynamics of tokens (cryptocurrency)
are driven by the size of the blockchain as a platform and its trading
needs. 

The blockchain can affect consumers' welfare through many channels.
According to \citet{cong2017blockchain}, its reduction of asymmetric
information promotes the entrance of firms and improves consumer welfare,
although its efficient record keeping makes it easier for firms to
collude. \citet*{malinova2017market} compare the possible degrees
of transparency of the private blockchain and find that the most transparent
setting maximizes consumers' welfare at the risk of front-running.\footnote{Users of blockchains can make the network private and limit information
transactions within a firm or a group of firms. This category of platforms
is called a \textquotedblleft closed-type\textquotedblright{} or \textquotedblleft permissioned\textquotedblright{}
blockchain. The public blockchain, in contrast, is called \textquotedblleft open-type\textquotedblright{}
or \textquotedblleft permissionless.\textquotedblright{}} \citet{khapko2016smart} focus on the optimal duration of the transactions
under counterparty risk and search friction to show that the optimal
implementation of settlement can improve welfare.\footnote{The feasibility of the blockchain implementation is another hot topic.
For example, \citet{biais2018blockchain} consider ``the folk theorem''
of the blockchain as a coordination game, and \citet*{aune2017footprints}
propose the hash-based protocol to address the issue that stems from
miners' incentive to delay the publication of the block. }  

Our model agrees with these studies on the fundamental effect of the
blockchain: it reduces the transaction cost by mitigating informational
problems. However, our economy, in which buyers face \textit{ex-ante}
quality uncertainty, highlights how the blockchain endogenously reconfigures
information asymmetry via quality differentiation and platform segmentation---both
of which are not analyzed in the literature---and how it affects
the value of the blockchain platform and consumers' welfare. 

The second strand of the literature, which emerged from \citet{akerlof1970market},
examines adverse selection. Authors such as \citet{kim2012endogenous},
\citet{guerrieri2014dynamic}, and \citet{chang2017adverse} show
that market segmentation leads to quality differences across markets.\footnote{Another dimension of segmentation is time, \emph{i.e.}, participants
can decide when to trade, as analyzed by \citet*{fuchs2016adverse},
\citet*{asriyan2017information}, and \citet{Fuchs2017}.} We see the blockchain as a new platform for trade, which exists alongside
the traditional cash market, and analyze the effect of segmentation
in the context of FinTech. Unlike the literature, in which the markets
are homogeneous \emph{per se}, our analyses propose that the different
structure of one market (\emph{e.g.}, the degree of security) affects
the entire economy. Also, these works do not investigate the manager
who optimally chooses the structure of her platform.

Our paper is also related to the literature in IO that analyzes endogenous
market structures and platform competition with two-sided markets.
The market segmentation and differentiation of agents have been analyzed
by \citet*{foucault2004competition}, \citet{rochet2006two}, \citet{damiano2008competing},
\citet{ambrus2009asymmetric}, and \citet{gabszewicz2014vertical},
although they do not study asymmetric information in the form of asset
quality uncertainty. \citet{yanelle1997banking} and \citet{halaburda2013platform}
consider competing platforms under asymmetric information, though
they focus on the information asymmetry between the platform and agents.
In line with these works, our platform manager controls the asymmetric
information between agents, while she does not incorporate some types
of externality due to the general equilibrium effects.\footnote{More broadly, a different security level of the blockchain $\theta$
in a market with asymmetric information can be interpreted as government
intervention, such as the purchase of assets on OTC after the recent
financial crisis. Most models of government intervention do not consider
the interaction of segmented markets. See, for example, \citet*{philippon2012optimal},
\citet{tirole2012overcoming}, and \citet*{chiu2016trading}.} 

Finally, the idea of the blockchain as a record-keeping method that
competes with the traditional cash is reminiscent of the concept of
``money as memory.'' How money and credit can substitute for each
other or coexist has been explored by \citet{kocherlakota1998money},
\citet{kocherlakota1998incomplete}, \citet{lagos2005unified}, \citet{rocheteau2005money},
\citet{camera2008another}, and \citet*{gu2014money}. However, credit
as an alternative payment method does not affect the quality of the
assets traded because it remains a record of a debtor, but not of
the assets' quality. Thus, the market segmentation in the literature
is intertemporal (day and night): a decentralized market comes first
to make some credit, and a centralized one comes afterwards to settle
repayment. As some evidence in Section \ref{sec:Empirical-Implications}
suggests, however, the blockchain platform triggers market segmentation
even in a static environment by keeping a record of assets' quality,
making transactions contingent on it, and differentiating both buyers
and sellers.

\section{Technology Overview: Blockchain Protocol\label{sec:Technology-Overview:-Cryptocurre}}

The blockchain can be seen as a novel way of managing and tracking
transaction information. Participants in a transaction (say, sellers
and buyers) possess private information about their state---how much
bitcoin they own or have already spent, the quality of the products
they sell, and so on. In the traditional world, we typically maintain
a ledger that records participants' state information in a centralized
manner, \emph{e.g.}, there is a bank as an intermediary. Bilateral
transactions with no intermediations by a credible third party incur
the risk of adverse selection due to asymmetric information or settlement
risk. 

In contrast, on the blockchain platform, the ledger is not held by
a particular entity, but is distributed across all participants in
the network. The \textit{distributed ledger system} requires the information
about the state of the economy to be a consensus among all the participants.
This highlights its first difference from traditional transactions,
in which only a centralized intermediary keeps track of the state
information.

Moreover, Ethereum allows complex scripts to be written to describe
the conditions under which the information is verified and recorded,
which implies that a transaction takes place only if the conditions
in the code are fulfilled \emph{(i.e.}, it is state contingent). This
is the crucial aspect that differentiates the blockchain from the
credit system (or credit cards) as a record-keeping method.\footnote{A warranty is an example of a similar system to the blockchain. Even
though warranties guarantee the quality of products, they have two
main differences. First, warranties are provided and executed only
if the product is transferred and the \textit{ex-post} quality is
verified, while the smart contract transfers the product only if the
quality is guaranteed. Thus, they differ in which aspect of the transaction,
the quality or the transfer of products, is contingent on the other.
Second, the execution of warranties comes at a significant cost for
consumers, while the smart contract never requires consumers to take
the cost because making a transaction serves as a quality certification.
See, for example \citet{lehmann1974consumer} and \citet{palfrey1983warranties}.}

In general, it is extremely difficult for one member of the network
to overturn the consensus. In the case of Bitcoin, for example, system
managers called miners leverage their computing power to solve a time-consuming
cryptographic problem. This process is called ``proof of work,''
and the miner who performs it fastest is entitled to add a new block
to the chain. Therefore, if a malicious agent attempts to add fraudulent
information to the transaction history, she must outpace all miners
in the network, which requires prohibitively high computing power.\footnote{There are several ways to reach a consensus, and different blockchains
adopt different processes. \citet{chiu2017economics} provide a theoretical
comparison of the efficiency of these methods. }

Once a set of transaction information forms a block, it is encrypted
by a hash function and passed to the next block to create a chain
of blocks. The output of the hash function becomes different if one
entity of input is different. Thus, revising a piece of information
in a chain requires the revision of all of the subsequent information
in the blockchain.\footnote{For example, if a state of the transaction up to $t$ is denoted by
$s^{t}=(s_{t-1},\cdots,s_{0})$, a block at $t$ records the information
of transactions at $t$ and the encrypted historical states, $S_{t}=(s_{t},h(s^{t}))$,
where $h$ is a hash function. Now, the next block at $t+1$ records
$S_{t+1}=(s_{t+1},h(S_{t}))$, and so on. If an agent wants to rewrite
the past state $s_{k}$ to $s'_{k}$, then she must change all the
blocks $S_{k},S_{k+1},\cdots,S_{t}$ because this attempt induces
the change in $S_{k}$, which triggers a change in $S_{k+1}$ because
$h(S_{k})$ with $s_{k}$ is not identical to that with $s'_{k}$,
and so on. } As a result, any attempts to benefit from modifying the existing
information is virtually impossible. That is to say, only relevant
information can be added to the blockchain, and it is free from tampering.

\section{Theoretical Framework}

Consider an economy with segmented markets that operate at $t=0$
and $1$. There is a continuum of risk-neutral buyers (consumers)
and sellers (producers), both characterized by the private value $\alpha\in[0,1]$.
$\alpha$ has a cumulative measure $F$, which will be assumed to
be uniform.\footnote{Imposing the same $F$ on both sellers' and buyers' $\alpha$ is for
tractability and is not essential in our analyses.} At date $t=0,$ each buyer is endowed with a certain amount of cash
$w$, draws $\alpha$, and partakes in markets to buy an asset. 

On the sell side, each seller is endowed with a unit asset with a
stochastic quality $q$, which is either high, $q=H$, or low, $q=L$,
with $\Pr(q=H)=\pi\in(0,1)$ and independent of $\alpha$. By the
law of large numbers, the economy-wide fraction of high-quality assets
is $\pi$, and that of low-quality assets is $1-\pi$. Also, risk-free
savings with a zero interest return are available.

For both sellers and buyers with a private value $\alpha$, the asset
yields the following (per capita) utility at date $t=1$:
\[
y(\alpha)=\begin{cases}
\alpha & \text{if \ensuremath{q=H}}\\
\phi\alpha & \text{if \ensuremath{q=L.}}
\end{cases}
\]
$\phi\in(0,1)$ is the primitive quality difference. If the asset
is low-quality, agents obtain only $\phi$ fraction of the utility.
Following the literature on market microstructure (such as \citealp*{glosten1985bid}),
agents can trade and hold, at most, one unit of asset and cannot short-sale. 

\subsection{Market Structure}

There are two trading platforms with different mediums of exchange.
One is the blockchain with cryptocurrency and the other is the traditional
market with fiat money. We give transactions via the blockchain platform
an index $j=B$ (blockchain) and those with fiat money an index $j=C$
(cash). 

\subsubsection*{Smart Contract}

To distinguish transactions via the blockchain from those through
the cash market, we define the \textit{smart contract} in our model
as follows.
\begin{defn}
$\theta$ fraction of low-quality assets that sellers intend to sell
in the $B$-market are detected and rejected by the blockchain mechanism.
\end{defn}
The parameter $\theta$ is called the ``security level'' of the
blockchain platform. We provide the motivating example and micro-foundation
of $\theta$ in Appendix \ref{sec:Appendix:-Motivating-Examples}.
The interpretation of $\theta$ can differ depending on the context
in which we apply this model. For instance, if the traded asset is
the cryptocurrency itself, such as Bitcoin, $\theta$ is the probability
that the mechanism will preclude attempted ``double spending.''
If the traded asset is consumption goods, as in the wine blockchain,
$\theta$ captures how intensively the contracts are made contingent
throughout the intermediations between producers and consumers.\footnote{Another interpretation of $\theta$ is in light of consensus quality
in \citet{cong2017blockchain}. Although the distributed ledger makes
a consensus on the state close to perfect by aggregating the reports
by system keepers, there is still noise and bias. We take a large
$\theta$ as a precise consensus. Also, \citet*{kroll2013economics}
consider a risk that the distributed ledger system goes wrong by group's
attempt to make a consensus. It is also shown by \citet{biais2018blockchain}
that a folk of a chain generates two (or more) different consensuses.
We capture these events by $\theta<1$.} At first, we take this value as given for participants in the market.
Later, in Subsection \ref{subsec:Fundamental-Value-of} and on, we
study a case wherein a manager of the blockchain can control this
security level.\footnote{We can incorporate insurance or third-party institutions that reduce
the risk of lemons in the $C$-market, which typically exist in the
real economy. One possible way to describe them is by introducing
a rejection probability $\theta_{C}$ in the $C$-market as well.
A more parsimonious way, which we follow, is to think of $1-\pi$
as the fraction of low-quality assets that stay in the economy even
after we ask third-party institutions to reject them, \emph{i.e.},
the existing insurance cannot cover all the assets.}

Moreover, to motivate agents to hold cryptocurrency, we introduce
the following restriction:
\begin{description}
\item [{Assumption$\ $1.\label{Assumption2.-If-a}}] To buy $k_{B}$ amount
of assets at price $P_{B}$ (in terms of cash) in the $B$-market,
a buyer must hold $P_{B}k_{B}/Q$ of cryptocurrency, where $Q$ is
the price of cryptocurrency in terms of cash (see the budget constraint
{[}\ref{eq:budget}{]}).
\end{description}
This assumption comes from the fact that the endowment is given by
cash. We call it the ``cryptocurrency in advance'' (CIA) constraint
in our model, and \citet*{schilling2018some} consider a similar formulation.\footnote{As explained in the introduction, this assumption captures the class
of cryptocurrencies that is used as a means of exchange in the blockchain
trading platform. The analyses in Subsection \ref{sec:Welfare-Analyses:-Fundamental}
provide a measure of the fundamental value of the blockchain, instead
of the value of the cryptocurrency, and hence does not need this assumption
regarding CIA. } It will be clear that the demand and pricing for cryptocurrency are
determined mostly outside of the asset trading market. Therefore,
by removing Assumption 1 and imposing it on sellers' behavior, we
can still analyze the other class of cryptocurrency platforms, such
as a part of Ethereum, in which the sellers must have cryptocurrency
to verify their authenticity. Also, we show that the fundamental price
of blockchain technology can be characterized even without Assumption
1.

\subsection{Optimal Behavior of Buyers}

A buyer with type $\alpha$ maximizes her expected consumption at
$t=1$, $V(\alpha)=E[c|\alpha]$, under the following budget constraints:
\begin{align}
w & \ge P_{C}k_{C}+Qb+s,\text{\ }\frac{Q}{P_{B}}b\ge k_{B},\label{eq:budget}\\
c & =y_{C}(\alpha)k_{C}+y_{B}(\alpha)k_{B}+s.\nonumber 
\end{align}
$k_{j}$ and $P_{j}$ represent the demand and price of the asset
at market $j$, $Q$ is the price of cryptocurrency, and $b$ is the
demand (quantity) for cryptocurrency. All prices are valued in terms
of cash. Thus, the price of assets traded in the $B$-market in terms
of cryptocurrency is $P_{B}/Q$. A risk-free saving option is denoted
by $s$. The definition of $y_{j}(\alpha)$ is given by (\ref{eq:yj})
below.

The constraints in the first line imply that the buyer allocates her
cash endowment to the purchase of the asset in the $C$-market and
cryptocurrency, and the latter is used to buy the asset in the $B$-market.
The purchase amount in the $B$-market is limited by her holdings
of cryptocurrency, as Assumption 1 suggests. As well, the agent can
stay inactive to get zero utility from her assets. The second line
shows the consumption level, in which, for $j\in\{B,C\}$,
\begin{align}
y_{j}(\alpha) & \equiv\tilde{\pi}_{j}\alpha\equiv[\pi_{j}+(1-\pi_{j})\phi]\alpha,\label{eq:yj}\\
\pi_{j} & \equiv\Pr(q=H\text{ in Market-\ensuremath{j}}).
\end{align}
Hence, $y_{j}$ represents the expected private return adjusted by
the risk of lemons in market $j$, which is denoted by $1-\pi_{j}$. 

Because of the risk neutrality and linearity of $y$, splitting order
into two markets is not optimal.\footnote{Only the buyers on the threshold (defined below) can split the order,
but we simplify our discussion by assuming a tie-breaking rule that
indifferent agents trade in the $B$-market.} Thus, the demand always hits its upper limit ($k_{j}=1$), and the
CIA constraint is binding. The expected return from purchasing assets
in each market ($V_{B}$, $V_{C}$) and that from staying inactive
($V_{0}$) are given by
\begin{align*}
V_{j}(\alpha)=\begin{cases}
\tilde{\pi}_{j}\alpha-P_{j} & \text{if \ensuremath{j\in\{B,C\}}}\\
0 & \text{if \ensuremath{j=0.}}
\end{cases}
\end{align*}
We subtract $w$ from the equations above because it does not affect
the equilibrium behavior.

To solve the venue-choice problem, we guess the following\footnote{See \citet{gabszewicz2014vertical} for a similar structure, in which
they state these as assumptions, while we derive them endogenously.}:
\begin{equation}
\frac{P_{B}}{\tilde{\pi}_{B}}>\frac{P_{C}}{\tilde{\pi}_{C}},\text{\ensuremath{\pi_{B}>\pi_{C},}}\label{guess}
\end{equation}
which will be shown to be a unique equilibrium. Intuitively, $P_{j}/\tilde{\pi}_{j}$
is a normalized price and represents the cutoff of $\alpha$ that
generates indifference between buying in the $j$-market and staying
inactive.\footnote{Buyers' behavior can be seen as the model of vertical differentiation,
such as the one provided in Chapter 2 of \citet{tirole1988theory}.} It indicates a positive measure of traders with relatively high (resp.
low) $\alpha$ who wish to go to the $B$-market (resp. $C$-market).\footnote{If (\ref{guess}) does not hold, the $B$-market becomes too attractive
to guarantee the coexistence.} Indeed, under (\ref{guess}), the optimal behavior of buyers with
type $\alpha$ is determined by the cutoff $\alpha^{*}$ such that
\[
\alpha^{*}\equiv\frac{P_{B}-P_{C}}{\tilde{\pi}_{B}-\tilde{\pi}_{C}}.
\]
Figure \ref{Fig_buyer} plots returns, $V_{i}$, against $\alpha$
and shows the cutoffs for the optimal behavior. Namely, it is optimal
for type $\alpha$ buyers to (i) buy one unit of the asset in the
$B$-market if $\alpha\ge\alpha^{*}$, (ii) in the $C$-market if
$\alpha\in[\frac{P_{C}}{\tilde{\pi}_{C}},\alpha^{*})$, and (iii)
stay inactive otherwise. 

Intuitively, each buyer faces a price-quality tradeoff, \emph{i.e.},
the $B$-market provides higher quality and expected returns, but
charges a higher price. Note that the gain from a higher $\pi_{j}$
is multiplied by $\alpha$, while the cost is constantly $P_{j}$.
Hence, the $B$-market looks more attractive for high-$\alpha$ buyers.
\begin{figure}[H]
\begin{center}\caption{Returns for Buyers}\label{Fig_buyer}

\includegraphics[scale=0.35]{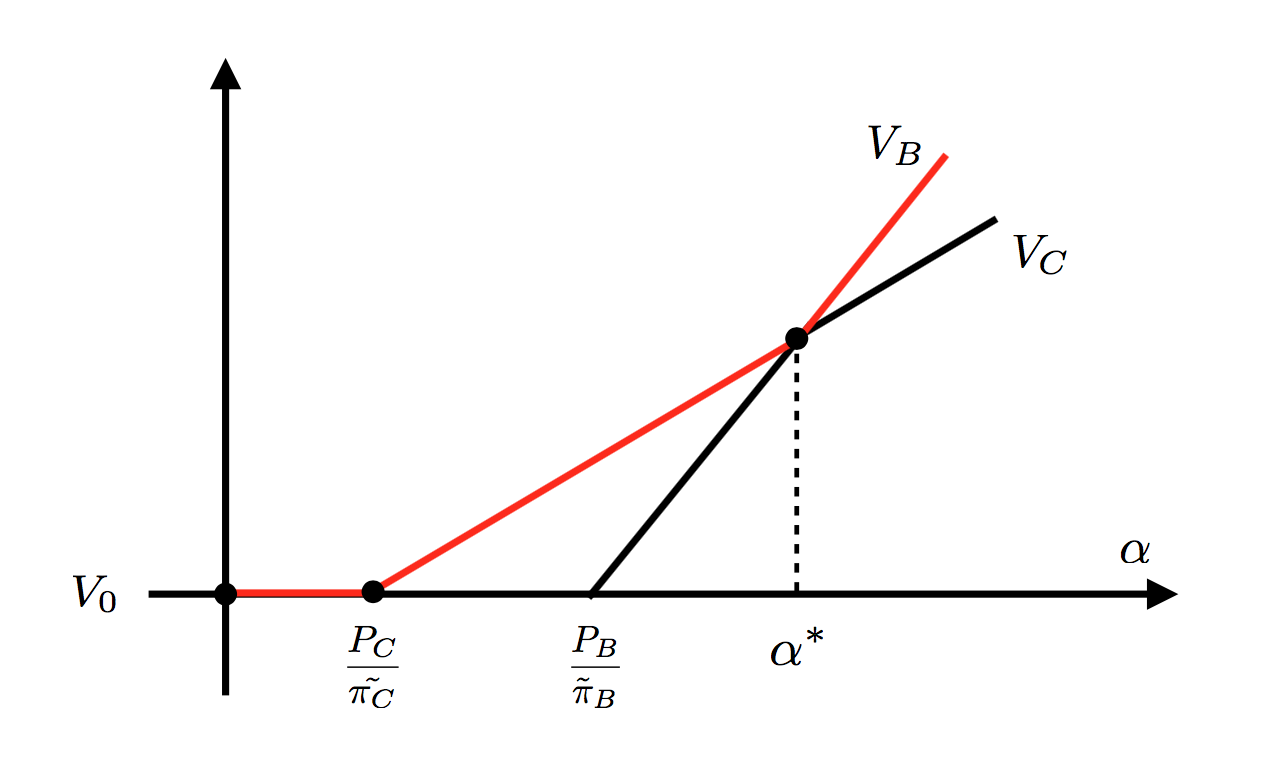}

\end{center}
\end{figure}

By aggregating along $\alpha$, the total demand in each market is
\begin{align}
K_{j}^{D} & =\begin{cases}
1-F(\alpha^{*}) & \text{for \ensuremath{j=B,}}\\
F(\alpha^{*})-F\left(\frac{P_{C}}{\tilde{\pi}_{C}}\right) & \text{for \ensuremath{j=C}.}
\end{cases}\label{KBD}
\end{align}
 The uniform $F$ allows us to derive inverse demand functions:\footnote{As for the form of $F$, the uniform assumption is restrictive in
this model. The equilibrium is driven by the migration behavior of
agents. For example, if we make $F$ bimodal, or if we only have two
types of $\alpha$, in the extremum case, the effect through the migration
is muted. By assuming uniformity, the fundamental effect of the blockchain
platform and market structure are not altered by this slope-effect
(or drastic change in the extensive margin) of migration. This is
the problem that commonly arises in the model of segmented markets
in which heterogeneous traders decide in which one to participate.
See, for example, \citet{zhu2014dark} for a similar discussion.}
\begin{align}
P_{j} & =\begin{cases}
\left(\frac{1}{\tilde{\pi}_{C}}+\frac{1}{\Delta\tilde{\pi}}\right)^{-1}\left(\frac{P_{B}}{\Delta\tilde{\pi}}-K_{C}^{D}\right) & \text{for \ensuremath{j=C}}\\
P_{C}+\Delta\tilde{\pi}(1-K_{B}^{D}) & \text{for \ensuremath{j=B,}}
\end{cases}\label{pc}
\end{align}
with $\Delta\tilde{\pi}\equiv\tilde{\pi}_{B}-\tilde{\pi}_{C}.$ Note
that plugging in the aggregate supply $K_{j}^{S}$---which will be
derived in the next section---yields the equilibrium prices. 

The price in one market affects the price in the other, and the quality
difference, $\Delta\tilde{\pi}$, influences the prices in the following
way:
\begin{lem}
With a fixed supply, a larger quality difference ($\Delta\pi$) induces
more traders to migrate from the $C$-market to the $B$-market, leading
to a lower $P_{C}$ and higher $P_{B}$. 
\end{lem}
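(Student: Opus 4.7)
The plan is to leverage the inverse demand system (\ref{pc}) with supplies held fixed at some $\bar K_B,\bar K_C$, and to read off the signs of the relevant partial derivatives directly. First I would substitute $K_B^D=\bar K_B$ and $K_C^D=\bar K_C$ into the two lines of (\ref{pc}), obtaining a linear system in $(P_B,P_C)$. From the second line, $P_B-P_C=\Delta\tilde\pi(1-\bar K_B)$; plugging this into the first line and collecting terms collapses the expression to the two closed forms
\[
P_C=\tilde\pi_C(1-\bar K_B-\bar K_C),\qquad P_B=\tilde\pi_B(1-\bar K_B)-\tilde\pi_C \bar K_C.
\]
Signing comparative statics is then immediate: $P_B$ is strictly increasing in $\Delta\tilde\pi$ whether the widening comes from raising $\tilde\pi_B$ or lowering $\tilde\pi_C$, while $P_C=\tilde\pi_C(1-\bar K_B-\bar K_C)$ decreases whenever the widening of $\Delta\pi$ is accompanied by a deterioration of $\pi_C$. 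Since $\tilde\pi_j=\pi_j+(1-\pi_j)\phi$ is monotone in $\pi_j$, the same signs transfer from $\Delta\tilde\pi$ to $\Delta\pi$.

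For the migration statement, I would separate the partial effect on demand from the general-equilibrium price response. Holding $(P_B,P_C)$ fixed, the cutoff $\alpha^*=(P_B-P_C)/\Delta\tilde\pi$ satisfies $\partial\alpha^*/\partial\Delta\tilde\pi=-(P_B-P_C)/(\Delta\tilde\pi)^2<0$ by the guess (\ref{guess}). Hence widening the quality gap pushes a positive measure of marginal buyers out of the $C$-market and into the $B$-market, creating excess demand in $B$ and excess supply in $C$; the closed forms above are exactly the equilibrium prices that restore market clearing in response to this displacement.

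The main obstacle is interpretive rather than algebraic. Under a truly fixed supply the equilibrium cutoff is pinned by $B$-market clearing at $\alpha^*=1-\bar K_B$, so the migration must be phrased as a shift in demand at unchanged prices, with $P_B$ rising and $P_C$ falling as the compensating equilibrium response rather than as a literal change in who ends up at which venue. Care is also needed because ``larger $\Delta\pi$'' can be implemented by raising $\pi_B$, lowering $\pi_C$, or both: the $P_B$ conclusion is robust to the decomposition, whereas the $P_C$ conclusion genuinely relies on $\pi_C$ weakly falling. Once these interpretive points are pinned down, the remaining work is the one-line linear algebra above.
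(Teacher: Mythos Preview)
Your approach is correct and matches what the paper has in mind: the lemma is stated without a formal proof, as an immediate reading of the inverse demand system (\ref{pc}), and your closed forms $P_C=\tilde\pi_C(1-\bar K_B-\bar K_C)$ and $P_B=\tilde\pi_B(1-\bar K_B)-\tilde\pi_C\bar K_C$ make that reading precise.

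Your interpretive caveats actually go beyond the paper. The observation that with fixed supplies the equilibrium cutoff is pinned at $\alpha^*=1-\bar K_B$, so that ``migration'' must be read as a disequilibrium demand shift at unchanged prices with the price movements as the restoring response, is exactly right. More substantively, your point that $P_C=\tilde\pi_C(1-\bar K_B-\bar K_C)$ falls only when $\pi_C$ falls identifies a genuine looseness in the lemma's wording: in the benchmark model $\pi_C=0$ identically, so $\tilde\pi_C=\phi$ is a constant and, with supply truly fixed, $P_C$ does not move at all---the spread $\Delta P=\Delta\tilde\pi(1-\bar K_B)$ widens entirely through $P_B$. This is not a gap in your argument but a qualification the paper's informal statement glosses over; in the generalized model of Appendix~\ref{sec:Generalized-Model-with} with $\lambda<1$ and endogenous $\pi_C>0$, both channels are active and the lemma holds as stated.
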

From the second equation, the price spread, $\Delta P\equiv P_{B}-P_{C}$,
stems from the quality difference. This can be seen as a premium:
The asset in the $B$-market obtains a higher valuation than the one
in the $C$-market through its higher quality, $\Delta\tilde{\pi}$.
We derive the quality difference from supply-side behavior in the
next subsection.

\subsection{Optimal Behavior of Sellers: Endogenous Quality\label{sub_seller}}

As the literature on adverse selection assumes, each seller knows
the quality of her asset.\footnote{This structure can be generalized by assuming that the seller is informed
with probability $\lambda$ and uninformed with probability $1-\lambda$.
An informed seller knows a specific characteristic of the asset and
can distinguish lemons, while an uninformed agent cannot. We provide
the analyses in the generalized case with $\lambda\in(0,1)$ in Appendix
\ref{sec:Generalized-Model-with}.} Also, note that each seller does not engage in strategic trading:
the signaling effect of venue choice is shunted aside because each
trader is non-atomic. Instead, the sell-side selection (screening)
occurs due to $\theta>0$, even in the competitive equilibrium.\footnote{Notice that, by setting the model in this way, we also implicitly
exclude the possibility of collusion by sellers as in \citet{cong2017blockchain}.
Since sellers are non-atomic, they expect that their behavior does
not affect the market prices, quantity, or quality. } 

\subsubsection{Low-Quality Sellers}

First, we consider the optimal strategy of sellers with low-quality
assets ($L$-type sellers). If a trader sells the asset in the $B$-market,
the expected return is 
\begin{equation}
W_{B}^{L}=(1-\theta)P_{B}+\theta\phi\alpha.\label{eq:WBL}
\end{equation}
The first term represents the case where the transaction avoids rejection,
while the second is the case where the asset is rejected by the blockchain.
In the latter case, the trader must use the asset to get $\phi\alpha$.\footnote{\label{fn:The-alternative-assumption}The alternative assumption is
allowing rejected traders to conduct ``order routing.'' A trader
can first try to sell in the $B$-market and, if rejected, can submit
a sell order in the $C$-market. We can show that this alternative
assumption does not change our main results, including propositions
\ref{prop_pi}, \ref{prop_price}, and \ref{prop_lam1_theta}, though
the equilibrium conditions are slightly modified. The results are
available upon request.} On the other hand, if she sells it in the $C$-market, the return
is $W_{C}^{L}=P_{C}$ , while the return from staying inactive is
$W_{0}^{L}=\phi\alpha.$ See the left-hand panel of Figure \ref{Fig_seller}
for a diagram of these value functions.

\begin{figure}[H]
\begin{center}\caption{Returns for Sellers}\label{Fig_seller}

\includegraphics[scale=0.4]{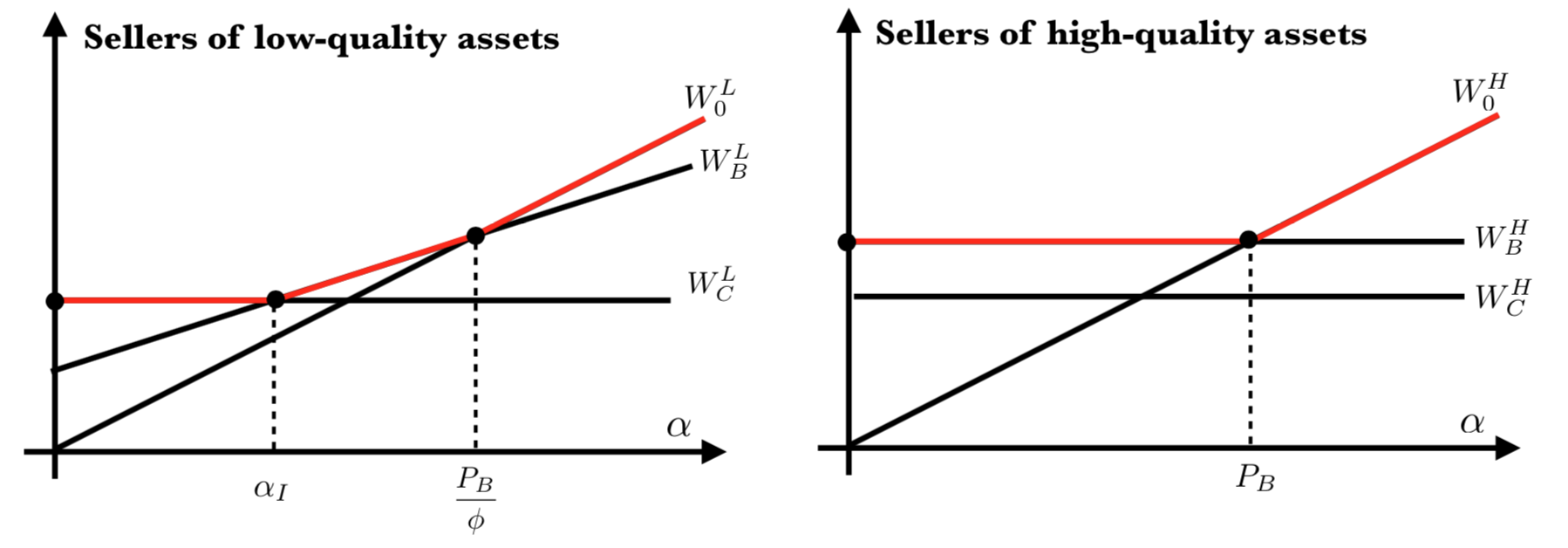}

\end{center}
\end{figure}
By comparing these three returns as functions of $\alpha$ under (\ref{guess}),
we see that the optimal strategy is to (i) stay inactive if $\alpha>\frac{P_{B}}{\phi}$,
(ii) sell in the $B$-market if $\alpha\in(\alpha_{I},\frac{P_{B}}{\phi}]$,
and (iii) sell in the $C$-market if $\alpha\le\alpha_{I}$, where
\begin{equation}
\alpha_{I}=\max\left\{ \frac{P_{C}-(1-\theta)P_{B}}{\phi\theta},0\right\} \label{eq:ThresSeller}
\end{equation}
is the cutoff that separates sellers into the $B$- and $C$-markets. 

When it is strictly positive, the cutoff $\alpha_{I}$ is increasing
in $\theta$, decreasing in $\phi$, and increasing in the expected
price difference (numerator of $\alpha_{I}$). Given the prices, an
increase in $\theta$ makes sellers who traded in the $B$-market
migrate to the $C$-market because a higher rejection probability
lowers their expected profit. On the other hand, a higher $\phi$
increases the continuation value and the profit from selling in the
$B$-market, causing marginal sellers to switch to this platform.
Finally, a larger difference in the expected prices, $P_{C}-(1-\theta)P_{B}$,
makes the $C$-market more attractive. 

High-$\alpha$ sellers are more likely to trade lemons in the $B$-market,
while low-$\alpha$ sellers tend to prefer the $C$-market due to
the price-liquidity tradeoff, \emph{i.e.}, the $B$-market provides
a higher selling price, but at the risk of rejection. High-$\alpha$
sellers do not care about the lower execution probability in the $B$-market
because they can obtain a high value of $\phi\alpha$ even if the
selling order is rejected, while the opposite is true for their low-$\alpha$
counterparts.

\subsubsection{High-Quality Sellers}

For a seller with a high-quality asset ($H$-type seller), the return
from trading in the $C$-market, $B$-market, and not trading are
given by\footnote{More precisely, a seller who sells in the $B$-market obtains $P_{B}/Q$
of cryptocurrency, which amounts to $P_{B}$ in terms of cash value.
We implicitly assume that sellers have access to a dynamic market
for cryptocurrency, in which they can trade it for fiat money at the
same exchange rate $Q$ over time. This assumption is motivated by
the overlapping generations of traders. The structure of these generations
is identical over time, and buyers in their young period arrive at
the markets and demand cryptocurrency as a means of exchange. Conversely,
older sellers are ready to trade their cryptocurrency for fiat money. } 
\begin{align*}
W_{j}^{H} & =\begin{cases}
P_{j} & \text{for \ensuremath{j=B,C,}}\\
\alpha & \text{if \ensuremath{j=0}.}
\end{cases}
\end{align*}
Under the guess (\ref{guess}), the optimal behavior is to (i) stay
inactive if $\alpha>P_{B}$ and (ii) sell it in the $B$-market if
$\alpha\le P_{B}$. The right-hand panel of Figure \ref{Fig_seller}
shows this comparison.

Therefore, the amount of assets that sellers \textit{intend} to sell
in each market is
\begin{align}
S_{B} & =\pi F(P_{B})+(1-\pi)\left[F\left(\frac{P_{B}}{\phi}\right)-F\left(\alpha_{I}\right)\right],\label{SU}\\
S_{C} & =(1-\pi)F\left(\alpha_{I}\right).\label{SI}
\end{align}
In (\ref{SU}), the first term is the supply from $H$-type sellers,
and the second is that from $L$-type sellers. (\ref{SI}) only consists
of $L$-type selling behavior. As suggested by the literature on adverse
selection with segmented markets (\citealp{chen2012market}; \citealp{kim2012endogenous};
\citealp{guerrieri2014dynamic}), a market with a low (high) price
and deeper (shallower) liquidity tends to attract low-quality (high-quality)
assets because the different prices and liquidity can work as a screening
device. 

Since the blockchain technology weeds out $\theta$ fraction of the
lemons from the $B$-market, the supply functions are given by
\begin{equation}
K_{B}^{S}=\pi F(P_{B})+(1-\pi)(1-\theta)\left[F\left(\frac{P_{B}}{\phi}\right)-F\left(\alpha_{I}\right)\right],\label{eq:KBI}
\end{equation}
while, in the $C$-market, all of the selling attempts are accomplished,
$K_{C}^{S}=S_{C}$. As a result, the average quality in each market
is derived as follows:
\begin{lem}
Endogenous market qualities are given by 
\begin{equation}
\pi_{j}=\begin{cases}
\frac{\pi F(P_{B})}{\pi F(P_{B})+(1-\pi)(1-\theta)\left[F\left(\frac{P_{B}}{\phi}\right)-F\left(\alpha_{I}\right)\right]} & \text{if \ensuremath{j=B}}\\
0 & \text{if \ensuremath{j=C.}}
\end{cases}\label{eq:pib}
\end{equation}
\end{lem}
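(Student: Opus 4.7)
The plan is to compute $\pi_j \equiv \Pr(q=H \mid \text{asset successfully traded in market-}j)$ directly from its definition by invoking the sellers' optimal strategies already characterized in Section \ref{sub_seller}, together with the supply expressions (\ref{SU}), (\ref{SI}) and (\ref{eq:KBI}). Since sellers and qualities are drawn independently and the population is non-atomic, the law of large numbers lets me equate the conditional probability with the ratio of masses.

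I would first dispatch the $C$-market, which is immediate. From the comparison of $W_j^H$ for $j\in\{B,C,0\}$ carried out in Section \ref{sub_seller}, under the guess (\ref{guess}) an $H$-type seller strictly prefers the $B$-market to the $C$-market (the $C$-market is never optimal because $W_C^H=P_C<P_B=W_B^H$ and there is no rejection risk for $H$). Hence the mass of high-quality assets routed to the $C$-market is zero, so the numerator of the conditional probability vanishes and $\pi_C=0$.

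Next, for the $B$-market I would apply the law of large numbers to the two seller populations. The mass of $H$-type sellers submitting to the $B$-market is $\pi F(P_B)$, and since the smart contract does not reject high-quality assets, all of them clear; this is the numerator. For the denominator I use $K_B^S$ from (\ref{eq:KBI}), which is precisely the total mass of successful transactions in the $B$-market, consisting of $\pi F(P_B)$ high-quality trades plus the $(1-\theta)$ fraction of attempted $L$-type trades $(1-\pi)[F(P_B/\phi)-F(\alpha_I)]$ that survive screening. Taking the ratio delivers the stated expression for $\pi_B$.

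The only genuine subtlety—and what I would flag as the main conceptual hurdle rather than a computational one—is the distinction between the \emph{attempted} supply (\ref{SU}) and the \emph{cleared} supply (\ref{eq:KBI}): a buyer in the $B$-market cares about the quality composition among transactions that actually go through, not among orders submitted. Because the blockchain's rejection mechanism filters lemons out of the post-screening pool, the $(1-\theta)$ factor must appear in the denominator of $\pi_B$ but not on the $H$-type term. Once that observation is made, no further work is required, and consistency with the guess (\ref{guess})—in particular $\pi_B>\pi_C=0$—will then be verified in the subsequent equilibrium analysis.
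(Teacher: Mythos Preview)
Your proposal is correct and mirrors the paper's own reasoning: the lemma is stated there as an immediate consequence of the sellers' optimal strategies and the supply expressions (\ref{SU})--(\ref{eq:KBI}), with the post-lemma remark that ``all of the high-quality assets go to the $B$-market since it provides a better price'' supplying exactly the $\pi_C=0$ argument you give. Your added emphasis on the attempted-versus-cleared supply distinction is a useful clarification that the paper leaves implicit.
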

Note that all of the high-quality assets go to the $B$-market since
it provides a better price. In other words, all of the assets traded
in the $C$-market are of low quality. This arises from the information
structure of sellers (\emph{i.e.}, they all know the quality of their
assets). In the real economy, it is not natural to claim that the
$C$-market contains only low-quality assets. Thus, in Appendix \ref{sec:Generalized-Model-with},
we redefine the equilibrium with uninformed sellers to show that a
more general information structure provides $0<\pi_{C}<\pi_{B}<1$
in the equilibrium, while we stick to our current formulation in the
main model to extract clear intuitions. 

\subsection{Equilibrium Spreads}

We can define the general equilibrium as follows:
\begin{defn}
The general equilibrium is defined by the price, quality, and quantity,
($Q,\{P_{j},\pi_{j},K_{j}\}_{j\in\{C,B\}}$), that clear the markets
($K_{j}^{S}=K_{j}^{D}$) with the following equations (under the normalization
of $B_{S}=1$): 
\begin{align}
K_{C}^{S} & =(1-\pi)\alpha_{I},K_{B}^{S}=\pi P_{B}+(1-\pi)(1-\theta)\left(\frac{P_{B}}{\phi}-\alpha_{I}\right),\label{K_lam1}\\
K_{B}^{D} & =1-\frac{P_{B}-P_{C}}{\tilde{\pi}_{B}-\phi},K_{C}^{D}=\frac{P_{B}-P_{C}}{\tilde{\pi}_{B}-\phi}-\frac{P_{C}}{\phi},\nonumber \\
\pi_{B} & =\frac{\pi P_{B}}{K_{B}},Q=P_{B}K_{B}^{i}.\nonumber 
\end{align}
The last equation is the clearing condition for the cryptocurrency
market. 

We can quantify the spreads in the price and quality between the two
markets (see Appendix \ref{app_dp_dpi} for the proofs).
\end{defn}
\begin{prop}
\label{prop_pi}The blockchain market achieves a higher quality than
the cash market, i.e., $\pi_{B}>\pi_{C}$.
\end{prop}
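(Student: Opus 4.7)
The plan is to split the inequality $\pi_B > \pi_C$ into the two subclaims $\pi_C = 0$ and $\pi_B > 0$, each of which is almost immediate from the sell-side analysis already carried out in Subsection \ref{sub_seller}.

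First I would prove $\pi_C = 0$ by reading off the $H$-type sellers' optimal behavior. Under the maintained guess (\ref{guess}), the conditions $P_B/\tilde{\pi}_B > P_C/\tilde{\pi}_C$ and $\tilde{\pi}_B > \tilde{\pi}_C$ together force $P_B > P_C$. Since an $H$-type seller receives $W_B^H = P_B$ in the blockchain market and $W_C^H = P_C$ in the cash market, and she faces no rejection risk in either, the $C$-market is strictly dominated for her. Hence no high-quality asset is supplied to the $C$-market; equivalently, $K_C^S = (1-\pi)\alpha_I$ in (\ref{K_lam1}) is composed entirely of low-quality units, so $\pi_C = 0$, matching the second case of (\ref{eq:pib}).

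Next I would verify $\pi_B > 0$ by inspecting the first case of (\ref{eq:pib}), namely $\pi_B = \pi P_B / K_B$. Since $\pi \in (0,1)$ and $K_B$ is bounded above by the total measure of sellers, all that remains is to show $P_B > 0$ in equilibrium. This is straightforward from the buy side: $V_B(\alpha) = \tilde{\pi}_B \alpha - P_B$ is strictly positive for $\alpha$ near $1$ whenever $P_B < \tilde{\pi}_B$, so the demand $K_B^D = 1 - F(\alpha^*)$ is bounded away from zero, and market clearing against a positive supply of $H$-type assets rules out $P_B = 0$. Combining the two bounds gives $\pi_B > 0 = \pi_C$.

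The main obstacle is conceptual rather than computational: the argument for $\pi_C = 0$ invokes $P_B > P_C$, which is precisely part of the guess (\ref{guess}) we are characterizing. I would address this the way the paper itself frames it, namely by establishing the proposition \emph{within} any candidate equilibrium satisfying (\ref{guess}) and deferring the self-consistency (existence and uniqueness of such an equilibrium through the system (\ref{K_lam1})) to a separate argument. In particular, I would not attempt to close the circularity here, since the proposition is logically one ingredient in the broader verification of (\ref{guess}).
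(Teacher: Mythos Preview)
Your argument is correct for the benchmark model ($\lambda = 1$) but takes a genuinely different route from the paper's proof in Appendix~\ref{app_dp_dpi}. You exploit the special feature of the benchmark that \emph{no} high-quality asset ever reaches the $C$-market (since $H$-type sellers face no rejection and strictly prefer the higher price $P_B$), so $\pi_C = 0$ is immediate from Lemma~2, and $\pi_B > 0$ follows from the positive $H$-supply in (\ref{eq:pib}). This is essentially a one-line verification of the guess (\ref{guess}) in the $\lambda = 1$ case.

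The paper, by contrast, proves the result for the general model with $\lambda \in [0,1]$, where uninformed sellers may route high-quality assets to the $C$-market and $\pi_C > 0$ is possible. There your shortcut fails, and the paper instead computes $\Delta\pi$ explicitly in terms of the equilibrium objects, substitutes the relation $\Delta P = (1-\phi)(1-K_B)\Delta\pi$ from (\ref{pc}), and solves a linear fixed-point equation for $\Delta\pi$ to obtain a strictly positive expression. That machinery is what establishes the self-consistency of (\ref{guess}) in full generality; setting $\lambda = 1$ then collapses to the benchmark. So your approach buys simplicity at the cost of scope, while the paper's buys the extension to uninformed sellers at the cost of a longer computation. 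Your closing remark about deferring the circularity is slightly too cautious: your argument already verifies the $\pi_B > \pi_C$ half of (\ref{guess}) self-consistently, and the paper handles the remaining half ($P_B\tilde{\pi}_C > P_C\tilde{\pi}_B$) in the first paragraph of Appendix~\ref{app_dp_dpi} via the buyers' partial equilibrium.
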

Proposition \ref{prop_pi} has direct implications for the prices.
That is, the positive spread in the quality, $\Delta\pi>0$, results
in a higher price in the $B$-market as well.
\begin{prop}
\label{prop_price}The price of assets traded in the $B$-market is
higher than that in the traditional $C$-market, that is, $P_{B}>P_{C}$.\label{prop:The-price-of}
\end{prop}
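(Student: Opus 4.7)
The plan is to work directly from the inverse demand relation derived earlier, namely
\[
P_B - P_C = \Delta\tilde{\pi}\,(1 - K_B^D),
\]
which follows from the second line of (\ref{pc}). Since Proposition \ref{prop_pi} has already established $\pi_B > \pi_C$, and since $\tilde{\pi}_j = \pi_j + (1-\pi_j)\phi$ with $\phi \in (0,1)$, we immediately get $\Delta\tilde{\pi} = (\pi_B - \pi_C)(1-\phi) > 0$. Hence the sign of $P_B - P_C$ coincides with the sign of $1 - K_B^D$, and the task reduces to showing that the measure of buyers active in the $B$-market is strictly less than one, i.e.\ $\alpha^* > 0$.

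I would prove $\alpha^* > 0$ by contradiction. Suppose instead that $P_B \le P_C$. Then $1 - K_B^D \le 0$, forcing $K_B^D = 1$ (since $K_B^D = 1 - F(\alpha^*) \le 1$). Under the given tie-breaking rule this means every active buyer migrates to the $B$-market, so demand in the cash market vanishes, $K_C^D = 0$. Market clearing in the $C$-market together with (\ref{SI}) then requires $K_C^S = (1-\pi)\alpha_I = 0$, so $\alpha_I = 0$. Consulting the seller cutoff (\ref{eq:ThresSeller}), $\alpha_I = 0$ forces $P_C \le (1-\theta) P_B$. Combining this with the hypothesized $P_B \le P_C$ gives $P_B \le (1-\theta) P_B$, hence $\theta P_B \le 0$, so $P_B = 0$ (and therefore $P_C = 0$) whenever $\theta > 0$.

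The final step is to verify that $P_B = 0$ is inconsistent with equilibrium. At zero prices, the $B$-market supply (\ref{eq:KBI}) collapses to $K_B^S = 0$, whereas the hypothesized demand equals one; this violates market clearing. Thus $P_B \le P_C$ is impossible, so $P_B > P_C$. The only subtlety to watch is the boundary case where $\alpha^*$ would lie outside $[0,1]$; one has to interpret $K_C^D$ as zero (rather than negative) when $\alpha^* \le P_C/\tilde{\pi}_C$, which is precisely the situation pinned down in the contradiction argument, so the logic is not circular. I expect this contradiction argument to be the main (though modest) obstacle, since the inequality $P_B > P_C$ is otherwise almost visible by inspection once Proposition \ref{prop_pi} is in hand.
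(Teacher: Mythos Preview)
Your argument follows the same route as the paper: both rely on the identity $\Delta P=\Delta\tilde{\pi}\,(1-K_B)$ from (\ref{pc}) together with $\Delta\tilde{\pi}>0$ (Proposition~\ref{prop_pi}) and $K_B<1$. The paper simply records ``we obviously have $K_B<1$'' (in its proof this follows because the verified guess (\ref{guess}) gives $\alpha^*>P_B/\tilde{\pi}_B>0$), whereas you supply an explicit contradiction argument for the same fact.

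One small slip in your contradiction step deserves a fix. Once you suppose $P_B\le P_C$, the inverse-demand identity $P_B-P_C=\Delta\tilde{\pi}\,(1-K_B^D)$ is no longer valid: that formula was obtained from $K_B^D=1-\alpha^*$, which presumes $\alpha^*$ is the binding cutoff, i.e.\ the first inequality in (\ref{guess}). With $P_B\le P_C$ and $\tilde{\pi}_B>\tilde{\pi}_C$ that inequality fails, and the relevant cutoff for $B$-buyers becomes $P_B/\tilde{\pi}_B$, so $K_B^D=1-P_B/\tilde{\pi}_B$, not $1$. Fortunately this does not damage the argument: the conclusion you actually need, $K_C^D=0$, still holds directly because under $P_B\le P_C$ and $\tilde{\pi}_B>\tilde{\pi}_C$ every buyer weakly prefers $B$ to $C$. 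From there your chain $K_C^S=0\Rightarrow\alpha_I=0\Rightarrow P_C\le(1-\theta)P_B\Rightarrow P_B=0\Rightarrow K_B^S=0\neq K_B^D$ goes through as written. So the proof is correct once you replace ``$K_B^D=1$'' by the direct observation that $C$ is dominated, hence $K_C^D=0$.
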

This conclusion is consistent with the findings on the wine blockchain
by the EY Advisory (see Section \ref{sec:Empirical-Implications}),
in which wines sold on the blockchain platform attain a higher price.

Note that a higher $\theta$ affects $\pi_{B}$ via two channels.
First, it exogenously precludes $\theta$ fraction of the lemons as
the second term in (\ref{K_lam1}). Second, it generates the endogenous
sorting of low-quality assets which manifests in the change in the
cutoff $\alpha$ caused by the fluctuation of $P_{j}$. This happens
even if the sellers do not trade strategically. Rather, the sell-side
selection is a consequence of the purely competitive tradeoff between
the higher equilibrium price in the $B$-market and detection risk.
Due to its higher continuation values, high-quality assets tend to
cluster in the $B$-market, while low-quality assets cluster in the
$C$-market to avoid being rejected. This mechanism generates a higher
quality and price in the $B$-market (spreads), which are self-sustaining
in the equilibrium.\footnote{The economy is not continuous at $\theta=0$. When we make $\theta\searrow0$,
it converges to the segmented markets economy with two homogeneous
markets, which is different from an economy with only one (the $C$-market).
For this reason, we do not compare the single-market economy with
the segmented economy. Rather, we focus on the comparative statics
in the economy with segmented markets.}

\section{Comparative Statics: The Effect of Security Improvement}

For a technical reason, assume that the following condition holds:\footnote{This condition guarantees that the economy has both cases of $\alpha_{I}>0$
and $\alpha_{I}=0$.} $\pi(1-\pi)(1-\phi)<1/4.$ 

\subsection{Cash-Market Breakdown}

Our first result regarding blockchain security seems rather drastic:
the existence of this platform can completely destroy the activity
of the cash market when the security level $\theta$ is sufficiently\textit{
low}. This appears counterintuitive given the literature on ``money
versus credit,'' which argues that a more sophisticated record-keeping
system makes cash inessential. Our result depends on whether $\theta$
becomes too low to sustain $\alpha_{I}=P_{C}-(1-\theta)P_{B}>0$.\footnote{The term ``cash-market breakdown'' implies that all transactions
that do not go through the blockchain disappear, but it does not intend
to indicate the disappearance of cash due to digital currency with
no blockchain foundation, such as PayPal.} 
\begin{lem}
\label{prop_switch}Let $\theta_{0}$ be the smaller solution of $\theta^{2}(1-\pi)-\theta+\pi(1-\phi)=0$.
(i) $\alpha_{I}$ is positive if and only if $\theta>\theta_{0}$.
(ii) $\theta_{0}$ is decreasing in $\phi$ and $\pi$.
\end{lem}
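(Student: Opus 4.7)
The plan is to pin down the threshold value of $\theta$ at which the cutoff $\alpha_I$ switches from zero to positive, identify it as the smaller root of the stated quadratic, and then read off the comparative statics. From (\ref{eq:ThresSeller}), $\alpha_I$ becomes strictly positive exactly when $P_C > (1-\theta)P_B$, so by continuity the transition occurs at $P_C = (1-\theta)P_B$. At this boundary, $\alpha_I = 0$ kills the cash-market supply, $K_C^S = (1-\pi)\alpha_I = 0$, so market clearing forces $K_C^D = 0$. Using the cash-market inverse demand from (\ref{K_lam1}), this yields $P_C/P_B = \phi/\tilde\pi_B$, which combined with $P_C/P_B = 1-\theta$ gives $\tilde\pi_B = \phi/(1-\theta)$. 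Substituting $\alpha_I = 0$ into the expression for $\pi_B$ in (\ref{K_lam1}) gives $\pi_B = \pi\phi/[\pi\phi + (1-\pi)(1-\theta)]$. Plugging this back into the identity $\tilde\pi_B = \phi + (1-\phi)\pi_B = \phi/(1-\theta)$ and clearing denominators produces the quadratic $(1-\pi)\theta^2 - \theta + \pi(1-\phi) = 0$.

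For part (i), set $g(\theta) = (1-\pi)\theta^2 - \theta + \pi(1-\phi)$. I would first check $g(0) = \pi(1-\phi) > 0$ and $g(1) = -\pi\phi < 0$, which together with the technical assumption $\pi(1-\pi)(1-\phi) < 1/4$ (positive discriminant) place the smaller root $\theta_0$ in $(0,1)$ while the larger root exceeds $1$. To fix the direction of the equivalence, I would look at the limits: at $\theta = 1$, any lemon routed to the $B$-market is surely rejected, so low-quality sellers strictly prefer the $C$-market and $\alpha_I > 0$; at $\theta$ close to $0$, low-quality sellers pool with high-quality sellers in the $B$-market with probability near $1$, so $\alpha_I = 0$. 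Continuity and monotonicity of the equilibrium cutoff in $\theta$ then identify $\theta_0$ as the unique crossing point.

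For part (ii), I would apply the implicit function theorem to $H(\theta; \pi, \phi) \equiv (1-\pi)\theta^2 - \theta + \pi(1-\phi) = 0$. Because $\theta_0$ is the smaller root of an upward-opening parabola, $H_\theta(\theta_0) = 2(1-\pi)\theta_0 - 1 < 0$. With $H_\phi = -\pi$, the formula $d\theta_0/d\phi = -H_\phi/H_\theta$ reads off the sign in $\phi$; for the derivative in $\pi$, I would use $H_\pi = (1-\phi) - \theta_0^2$ and evaluate $g(1-\phi) = -(1-\phi)(1-\pi)\phi < 0$, which locates $\theta_0$ below $1-\phi$ and lets me sign $H_\pi$ and hence $d\theta_0/d\pi$. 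The hardest step is part (i): the quadratic only pins down the boundary, so I would need continuity and monotonicity of the equilibrium map in $\theta$ to extend the sign of $\alpha_I$ from the boundary to the full open intervals on either side of $\theta_0$, making sure no other equilibrium branch spoils the global characterization.
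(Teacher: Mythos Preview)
Your boundary derivation for part (i) is correct and arguably cleaner than the paper's: imposing $K_C^D=0$ at the transition and reading off $\tilde\pi_B=\phi/(1-\theta)$ reaches the quadratic very directly. However, as you yourself flag, the step from ``the boundary sits at $\theta_0$'' to ``$\alpha_I>0$ iff $\theta>\theta_0$'' requires a global argument, and invoking monotonicity of the equilibrium map $\theta\mapsto\alpha_I(\theta)$ is not free---you would have to extract it from the full system of market-clearing conditions, which you do not do.

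The paper sidesteps this by a change of variable. It introduces the normalized spread $S=(P_B-P_C)/P_B$ and observes that (a) the condition $\alpha_I>0$ is literally $S<\theta$, and (b) under the hypothesis $\alpha_I>0$, market clearing collapses to a single scalar equation in $S$ that is \emph{monotone increasing in $S$} for each fixed $\theta$. Hence the equilibrium $S^*(\theta)$ satisfies $S^*<\theta$ if and only if the left-hand side evaluated at $S=\theta$ is positive, and this evaluation gives exactly $(1-\pi)\theta^2-\theta+\pi(1-\phi)<0$. This is a self-consistency check at the single point $S=\theta$, not a comparative-statics argument across $\theta$, so no monotonicity of the equilibrium in $\theta$ is needed. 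The paper then redoes the construction in the $\alpha_I=0$ regime, obtains closed-form $P_B,K_B$, and verifies the two branches match at $\theta_0$, which also delivers the continuity you were hoping to invoke.

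On part (ii), your implicit-function argument is more explicit than the paper's one-line ``immediate from the definition,'' but follow your own computation to its conclusion: $H_\pi=(1-\phi)-\theta_0^2>0$ (since $\theta_0<1-\phi<\sqrt{1-\phi}$) and $H_\theta<0$ give $d\theta_0/d\pi=-H_\pi/H_\theta>0$. So your method in fact shows $\theta_0$ is \emph{increasing} in $\pi$, contrary to the lemma's stated claim. A numerical check (e.g.\ $\phi=0.5$ with $\pi=0.1$ versus $\pi=0.9$, giving $\theta_0\approx 0.05$ versus $\theta_0\approx 0.47$) confirms this, so the $\pi$-monotonicity in the lemma appears to be misstated; the $\phi$-monotonicity is correct.
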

\begin{proof}
See Appendix \ref{app_switch} for the statement (i). (ii) follows
immediately from the definition of $\theta_{0}$.
\end{proof}
Recall that $\alpha_{I}$ is the cutoff for $L$-type sellers in the
$B$-market ($\alpha\ge\alpha_{I}$) or in the $C$-market ($\alpha<\alpha_{I}$).
Therefore, Lemma \ref{prop_switch} and (\ref{K_lam1}) imply the
following:
\begin{prop}
\label{prop:-market-shuts-down,}The $C$-market shuts down, $K_{C}=0$,
when $\theta$ is smaller than $\theta_{0}$.
\end{prop}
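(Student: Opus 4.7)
The plan is to chain Lemma \ref{prop_switch} with the $C$-market supply formula (\ref{SI}), using the fact that under the conjectured ordering (\ref{guess}) no $H$-type seller ever supplies to the $C$-market. Consequently the entire cash-market supply is generated by $L$-type sellers through the single cutoff $\alpha_I$, so collapsing $\alpha_I$ to zero collapses $K_C$.

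First I will invoke Lemma \ref{prop_switch}(i): for $\theta < \theta_0$ the interior candidate $[P_C - (1-\theta)P_B]/(\phi\theta)$ fails to be positive, so the $\max$ operator in the definition (\ref{eq:ThresSeller}) binds and $\alpha_I = 0$. The economic content is transparent: when $\theta$ is small the gamble $(1-\theta)P_B + \theta\phi\alpha$ dominates the sure payoff $P_C$ for every $L$-type seller, so none of them have any reason to deliver their asset to the cash venue.

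Next I will substitute $\alpha_I = 0$ into (\ref{SI}). Because $F$ is uniform on $[0,1]$, $F(0) = 0$, whence $K_C^S = (1-\pi) F(\alpha_I) = 0$. Market clearing $K_C^S = K_C^D$ then forces $K_C = 0$, which is exactly the claim of the proposition.

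There is no substantive obstacle here: the proposition is almost a direct translation of Lemma \ref{prop_switch} once one imports the earlier observation (Subsection \ref{sub_seller}) that $H$-type sellers never visit the $C$-market. The one point I will pause to verify is that the guess (\ref{guess}) remains internally consistent when cash-market supply vanishes — $\pi_C$ becomes vacuous while $\pi_B$ remains strictly positive, and buyers who would otherwise have traded in the $C$-market simply redistribute between the $B$-market and inactivity, so the ordering assumed in the derivation of $\alpha^*$ and $\alpha_I$ is preserved.
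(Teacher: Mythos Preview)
Your proposal is correct and mirrors the paper's own argument: the paper simply states that Lemma~\ref{prop_switch} together with (\ref{K_lam1}) yield the result, and your write-up is the natural unpacking of that sentence---$\theta<\theta_0$ forces $\alpha_I=0$, whence $K_C^S=(1-\pi)\alpha_I=0$ and market clearing gives $K_C=0$. The additional consistency check you include for (\ref{guess}) is not in the paper's terse justification but is a reasonable sanity remark.
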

We denote the region $\Theta_{NC}\equiv(0,\theta_{0}]$ as the \textit{no
C-market} region. The result comes from the supply side. Remember
that only $L$-type sellers trade their assets in the $C$-market.
They wish to sell the asset at a higher price $P_{B}$, but they fear
rejection. $\theta<\theta_{0}$ makes the rejection risk sufficiently
small that the price improvement in the $B$-market becomes dominant.
As a result, all the sellers migrate out of the $C$-market and try
to sell in the $B$-market. Of course, this induces a higher $P_{C}$,
but it is bounded and cannot explode due to the existence of outside
options (buying in the $B$-market or staying inactive). Also, the
higher share of $H$ assets in the $B$-market always makes $P_{B}>P_{C}$
even at the limit of $K_{C}^{S}=0$, making the shutdown of the $C$-market
an equilibrium outcome. 

We can determine the situations in which the $C$-market tends to
be eliminated.
\begin{cor}
\label{cor:Smaller--and}A smaller $\phi$ and $\pi$ make the cash-market
breakdown more likely, i.e., they make $\Theta_{NC}$ larger. 
\end{cor}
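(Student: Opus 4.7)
The plan is to obtain the Corollary as an essentially one-line consequence of Lemma~\ref{prop_switch}(ii) together with the definition $\Theta_{NC} \equiv (0,\theta_0]$. The first step is a translation: since the left endpoint of $\Theta_{NC}$ is pinned at $0$, making the interval larger (in the set-inclusion sense suggested by the phrase ``\textit{i.e.}, they make $\Theta_{NC}$ larger'') is equivalent to raising the right endpoint $\theta_0$. Hence the Corollary is logically identical to the statement ``$\theta_0$ rises when $\phi$ or $\pi$ falls,'' which is exactly the monotonicity assertion in Lemma~\ref{prop_switch}(ii). I would spell out this equivalence explicitly and then invoke the Lemma to close the argument.

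If one wished to re-verify the underlying monotonicity from scratch, I would apply the implicit function theorem to $g(\theta,\phi,\pi) \equiv \theta^{2}(1-\pi) - \theta + \pi(1-\phi) = 0$. Because $g$ is quadratic in $\theta$ with positive leading coefficient $1-\pi$, its graph opens upward, so at the \emph{smaller} root $\theta_0$ the slope $\partial g/\partial\theta = 2(1-\pi)\theta_0 - 1$ is strictly negative. Then $\partial g/\partial\phi = -\pi < 0$ gives $\mathrm{d}\theta_0/\mathrm{d}\phi < 0$ immediately, and the sign of $\mathrm{d}\theta_0/\mathrm{d}\pi$ is controlled by $\partial g/\partial\pi = (1-\phi) - \theta_0^{2}$, whose negativity on the relevant parameter region is precisely what Lemma~\ref{prop_switch}(ii) records. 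The standing condition $\pi(1-\pi)(1-\phi) < 1/4$ ensures that $\theta_0$ is a real, interior, simple root, so that the implicit function theorem is genuinely applicable and no degeneracies arise at the boundary.

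The main obstacle is really only bookkeeping rather than mathematics: all the substantive content is packaged into Lemma~\ref{prop_switch}(ii), and the Corollary is a one-step deduction. The only point requiring care is the meaning of ``more likely'': I would interpret it, consistent with the parenthetical clause in the statement, as set inclusion of the no-$C$-market region---so that any fixed $\theta$ that triggered breakdown before continues to trigger it, and a strictly larger band of $\theta$'s now does so as well---rather than as a probabilistic statement over some exogenous distribution on $\theta$. Under that reading the Corollary follows at once from the Lemma.
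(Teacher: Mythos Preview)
Your proposal is correct and takes essentially the same approach as the paper: the paper's proof is simply ``Immediate from Lemma~\ref{prop_switch},'' and your main argument is exactly this invocation, with the added (and helpful) observation that enlarging $\Theta_{NC}=(0,\theta_0]$ is equivalent to raising $\theta_0$. Your optional implicit-function-theorem paragraph goes beyond what the paper supplies, though note that for the $\pi$-derivative you ultimately defer the sign of $(1-\phi)-\theta_0^2$ back to the Lemma, so that part is not an independent re-verification.
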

\begin{proof}
Immediate from Lemma \ref{prop_switch}.
\end{proof}
This result should also be intuitive if we recall that the critical
$\theta$ is expressed as $\theta_{0}=\frac{P_{B}-P_{C}}{P_{B}}$.
The lower $\phi$ and $\pi$ mean that the underlying asymmetric information
is severe, and both make buyers more eager to trade in the $B$-market,
leading to a larger spread, $P_{B}-P_{C}$. Then, sellers of low-quality
assets are more willing to sell in the $B$-market and are likely
to abandon the traditional cash market.\footnote{Technically, we can avoid this by considering the general model with
$\lambda<1$, by restricting our focus on $\theta\ge\theta_{0}$,
and by modifying the model according to the discussion in note \ref{fn:The-alternative-assumption}. }

Of course, it is not realistic to anticipate that the real-world cash
market will completely break down---we do not believe that all grocery
stores will use the blockchain to track information for all products.
However, our discussions can be viewed as the model of an exchange
market for a particular class of assets (\emph{e.g.}, wine, diamonds,
art, and so on), for which the dominance of the blockchain platform
can be more believable. This also has implications for international
trade, as transactions with a country that supplies an ambiguous quality
of goods would be completely done through the blockchain.\footnote{Proposition \ref{prop:-market-shuts-down,} and Corollary \ref{cor:Smaller--and}
indicate that the shutdown occurs when (i) the introduced blockchain
is immature in the sense of $\theta\le\theta_{0}$, and (ii) agents
suffers from severe asymmetric information. If we consider the micro-foundation
of $\theta$ in Appendix \ref{subs:motive}, situation (i) can happen
when the total number of intermediations between producers and buyers
is large and covering all the steps using blockchain transactions
is more difficult, as exemplified by international trade in the real
world. Note that a longer intermediation chain also induces more severe
information asymmetry, which implies that (i) and (ii) may ensure
simultaneously. }

\subsection{Coexistence of Two Markets}

In the following subsections, we let $\theta$ be sufficiently high,
as we are interested in the interaction of two markets.\footnote{We believe that a sufficiently high $\theta$ that sustains the coexisting
$B$- and $C$-markets is realistic given the discussion in the introduction.
Arguments for $\theta\le\theta_{0}$ are provided in Appendix \ref{subsec:Proof-for-Proposition}.} We first investigate how $P_{B}$ and $\pi_{B}$ are differently
affected by $\theta$.
\begin{prop}
\label{prop_lam1_theta}(i) The segmented-markets economy admits a
unique solution in which (ii) $\frac{dP_{B}}{d\theta}>0$, $\frac{d\pi_{B}}{d\theta}>0$,
$\frac{dP_{C}}{d\theta}<0$, and $\frac{dK_{B}}{d\theta}<0$. Moreover,
(iii) the price spread widens more than the quality spread, \emph{i.e.},
$\frac{d\Delta P}{d\theta}>\frac{d\Delta\pi}{d\theta}$. \label{prop:(i)-The-segmented}
\end{prop}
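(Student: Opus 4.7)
The plan exploits two structural simplifications to reduce the equilibrium system in (\ref{K_lam1}) to a single implicit equation in $P_B$. First, using $\alpha_I=[P_C-(1-\theta)P_B]/(\phi\theta)$ from (\ref{eq:ThresSeller}) and the identity $P_B/\phi-\alpha_I=(P_B-P_C)/(\phi\theta)$, the supplies become $K_B^S=\pi P_B+(1-\pi)(1-\theta)(P_B-P_C)/(\phi\theta)$ and $K_C^S=(1-\pi)[P_C-(1-\theta)P_B]/(\phi\theta)$. Summing them and equating to $K_B^D+K_C^D=1-P_C/\phi$ cancels every $\theta$-dependent piece and yields
\[
\pi P_B+\frac{2-\pi}{\phi}P_C=1,\qquad(\star)
\]
a $\theta$-independent linear tether between the two equilibrium prices. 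Second, combining the quality identity $\pi_B K_B=\pi P_B$ with the inverse demand $P_B-P_C=(1-\phi)\pi_B(1-K_B)$ inherent in (\ref{pc}) produces the closed form
\[
K_B=\frac{\pi(1-\phi)P_B}{[1+\pi(1-\phi)]P_B-P_C},
\]
which, once $P_C$ is eliminated via $(\star)$, is a strictly decreasing function of $P_B$ alone.

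For (i), substituting both expressions into the remaining $B$-market clearing $K_B^S=K_B$ produces a scalar equation $G(P_B;\theta)=0$. I would verify $\partial G/\partial P_B>0$ by noting that $K_B^S$ is affine and increasing in $P_B$ (since $(\star)$ makes $P_B-P_C$ rise with $P_B$), while $K_B$ is strictly decreasing in $P_B$ by direct differentiation ($dK_B/dP_B=-\phi A/(BP_B-\phi)^2$ for positive constants $A,B$ depending only on $\pi,\phi$). Continuity together with the boundary condition $\pi(1-\pi)(1-\phi)<1/4$ then yields a unique interior root.

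For (ii), the only $\theta$-dependence in $G$ enters through the coefficient $(1-\theta)/(\phi\theta)$, which is strictly decreasing in $\theta$; hence $\partial G/\partial\theta<0$, and the implicit function theorem delivers $dP_B/d\theta>0$. The relation $(\star)$ then immediately gives $dP_C/d\theta<0$, $dK_B/d\theta<0$ follows from the monotonicity of $K_B$ in $P_B$, and $\pi_B=\pi P_B/K_B$ with a rising numerator and a falling denominator yields $d\pi_B/d\theta>0$.

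For (iii), differentiating $P_B-P_C=(1-\phi)\pi_B(1-K_B)$ and using $\Delta\pi=\pi_B$ gives
\[
\frac{d\Delta P}{d\theta}-\frac{d\Delta\pi}{d\theta}=\bigl[(1-\phi)(1-K_B)-1\bigr]\frac{d\pi_B}{d\theta}+(1-\phi)\pi_B\bigl|dK_B/d\theta\bigr|.
\]
The first term is negative since $(1-\phi)(1-K_B)<1$, while the second is positive; the assertion thus amounts to the volume-contraction channel dominating the shrinking-factor channel. I would then express both $d\pi_B/d\theta$ and $dK_B/d\theta$ in terms of $dP_B/d\theta$ using the closed-form substitutions from $(\star)$ and from the formula for $K_B$, reducing the required sign to a pure algebraic inequality in $\pi$, $\phi$, and the equilibrium $K_B$ on the admissible parameter region. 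This dominance step is where I expect the genuine work of the proof to lie; once it is settled, the rest of the argument is mechanical, since every preceding monotonicity claim has already been cashed out from $(\star)$ and the single-variable reduction.
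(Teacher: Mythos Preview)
Your reduction is essentially the paper's own argument. The linear tether $(\star)$ is exactly the paper's equation $P_{C}=\frac{\phi}{2-\pi}(1-\pi P_{B})$, your closed form for $K_{B}$ coincides with the paper's (\ref{KB2}) once $P_{C}$ is eliminated, and the scalar implicit equation plus the implicit function theorem is precisely how the paper obtains $dP_{B}/d\theta>0$ (the paper reparameterizes to $y=P_{B}^{-1}$ and $g=(1-\theta)/\theta$, but the content is identical). Your derivations of $dP_{C}/d\theta<0$ and $dK_{B}/d\theta<0$ from $(\star)$ and the monotone $K_{B}(P_{B})$ match the paper as well. One place where your route is actually cleaner: for $d\pi_{B}/d\theta>0$ the paper goes through a separate chain of derivatives in $g$, whereas your observation that $\pi_{B}=\pi P_{B}/K_{B}$ has a rising numerator and falling denominator settles it in one line.

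Two small points. First, your appeal to the condition $\pi(1-\pi)(1-\phi)<1/4$ for uniqueness is misplaced: in the paper that hypothesis only guarantees that both regimes $\alpha_{I}>0$ and $\alpha_{I}=0$ occur as $\theta$ varies; uniqueness of the crossing follows from the monotonicity of $K_{B}^{S}$ and $K_{B}^{D}$ in $P_{B}$ alone, as you in fact argue. Second, on (iii): you correctly set up the decomposition and flag the dominance inequality as the substantive step. The paper's own proof of (iii) is a single sentence invoking ``decreasing $K_{B}$ and $\Delta P/\Delta\tilde{\pi}=1-K_{B}$,'' which is really the statement that the \emph{ratio} $\Delta P/\Delta\tilde{\pi}=\alpha^{*}$ is increasing in $\theta$ (equivalently, the buyer cutoff rises). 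If you read (iii) literally as a comparison of the two derivatives, then yes, there is residual algebra to do, and the paper does not spell it out either; the economically operative content---and what the rest of the paper uses---is the monotonicity of $1-K_{B}$.
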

\begin{proof}
See Appendix \ref{app_proof_lam1}.
\end{proof}
Consider an increase in $\theta$. The supply of \textit{high}-quality
assets is not directly affected, as the first term of $K_{B}^{S}$
in (\ref{K_lam1}) suggests. On the other hand, with a fixed $P_{B}$,
a higher $\theta$ reduces $K_{B}^{S}$ by detecting and sweeping
out \textit{\small{}low}-quality assets (the supply-side effect).
Also, this improves the $B$-market's quality, making participants
more willing to buy in $B$-market, which increases $K_{B}^{D}$ (the
demand-side effect). 

How does $P_{B}$ change compared to the quality, $\pi_{B}$? The
small supply and strong demand pressure both work to increase the
price, \emph{i.e.}, it is pushed up by both the demand- and supply-side
effects. On the other hand, the quality improvement is linked only
with the decline in the supply. Crucially, this implies that the growth
in the price is larger than the quality improvement, making the trading
demand in $B$-market smaller. This result has direct implications
for the next results.

\subsection{Non-Monotonic Effects of Blockchain Sophistication}

Our primary concern is whether the sophistication of the blockchain,
measured by $\theta$, increases the transaction value in the platform,
the price of cryptocurrency, the value of the blockchain itself, and
consumers' welfare. It turns out that a rise in $\theta$ has non-monotonic
effects on these variables. Specifically, a higher $\theta$ is more
likely to \textit{\small{}lower} these variables when information
asymmetry is not severe, or the level of $\theta$ is sufficiently
small. 

First, we formally state the results regarding the trading value and
$Q$ in the sequel. We will offer intuitions and key mechanisms behind
the results when we investigate the value of the blockchain and consumers'
welfare in the next subsection, because all of these variables are
driven by the same factors.

\subsubsection{Trading Value and the Price of Cryptocurrency}

The market clearing condition gives $Q=P_{B}\int k_{B}^{D}dF(\alpha)=P_{B}K_{B}$
(with $B_{S}=1$, which makes no difference in our analyses). Let
$\phi_{1}=\frac{2-\pi}{3-\pi}$, and $\phi_{0}(<\phi_{1})$ be the
unique solution of (\ref{app_phi1}) in Appendix \ref{app_proof_lam1}.
\begin{prop}
\label{prop_lam1_Q}\label{prop: Qnonlin}(i) If $\phi<\phi_{0}$,
$Q$ and $P_{B}K_{B}$ are monotonically increasing in $\theta$.
(ii) If $\phi_{0}\le\phi<\phi_{1}$, $Q$ and $P_{B}K_{B}$ are $U$-shaped,
and there is $\theta^{*}$ s.t.,
\[
\frac{dQ}{d\theta}=\frac{dP_{B}K_{B}}{d\theta}\lessgtr0\Leftrightarrow\theta\lessgtr\theta^{*}.
\]
(iii) If $\phi_{1}\le\phi\le1$, $Q$ and $P_{B}K_{B}$ are monotonically
decreasing in $\theta$.\label{prop:(i)-If-,}
\end{prop}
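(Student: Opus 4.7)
The plan is to exploit the cryptocurrency market-clearing identity $Q = P_B K_B$ to reduce claims about both variables to a single scalar analysis. By Proposition \ref{prop_lam1_theta} we already know that $dP_B/d\theta > 0$ and $dK_B/d\theta < 0$, so the sign of $d(P_B K_B)/d\theta$ is a price-versus-volume race whose outcome must be governed by elasticities, and hence by the primitive parameters $\phi$ and $\pi$. Non-monotonicity should therefore arise precisely because one factor rises while the other falls.

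First I would solve the equilibrium system in (\ref{K_lam1}) in closed form. On the coexistence region $\theta > \theta_0$ from Lemma \ref{prop_switch}, the uniformity of $F$ makes the four market-clearing equations together with $\alpha_I = (P_C - (1-\theta)P_B)/(\phi\theta)$ linear in $(P_B, P_C)$ once $\theta$, $\phi$, $\pi$ are fixed. Solving and substituting back into $K_B$ expresses $P_B K_B$ as an explicit rational function of $\theta$. I would then differentiate and, after clearing the (positive) denominator, show that the sign of $d(P_B K_B)/d\theta$ coincides with the sign of a low-degree polynomial $g(\theta;\phi,\pi)$ in $\theta$.

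The case analysis then proceeds by evaluating $g$ at the endpoints of the feasible interval. Evaluating at $\theta=1$ (where the blockchain rejects every lemon and the B-market holds only high-quality assets) should give $g(1;\phi,\pi)$ proportional to a linear expression in $\phi$ whose root is $\phi_1 = (2-\pi)/(3-\pi)$; for $\phi \ge \phi_1$ the sign is non-positive throughout, delivering case (iii). Evaluating at the coexistence boundary $\theta=\theta_0$ should similarly produce $g(\theta_0;\phi,\pi)$ whose sign flips at the value $\phi_0$ defined by (\ref{app_phi1}); for $\phi < \phi_0$ the sign is positive throughout, delivering case (i). For $\phi_0 \le \phi < \phi_1$ the two endpoint signs disagree, and to complete case (ii) I would show that $g$ is monotone in $\theta$ on the feasible interval (e.g.\ by inspecting $\partial g/\partial\theta$ directly, using that the sign-changing term is convex in $\theta$), which yields a unique crossing $\theta^*\in(\theta_0,1)$ and the stated U-shape.

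The main obstacle will be the algebra: solving the equilibrium system cleanly enough that the derivative collapses to a tractable polynomial, and then verifying that the relevant sign-test polynomial has exactly one root in $(\theta_0,1)$ for $\phi$ in the intermediate range. The threshold $\phi_1$ is easy to identify because the $\theta=1$ boundary simplifies dramatically (all traded supply is high quality), but pinning down $\phi_0$ requires carrying $\alpha_I$ and $(1-\theta)$ terms all the way through the substitution, so most of the effort will be bookkeeping rather than conceptual.
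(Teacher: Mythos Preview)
Your overall architecture---reduce to a single sign test, check the two endpoints $\theta=1$ and $\theta=\theta_0$ to locate $\phi_1$ and $\phi_0$, and verify monotonicity in between---is exactly what the paper does. But two points deserve attention.

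First, a concrete error: the equilibrium system is \emph{not} linear in $(P_B,P_C)$ once $(\theta,\phi,\pi)$ are fixed. The demand side involves $\tilde\pi_B-\phi=(1-\phi)\pi_B$, and $\pi_B=\pi P_B/K_B$ depends on the endogenous volume. Substituting this back makes $K_B^D$ a rational function of $(P_B,P_C)$, and the reduced equation for $P_B$ is quadratic (the paper writes it as $H(P_B^{-1},g)=0$ with a rational term). So your plan to ``solve the linear system and differentiate the resulting rational function of $\theta$'' will not go through as stated; you would have to carry a quadratic solution, which is messier than you anticipate.

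Second, the paper avoids this entirely with a trick you do not use: from the demand-side relation (\ref{KB2}), $K_B$ can be written as a function of $P_B$ \emph{alone}, with no residual $\theta$. Hence $Q=P_BK_B$ is a function of $P_B$ only, and the chain rule gives
\[
\frac{dQ}{d\theta}=\frac{dQ}{dP_B}\cdot\frac{dP_B}{d\theta},
\]
where the second factor is already known positive from Proposition~\ref{prop_lam1_theta}. The entire sign analysis then reduces to $dQ/dP_B$, which is a simple affine expression in $P_B$ with a single inflection point $p^{**}$. Translating $P_B\gtrless p^{**}$ back to $\theta$ via the implicit relation $H=0$ produces a function $A(\theta)$ that is linear in $g=(1-\theta)/\theta$, whose monotonicity is then immediate rather than something that needs to be checked by differentiating a higher-degree polynomial. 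This is what makes the endpoint evaluations at $\theta=1$ and $\theta=\theta_0$ clean. Your approach can be salvaged, but the paper's change of variable from $\theta$ to $P_B$ is the step that turns the bookkeeping you worry about into a two-line computation.
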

\begin{proof}
See Appendix \ref{app_proof_lam1}.
\end{proof}
The proposition indicates that the trading value and the price of
cryptocurrency exhibit a non-monotonic reaction to the sophistication
of the blockchain technology. 

This finding differentiates our theory from the literature on money
versus credit, in which the increase in record-keeping ability tends
to make cash inessential. As we will see in the next subsection, endogenous
price and quality spreads, which are absent in the literature, are
the key factors that prompt this result.

\subsection{Fundamental Value of the Blockchain and Welfare Impact \label{sec:Welfare-Analyses:-Fundamental}}

In this subsection, we calculate the aggregate welfare of buyers (see
Appendix \ref{subsec_seller_gain} for sellers' welfare), as well
as the welfare gain from access to the blockchain platform.\footnote{Technically, as the CIA constraint always binds, the welfare comparison
does not hinge on the existence of cryptocurrency. More generally,
the equilibrium variables with and without the CIA constraint are
identical except for the formula for $Q$ because of the ``monetary
neutrality'' of cryptocurrency.}

\subsubsection{Buyers' Welfare}

We define the aggregate welfare of buyers by integrating the gain
from trade (we ignore the common constant endowment $w$):
\begin{align}
v_{B} & =\int_{\alpha^{*}}(\tilde{\pi}_{B}\alpha-P_{B})dF+\int_{P_{C}/\phi}^{\alpha^{*}}(\phi\alpha-P_{C})dF\label{vb0}\\
 & =\underset{\propto P_{B}K_{B}=Q}{\underbrace{\int_{\alpha^{*}}(\Delta\tilde{\pi}\alpha-\Delta P)dF}}+\int_{P_{C}/\phi}(\phi\alpha-P_{C})dF\label{vb1}
\end{align}
In equation (\ref{vb0}), the first term is the welfare of buyers
who purchase in the $B$-market, and the second encompasses those
who purchase in the $C$-market. This can be rewritten by using ``welfare
gain'' and ``reservation welfare'' as in (\ref{vb1}). The second
term of (\ref{vb1}) represents the welfare of all the active buyers
from purchasing in the $C$-market, \emph{i.e.}, the reservation welfare
when agents can use only this venue. The first term of (\ref{vb1})
is the gain (increment) in welfare that stems from changing the trading
platform from $C$ to $B$, which only $\alpha\ge\alpha^{*}$ agents
attempt to do. Interestingly, the welfare gain is co-linear with the
trading value in the $B$-market in the benchmark model (see Appendix
\ref{app_welfare_b}). If the blockchain platform has cryptocurrency,
it further implies that the welfare gain is measured by $Q$. The
effect of $\theta$ on $v_{B}$ is analyzed later.

\subsubsection{Fundamental Value of the Blockchain Technology\label{subsec:Fundamental-Value-of}}

The first term of equation (\ref{vb1}) shows that access to the blockchain
technology attains a positive fundamental price. To see this, we introduce
a monopolistic \textit{blockchain manager} who maintains the platform
and determines the level of $\theta$. The existence of this type
of agent is realistic: even though the distributed ledger is managed
by the participants of the network, there is an institution that provides
the exchange platform itself.\footnote{The assumption that the manager is a monopolist is also realistic
at this point, provided that we have a limited number of blockchain
firms for each product. For example, as of February 2018, HyperLedger
is the single leading firm that provides a platform for security trading
by using the blockchain. } 

We discuss the buy-side problem by assuming that consumers must pay
a fee to use the blockchain.\footnote{We discuss on the fee imposed on the sell side in Appendix \ref{subsec:Manager-vs.-Sellers}.}
Let us introduce a pre-trade period $t=-1$ and suppose that a randomly
picked buyer is approached by the blockchain manager (called the ``manager,''
hereafter), who charges a fee $f_{B}$ for access to the $B$-market
before the type $\alpha$ is drawn at $t=0$.\footnote{Another way to think about the price of the blockchain is to conceptualize
$f_{B}$ as contingent on the usage of the $B$-market. In this case,
the profit of buying in the $B$-market is shifted down by $f_{B}$
only if a trader decides to participate. This formulation, however,
generates complicated equilibrium conditions because it changes the
cutoff of each trader. To avoid complications, we focus on a setting
with the \textit{ex-ante} contract.} Note that the behavior of this particular agent does not affect the
expected market result because she is non-atomic. Also, there is no
bargaining, the offer is one shot, and take-it-or-leave-it.

If the buyer declines the contract, her expected welfare stays at
the reservation level in (\ref{vb1}), which we denote as 
\begin{equation}
v_{0}=\int_{\frac{P_{C}}{\phi}}(\phi\alpha-P_{C})dF,\label{eq:v0}
\end{equation}
while access to the blockchain platform provides a welfare (after
the fee) of $v_{B}-f_{B}.$ Thus, the amount of the fee that makes
her indifferent is\footnote{We assume the tie-breaking rule so that an agent accepts the contract
with the blockchain manager if she is indifferent.}
\begin{align}
f_{B} & =\Delta v_{B}\equiv v_{B}-v_{0}=\int_{\alpha^{*}}(\Delta\tilde{\pi}\alpha-\Delta P)dF.\label{gain}
\end{align}
In other words, the blockchain manager can charge a fee up to the
amount of the welfare gain given by (\ref{gain}). We can see this
amount as the ``price'' of the blockchain technology or platform,
since traders are willing to ``buy'' the right to participate in
the $B$-market at the price of $f_{B}$. Moreover, we have the following
intuitive result.
\begin{prop}
The fundamental price of the blockchain is perfectly correlated with
the trading value in the $B$-market and the price of cryptocurrency
$Q$ if the CIA constraint is assumed: \label{prop:The-fundamental-price}
\begin{align}
f_{B} & =\frac{\pi(1-\phi)}{2}K_{B}P_{B}=\frac{\pi(1-\phi)}{2}Q.\label{eq:fee-cryptoprice}
\end{align}
\end{prop}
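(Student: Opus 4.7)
The plan is to evaluate the integral $f_B = \int_{\alpha^*}^{1}(\Delta\tilde{\pi}\alpha - \Delta P)\,dF(\alpha)$ in closed form using the uniform distribution, and then collapse it using three equilibrium identities that are already derived earlier in the paper: the buy-side cutoff $\alpha^* = \Delta P/\Delta\tilde{\pi}$, the sell-side consequence $\pi_C = 0$ from Lemma 2, and the cryptocurrency market-clearing condition $Q = P_B K_B$.

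First I would substitute $dF = d\alpha$ and compute
\[
f_B = \Delta\tilde{\pi}\cdot\frac{1-(\alpha^*)^2}{2} - \Delta P\,(1-\alpha^*) = (1-\alpha^*)\left[\frac{\Delta\tilde{\pi}(1+\alpha^*)}{2} - \Delta P\right].
\]
Plugging in $\Delta P = \alpha^*\Delta\tilde{\pi}$ and noting $K_B = 1-\alpha^*$ (from the demand expression in \eqref{KBD} with $F$ uniform), the bracket collapses to $\Delta\tilde{\pi}(1-\alpha^*)/2$, so that
\[
f_B = \frac{\Delta\tilde{\pi}}{2}\,K_B^{2}.
\]

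Next I would translate $\Delta\tilde{\pi}\,K_B$ into $\pi(1-\phi)P_B$. Since Lemma 2 gives $\pi_C = 0$, we have $\tilde{\pi}_C = \phi$ and $\tilde{\pi}_B = \phi + \pi_B(1-\phi)$, so $\Delta\tilde{\pi} = \pi_B(1-\phi)$. The definition $\pi_B = \pi P_B/K_B$ (the $H$-type portion of $B$-market supply over total $B$-market volume) then yields
\[
\Delta\tilde{\pi}\cdot K_B = \pi_B(1-\phi)\,K_B = \pi(1-\phi)\,P_B.
\]
Substituting back gives $f_B = \tfrac{\pi(1-\phi)}{2}\,P_B K_B$, and the cryptocurrency market-clearing condition $Q = P_B K_B$ from the Definition delivers the second equality.

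There is no real obstacle here; the argument is essentially a two-line algebraic reduction once the uniform-$F$ integral is computed. The only point worth flagging is that the identities $\pi_C = 0$ and $\pi_B = \pi P_B/K_B$ rest on the coexistence regime (both markets open, so $\alpha^* \in (0,1)$), which is precisely the setting to which Proposition \ref{prop_lam1_theta} and the surrounding analysis are restricted; outside that regime $f_B$ degenerates since the integration domain is empty. Apart from this caveat, the derivation is direct and requires no further machinery.
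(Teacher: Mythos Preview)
Your proposal is correct and follows essentially the same approach as the paper: the appendix it cites computes the uniform integral to obtain $\tfrac{\Delta\tilde\pi}{2}(1-\alpha^*)^2$, then invokes $\Delta\tilde\pi=(1-\phi)\pi P_B/K_B$ (equivalently your combination of $\pi_C=0$ and $\pi_B=\pi P_B/K_B$) together with $K_B=1-\alpha^*$ to arrive at $\tfrac{\pi(1-\phi)}{2}P_BK_B$, and finally uses $Q=P_BK_B$. Your caveat about the coexistence regime is apt and matches the paper's implicit restriction.
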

\begin{proof}
See equation (\ref{vb_Q}) in Appendix \ref{app_welfare_b}.
\end{proof}
\begin{cor}
\label{cor_f}\label{cor:The-sophistication-of}The sophistication
of the blockchain technology has the same impact on $f_{B}$ as proposed
by Proposition \ref{prop_lam1_Q}.
\end{cor}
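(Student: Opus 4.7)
My plan is to exploit the identity established in Proposition \ref{prop:The-fundamental-price} and observe that the corollary reduces to a purely algebraic consequence of it, combined with Proposition \ref{prop_lam1_Q}. Specifically, I would point out that the expression
\[
f_{B} \;=\; \frac{\pi(1-\phi)}{2}\,K_{B}P_{B} \;=\; \frac{\pi(1-\phi)}{2}\,Q
\]
writes $f_{B}$ as a strictly positive scalar multiple of both the trading value $P_{B}K_{B}$ and the cryptocurrency price $Q$, with the multiplicative factor $\frac{\pi(1-\phi)}{2}$ depending only on the exogenous primitives $\pi$ and $\phi$ and \emph{not} on the security level $\theta$.

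Given this, differentiating with respect to $\theta$ yields
\[
\frac{df_{B}}{d\theta} \;=\; \frac{\pi(1-\phi)}{2}\,\frac{dQ}{d\theta} \;=\; \frac{\pi(1-\phi)}{2}\,\frac{d(P_{B}K_{B})}{d\theta},
\]
so that $\operatorname{sign}(df_{B}/d\theta) = \operatorname{sign}(dQ/d\theta)$ pointwise in $\theta$ (since $\pi\in(0,1)$ and $\phi\in(0,1)$ guarantee a strictly positive prefactor). Consequently, the three-case classification from Proposition \ref{prop_lam1_Q} transfers verbatim: $f_{B}$ is monotonically increasing in $\theta$ when $\phi<\phi_{0}$; $U$-shaped with the same interior turning point $\theta^{*}$ when $\phi_{0}\le\phi<\phi_{1}$; and monotonically decreasing when $\phi_{1}\le\phi\le 1$.

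Because the heavy lifting was already done upstream (both in deriving the closed-form proportionality in Proposition \ref{prop:The-fundamental-price} and in the comparative-statics argument for $Q$ and $P_{B}K_{B}$ in Proposition \ref{prop_lam1_Q}), there is essentially no obstacle here; the only thing to verify is that the $\theta$-independence of the prefactor $\pi(1-\phi)/2$ is genuine, which is immediate from the formula. I would therefore write the proof as a one-line appeal to these two earlier results, noting explicitly that the positivity of the prefactor preserves the sign of the derivative and hence the monotonicity/non-monotonicity pattern in $\theta$.
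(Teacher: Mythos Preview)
Your proposal is correct and mirrors the paper's own reasoning: the corollary is stated immediately after Proposition~\ref{prop:The-fundamental-price} with no separate proof, and the paper's surrounding discussion makes clear it follows directly from the proportionality $f_{B}=\frac{\pi(1-\phi)}{2}Q$ together with Proposition~\ref{prop_lam1_Q}, exactly as you argue.
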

This proposition suggests that, if the blockchain uses cryptocurrency,
the fundamental value of the technology is perfectly reflected by
the price of cryptocurrency. In other words, the price of both cryptocurrency
and the blockchain entirely depends on how active the transactions
in the $B$-market are, which is measured by the trading value.\footnote{Moreover, the price of the blockchain technology is multiplied by
the coefficient $\pi(1-\phi)/2$. This value is the multiplier of
$\Delta\tilde{\pi}$: when the quality difference is large, the gain
from trading in the $B$-market rather than the $C$-market is high.
When asymmetric information is not severe ($\phi$ is high) or the
economy-wide share of low-quality assets is large ($\pi$ is small),
the price of cryptocurrency magnifies the fundamental value of the
blockchain technology or the welfare gain for buyers (and vice versa). } 

\subsection{Intuitions and Mechanism\label{subsec:Intuitions-and-Mechanism}}

The intuition behind the non-monotonic reaction of $Q=P_{B}K_{B}\propto$$f_{B}$
put forward by Proposition \ref{prop: Qnonlin} and Corollary \ref{cor:The-sophistication-of}
is given by the behavior of $P_{B}$ and $K_{B}$, together with the
migration of buyers.

First, when $\theta$ increases, we find that a more secure blockchain
technology tends to widen both price and quality spreads, $\Delta P$
and $\Delta\pi$. The former reduces the demand in the $B$-market,
while the latter increases it, \emph{i.e.}, the $B$-market guarantees
a higher quality but becomes exclusive. Second, the formula $Q=P_{B}K_{B}\propto f_{B}$
implies that $Q$ and $f_{B}$ rise when $P_{B}$ increases more than
$K_{B}$ declines. Rewriting the derivative of $Q$ by using elasticity
makes this clearer. Since $K_{B}$ can be expressed as a function
of $P_{B}$ (without $\theta$), and $P_{B}$ is monotonic regarding
$\theta$, we have
\[
\frac{df_{B}}{d\theta}\propto\frac{dQ}{d\theta}=(1-\varepsilon_{PK})K_{B}\frac{dP_{B}}{d\theta}
\]
with 
\[
\varepsilon_{PK}\equiv-\frac{dK_{B}/dP_{B}}{K_{B}/P_{B}}.
\]
$\varepsilon_{PK}$ is the price elasticity of the $B$-market transaction
volume. Thus, if the price elasticity of demand is high, a decline
in $K_{B}$ dominates the increase in $P_{B}$, leading to a smaller
$Q$ and $f_{B}$. To understand the determinants of $\varepsilon_{PK}$,
recall that the buyers' venue choice is driven by how easily they
can migrate to the $C$-market to avoid a higher $P_{B}$.

When $\phi$ is sufficiently large, asymmetric information is not
severe because the difference between the two asset types is small.
Then, buyers are not eager to have $H$-type assets and are not attracted
to a high $\pi_{B}$ in the $B$-market. Thus, a marginal increase
in $P_{B}$ leads to a larger decline in $K_{B}$, and the transaction
activity in the $B$-market, measured by the transaction value, $K_{B}P_{B}$,
diminishes. Hence, the price of the blockchain platform and cryptocurrency
drops. If $\phi$ is small, it becomes difficult for consumers to
migrate away to the $C$-market, leading to an increase in $P_{B}K_{B}$,
$Q$, and $f_{B}$.

If $\phi$ is intermediate, the level of $\theta$ matters because
it determines the difference between the two markets, $\Delta\pi$.
If $\theta$ is small, so is $\Delta\pi$: the difference in buying
in the $B$-market and $C$-markets is not significant in terms of
the probability of purchasing low-quality assets. This facilitates
migration to the $C$-market, since this market provides a lower price,
while the difference in quality is negligible. This leads to a decline
in $K_{B}$ more than an increase in $P_{B}$, lowering $Q$ and $f_{B}$.
If $\theta$ is large, the $B$-market provides a significantly higher
average quality, \emph{i.e.}, the quality spread is large, and $Q$
and $f_{B}$ increase with $\theta$. 

The bottom line is that, depending on the underlying information asymmetry,
the change in the market structure has a different impact on the market
activity. Specifically, even if the blockchain technology could reduce
asymmetric information, it does not always make this market attractive
for consumers and may even dampen its trading value. 

\subsection{Optimal $\theta$ and Welfare Distortion}

Now, we seek to determine the optimal level of $\theta$ from the
perspective of traders' welfare and the blockchain manager. Suppose
that the manager tries to maximize her fee revenue from the buy side
of the market, $f_{B}$.\footnote{The previous subsection assumes that only one buyer is offered the
contract. Even if the entire set of buyers is offered it, maximizing
$f$ is still optimal since the measure of buyers is one and they
are \textit{ex-ante} identical. } The analogous discussion on fee maximization when it is imposed on
sellers is provided in Appendix \ref{sec:Figures-and-Tables}. In
this subsection, we compare the maximization of the fee by the manager
to the maximization of buyers' aggregate welfare, which may be performed
by a social planner, \emph{e.g.}, FinTech regulation (or promotion)
by the government. Note that the choice of the objective function
is highly arbitrary. However, the evidence from the wine blockchain
by the EY Advisory suggests that the platform imposes a fee on the
buy side of the market.

First, $\theta$ has the following impact on the aggregate consumers'
welfare $v_{B}$:
\begin{prop}
\label{prop_welfare_b}(i) $\frac{dv_{0}}{d\theta}>0$. (ii-1) When
$\pi>1/2,$ $\frac{dv_{B}}{d\theta}>0$.\\
(ii-2) When $\pi\le1/2,$ there is a unique $\phi_{2}$. If $\phi<\phi_{2}$,
then $\frac{dv_{B}}{d\theta}>0$. Otherwise, there is a unique $\theta^{**}\in(0,1]$
such that 
\[
\frac{dv_{B}}{d\theta}\gtrless0\Leftrightarrow\theta\gtrless\theta^{**}.
\]
\end{prop}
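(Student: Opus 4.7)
The plan is to decompose the aggregate buyer welfare using Proposition \ref{prop:The-fundamental-price} as $v_B = v_0 + \Delta v_B$ with $\Delta v_B = \frac{\pi(1-\phi)}{2}P_BK_B$, and then to analyze each term separately. Under the uniform $F$ and in the coexistence region (where $P_C<\phi$), the reservation welfare computes to
\[
v_0 = \int_{P_C/\phi}^{1}(\phi\alpha - P_C)\,d\alpha = \frac{(\phi-P_C)^{2}}{2\phi},
\]
so
\[
\frac{dv_0}{d\theta} = -\frac{\phi-P_C}{\phi}\cdot\frac{dP_C}{d\theta}.
\]
Since Proposition \ref{prop_lam1_theta} gives $dP_C/d\theta<0$ and $\phi-P_C>0$, part (i) follows immediately.

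For part (ii), I would write
\[
\frac{dv_B}{d\theta} = \frac{dv_0}{d\theta} + \frac{\pi(1-\phi)}{2}\cdot\frac{d(P_BK_B)}{d\theta},
\]
and classify the sign of the second summand using Proposition \ref{prop_lam1_Q}. When $\phi<\phi_0$ both summands are positive and $dv_B/d\theta>0$ trivially. In the remaining regimes, the goal is to show that the positive contribution from $v_0'$ either dominates the (possibly negative) contribution from $(P_BK_B)'$ uniformly, or is dominated only for $\theta<\theta^{**}$ with a unique crossing. To get there I would first solve the equilibrium system (\ref{K_lam1}) explicitly: eliminating $\alpha_I$ using (\ref{eq:ThresSeller}), imposing $K_C^S=K_C^D$ and $K_B^S=K_B^D$, and using $\pi_BK_B=\pi P_B$, I would reduce the system to polynomial relations in $(P_B,P_C,K_B)$ that, under uniform $F$, admit closed-form expressions parameterized by $\theta,\phi,\pi$.

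Plugging these closed forms into $dv_B/d\theta$ reduces the sign question to a polynomial inequality in $\theta$ whose coefficients depend on $\phi$ and $\pi$. For part (ii-1), the claim $\pi>1/2 \Rightarrow dv_B/d\theta>0$ should come from showing that this polynomial has strictly positive sign on $[\theta_0,1]$ when $\pi>1/2$, essentially because $\phi_1=(2-\pi)/(3-\pi)$ shrinks with $\pi$ while $|dP_C/d\theta|$ grows with $\pi$ (sellers of lemons are scarcer, so the migration triggered by higher $\theta$ produces a larger price drop in the $C$-market, magnifying $v_0'$). For part (ii-2), with $\pi\le 1/2$, I would identify $\phi_2$ as the unique root of the equation obtained by setting the critical (minimum) value of $dv_B/d\theta$ over $\theta$ equal to zero, and establish uniqueness of $\theta^{**}$ by a single-crossing argument: I expect $dv_B/d\theta$ to be convex (or at least quasi-convex) in $\theta$ on the coexistence region, so that once it dips negative it returns to positive exactly once, giving the sign pattern $\lessgtr 0 \Leftrightarrow \theta\lessgtr\theta^{**}$.

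The main obstacle will be the explicit equilibrium algebra and the single-crossing/monotonicity step. Solving (\ref{K_lam1}) in closed form is routine but tedious — the system is quadratic after elimination — and the resulting expression for $dv_B/d\theta$ is a rational function whose numerator must be shown to factor cleanly enough to read off the sign pattern across the three regimes of Proposition \ref{prop_lam1_Q}. Establishing the single-crossing of $dv_B/d\theta$ in $\theta$ (needed for uniqueness of $\theta^{**}$) is the most delicate step, since both $v_0'$ and $(P_BK_B)'$ are themselves non-monotone in $\theta$; I would attempt this either by an explicit second-derivative sign computation or by exploiting the fact that $v_0$ and $P_BK_B$ have opposite qualitative behavior and their sum inherits a single-peaked/single-troughed structure for fixed $\phi,\pi$.
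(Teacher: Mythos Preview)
Your argument for (i) is correct and matches the paper: $v_0=(\phi-P_C)^2/(2\phi)$ together with $dP_C/d\theta<0$ from Proposition~\ref{prop_lam1_theta} gives $dv_0/d\theta>0$ directly.

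For (ii), your decomposition $v_B=v_0+\tfrac{\pi(1-\phi)}{2}P_BK_B$ is right, but the plan to substitute closed-form equilibrium expressions in $\theta$ and analyze a ``polynomial inequality in $\theta$'' is where you diverge from the paper, and it is also where your proposal becomes fragile. The equilibrium $P_B$ is only \emph{implicitly} defined (it solves $H(P_B^{-1},g)=0$, a quadratic after clearing denominators), so you would not obtain a polynomial in $\theta$ but an algebraic expression involving a square root; the single-crossing step you flag as ``most delicate'' would then be genuinely hard. Your heuristic for (ii-1) (``$|dP_C/d\theta|$ grows with $\pi$'') is not what drives the result either.

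The paper's key maneuver, which you are missing, is to use $P_B$ rather than $\theta$ as the running variable. From (\ref{pcpb}) and (\ref{KB2}), both $P_C$ and $K_B$ can be written as functions of $P_B$ alone with \emph{no residual $\theta$-dependence}; hence $v_B=v_B(P_B)$, and since $dP_B/d\theta>0$, the sign of $dv_B/d\theta$ equals the sign of $D_B\equiv d(2v_B)/dP_B$. A short direct computation then shows $dD_B/dP_B>0$, which delivers the single-crossing property for free: $D_B$ changes sign at most once as $\theta$ runs over $(\theta_0,1]$. Checking $D_B>0$ at $\theta=1$ is immediate, so the whole proposition reduces to the sign of $D_{B,0}\equiv\lim_{\theta\searrow\theta_0}D_B$, which the paper computes (using the boundary relations at $\theta_0$) to be proportional to $1+(1-\pi)(\theta_0-2\phi)$. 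Its minimum over $\phi$ is $2\pi-1$, giving (ii-1) when $\pi>1/2$; and $\phi_2$ is the unique root of $1+(1-\pi)(\theta_0(\phi)-2\phi)=0$, giving (ii-2). This change of variable is the idea that makes your anticipated obstacles disappear.
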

\begin{proof}
See Appendix \ref{app_welfare_b}.
\end{proof}
Together with $Q$ and $f_{B}$, buyers' welfare also has a $U$-shaped
trajectory for a certain set of parameters. The reservation welfare
is monotonically increasing in $\theta$ because a higher $\theta$
lowers $P_{C}$ more than it decreases $\pi_{C}$ due to the same
mechanism as in Proposition \ref{prop:(i)-The-segmented}-(iii). 

The remaining part of $v_{B}$, which perfectly correlates with $P_{B}K_{B}$,
generates non-monotonicity in $v_{B}$ by the same mechanism as mentioned
in Subsection \ref{subsec:Intuitions-and-Mechanism}. 

Moreover, the result depends on $\pi$. When $\pi$ is relatively
high, the marginal increase in the fraction of assets rejected by
the blockchain, $(1-\pi)\theta$, is small. That is, innovation does
not cause a large quality improvement or a huge reduction of $K_{B}^{S}$
since the economy does not have a significant amount of low-quality
assets to begin with. The increment in $P_{B}$ caused by the higher
$\theta$ is not significant enough to confound the demand in the
$B$-market, and the welfare gain represented by the first term in
(\ref{vb1}) stays high.

\subsubsection{Optimal $\theta$ for the Platform Manager}

By looking at (\ref{vb1}) and (\ref{gain}), we notice that the objective
functions of the blockchain manager and the social planner are different,
as the manager does not care about the reservation welfare, $v_{0}$.
From (\ref{eq:v0}), we also know that a higher $\theta$ monotonically
increases $v_{0}$ by lowering the price in the $C$-market. Thus,
the manager\textit{ undervalues} the marginal benefit of increasing
$\theta$ compared to buyers' total welfare. 

Formally, let $\theta_{M}^{*}=\arg\max_{\theta\in[\theta_{0},1]}f_{B}(\theta)$
and $\theta_{V}^{*}=\arg\max_{\theta\in[\theta_{0},1]}v_{B}(\theta)$,
which represent the levels of $\theta$ that maximize the fee and
buyers welfare, respectively. Even though it is difficult to analytically
determine $v_{B}(\theta=1)\gtrless v_{B}(\theta=\theta_{0})$, it
is obvious that $\theta_{M}^{*}\neq\theta_{V}^{*}$ when $f_{B}$
is monotonically decreasing and $v_{B}$ is monotonically increasing. 
\begin{prop}
\label{Prop_opt_B}If \{$\pi>\frac{1}{2}$ and $\phi\in[\phi_{1},1]$\}
or \{$\pi\le\frac{1}{2}$ and $\phi\in[\phi_{1},\phi_{2}]$\}, then
$\theta_{0}=\theta_{M}^{*}<\theta_{V}^{*}=1$. If \{$\phi<\phi_{0}$
and $\pi>\frac{1}{2}$\} or \{$\pi\le\frac{1}{2}$ and $\phi<\phi_{2}$\},
then $\theta_{M}^{*}=\theta_{V}^{*}=1.$
\end{prop}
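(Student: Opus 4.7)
The plan is to reduce the proposition entirely to the already-established monotonicity results: Corollary \ref{cor_f} (which ties $f_B$ to $P_BK_B$, so Proposition \ref{prop: Qnonlin} dictates the shape of $f_B(\theta)$) and Proposition \ref{prop_welfare_b} (which gives the shape of $v_B(\theta)$). Since both $\theta_M^*$ and $\theta_V^*$ are maximizers on the compact interval $[\theta_0,1]$, a monotone objective pins the argmax down to an endpoint, and that is all we need.

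First, I would dispose of $\theta_M^*$ using Proposition \ref{prop: Qnonlin}: part (iii) gives $\theta_M^*=\theta_0$ whenever $\phi\in[\phi_1,1]$ (since $f_B$ is strictly decreasing), while part (i) gives $\theta_M^*=1$ whenever $\phi<\phi_0$ (since $f_B$ is strictly increasing). Next, I would dispose of $\theta_V^*$ using Proposition \ref{prop_welfare_b}: part (ii-1) yields $\theta_V^*=1$ whenever $\pi>1/2$, and part (ii-2) yields $\theta_V^*=1$ whenever $\pi\le 1/2$ and $\phi<\phi_2$ (these being precisely the regions where $v_B$ is strictly increasing in $\theta$).

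I would then combine these two lists case by case. For the first claim $\theta_0=\theta_M^*<\theta_V^*=1$, the condition $\phi\in[\phi_1,1]$ pins $\theta_M^*=\theta_0$, and one then tacks on either $\pi>1/2$ (any $\phi$ works for $v_B$) or $\pi\le 1/2$ together with $\phi\le\phi_2$ (so that $\phi\in[\phi_1,\phi_2]$); in both subcases $\theta_V^*=1$, proving strict suboptimality of the manager's choice. For the second claim $\theta_M^*=\theta_V^*=1$, the condition $\phi<\phi_0$ delivers $\theta_M^*=1$ from Proposition \ref{prop: Qnonlin}(i), and $\pi>1/2$ (or $\pi\le 1/2$ with $\phi<\phi_2$) delivers $\theta_V^*=1$ from Proposition \ref{prop_welfare_b}.

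The main obstacle lies in the second subcase of the second claim, where the hypothesis only says $\pi\le 1/2$ and $\phi<\phi_2$ but we need $\theta_M^*=1$. To conclude $\theta_M^*=1$ directly from Proposition \ref{prop: Qnonlin} we need $\phi<\phi_0$, so the only clean route is to verify the ordering $\phi_2\le\phi_0$ in the regime $\pi\le 1/2$; that comparison of thresholds, which is implicit in the statement, is where the nontrivial work will be. I would carry it out by writing both $\phi_0$ and $\phi_2$ as roots of the polynomial conditions that produced them (equation (\ref{app_phi1}) in the appendix for $\phi_0$, and the analogous first-order condition $dv_B/d\theta=0$ at $\theta=1$ or $\theta=\theta_0$ for $\phi_2$) and showing that at $\pi\le 1/2$ the $v_B$-threshold is the tighter one. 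If instead $\phi_2$ happens to lie in $[\phi_0,\phi_1)$ for some $\pi\le 1/2$, I would fall back on the $U$-shape in Proposition \ref{prop: Qnonlin}(ii) and compare the endpoint values $f_B(\theta_0)$ and $f_B(1)$ directly using the closed form $f_B=\tfrac{\pi(1-\phi)}{2}P_BK_B$, reducing the question to a single-variable inequality in $\phi$ at fixed $\pi$.
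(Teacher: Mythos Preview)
Your approach is exactly the paper's: the paper gives no self-contained proof of this proposition but treats it as an immediate consequence of combining the monotonicity results in Proposition~\ref{prop: Qnonlin} (via Corollary~\ref{cor_f}) for $f_B$ and Proposition~\ref{prop_welfare_b} for $v_B$, precisely along the lines you lay out. The one-sentence justification the paper offers just before the statement (``it is obvious that $\theta_M^*\neq\theta_V^*$ when $f_B$ is monotonically decreasing and $v_B$ is monotonically increasing'') is your first-claim argument verbatim.

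You have, in fact, been more careful than the paper: the threshold-ordering issue you flag in the second subcase of the second claim (that $\{\pi\le 1/2,\ \phi<\phi_2\}$ must force $\theta_M^*=1$, which from Proposition~\ref{prop: Qnonlin} alone would need $\phi<\phi_0$) is a genuine gap that the paper never addresses. The paper does not prove $\phi_2\le\phi_0$, and its own Appendix~\ref{app_proof_lam1} (case (ii-b)) indicates that for $\pi^*\le\pi<1/2$ the trading value $P_BK_B$ is $U$-shaped for all $\phi$, so a clean $\phi_0$ threshold below which $f_B$ is monotone increasing need not even exist in that regime. Your fallback plan of comparing the endpoint values $f_B(\theta_0)$ and $f_B(1)$ directly via the closed form is the right way to close this; the paper does not carry it out.
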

Proposition \ref{Prop_opt_B} tells us that, depending on the parameters,
welfare loss arises from the conflicting objectives of the manager
and the government. The numerical results are shown in Figure \ref{Fig_opt}
when $f_{B}$ or $v_{B}$ has a $U$-shaped curve. 

\begin{figure}[H]
\begin{center}\caption{Fee Revenue and Buyers' Welfare}\label{Fig_opt}

\includegraphics[scale=0.25]{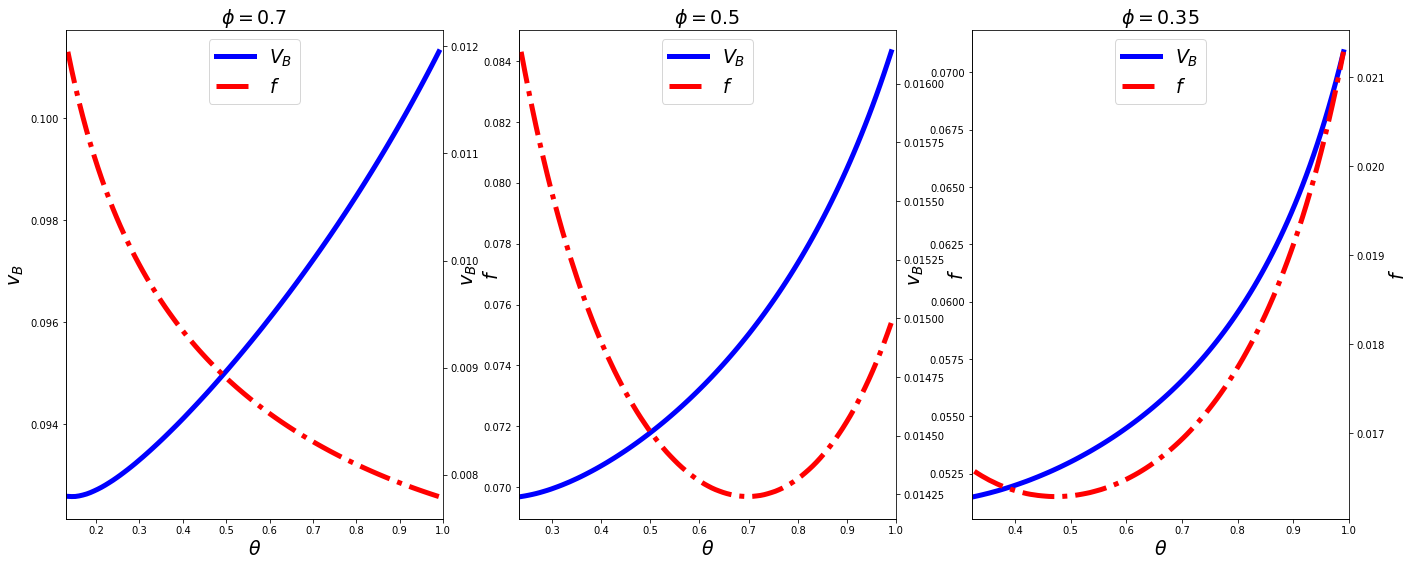}

\end{center}
\end{figure}
This highlights an interesting implication. If the underlying asymmetric
information is mild ($\phi$ is high), as in the left and middle panels,
the marginal increase in $\theta$ tends to dampen the activity in
the $B$-market. This results in a lower welfare gain in the $B$-market
and reduces the fee revenue for the manager. Thus, the manager prefers
to keep $\theta$ low ($\theta=\theta_{0}$). The social planner,
however, knows that a higher $\theta$ boosts the reservation welfare,
and this increment can compensate for losses in the $B$-market when
$\phi$ is relatively high. The level of $\theta$ that maximizes
buyers' welfare is therefore $\theta=1$. Thus, the blockchain platform
operated by the manager is under-secured\textit{ }in the sense that
the reduction of asymmetric information is not enough to achieve the
maximum $v_{B}$.

On the other hand, when asymmetric information is relatively severe,
as in the right panel, a higher $\theta$ facilitates activity in
the $B$-market because $P_{B}$ increases more than $K_{B}$ declines.
In this case, the fee revenue positively responds to a higher $\theta$,
and so does $v_{B}$. Therefore, the blockchain market operated by
the manager can maximize buyers' welfare.

The literature on strategic management, such as \citet{teece1986profiting}
and \citet{brandenburger1996value}, suggests that a firm does not
fully adopt innovation, although it creates value for consumers. This
is because a firm cannot extract full welfare gain of consumers generated
by innovation. We show that this issue arises in the blockchain technology
as well, since the manager cannot extract the value in the traditional
$C$-market created by the blockchain technology.

\subsubsection{Government Intervention}

The preceding discussion indicates that the blockchain manager values
an increment in $\theta$ as highly as the social planner only if
the price elasticity of the demand is small, \emph{i.e.}, a higher
$\theta$ boosts the trading value in the $B$-market. This coincidence
tends to occur when the underlying information problem is severe because
it imposes a higher cost on the migration of buyers. If the market
is closer to complete in terms of $\phi$ or $\Delta\pi$, the manager
prefers a lower $\theta$ than the socially optimal level since she
dislikes a decline in the transaction value in the $B$-market that
is caused by the small cost of changing trading platforms.

This implies that the government should intervene in the intermediation
chain to facilitate transactions through the blockchain and to increase
$\theta$ when the traded goods suffer from non-severe asymmetric
information. In contrast, it should remain neutral when the information
problem is severe since the manager voluntarily maximizes consumers'
welfare. This runs counter to the traditional views on government
intervention in markets with adverse selection (\emph{e.g.}, OTC markets
after the recent financial crisis), which believe the government should
meddle when adverse selection is more severe to avoid market breakdowns. 

Our conclusion is driven by the fact that asymmetric information arises
among agents, while the platform manager, who has a tool to mitigate
the problem, is interested only in the fee revenue from a certain
part of the market. The government does not have a tool to detect
the lemons and must rely on the technological innovation in our model
that may different from the situation in which it intervened in OTC
markets.

\section{Empirical Implications\label{sec:Empirical-Implications}}

We can derive several empirical inferences regarding the fundamental
value of cryptocurrency and the blockchain and their comparative statics.
If we take the model with a CIA constraint, we get the following arguments.
(i) The asset price in the blockchain platform is higher than that
in the cash market (Proposition \ref{prop:The-price-of}). (ii) As
the blockchain system becomes secure, the asset price in the blockchain
platform (resp. cash market) increases (resp. decreases) as in Proposition
\ref{prop:(i)-The-segmented}. (iii) For the same situation, the trading
value in the blockchain platform and the cryptocurrency price increase
if the asymmetric information is severe, while it declines otherwise
(Proposition \ref{prop:(i)-If-,}). 

If we have a dataset that contains the transaction price in a market
with the blockchain technology, smart contracts, and cryptocurrency,
we can test implication (i) by comparing this price to that in the
traditional market. In addition, if we have data from the scratch
of the transaction system, we can keep track of the price in the blockchain
market and the corresponding price in the traditional market to verify
implication (ii). 

Implication (iii) is striking: improvements in the blockchain security
system do not necessarily increase the trading value and demand for
cryptocurrency. On the one hand, this implies that enhancement in
blockchain security does not have a robust testable implication. On
the other hand, with a dataset and a sufficient exogenous change in
$\theta$, our model provides a new measure for the degree of asymmetric
information and adverse selection by analyzing how $\theta$ affects
the transaction value in the $B$-market.

Additionally, considering the welfare results in Subsection \ref{sec:Welfare-Analyses:-Fundamental},
the value of the blockchain system is proportional to the fundamental
price of cryptocurrency (Proposition \ref{prop:The-fundamental-price}).
This has several applications. First, if we have data that measure
the value of the blockchain defined in Subsection \ref{subsec:Fundamental-Value-of}
(\emph{e.g.} \textit{ex-ante} entrance fee in the $B$-market) and
the cryptocurrency price therein, we can directly test the implications
of (\ref{eq:fee-cryptoprice}). Second, even if transactions are not
done using cryptocurrency, Proposition \ref{prop:The-fundamental-price}
tells us how we can predict the welfare-relevant performance of the
blockchain platform. 

Since, as of the date of this study, the application of the blockchain
in state-contingent transactions is still in its initial stages, we
will empirically evaluate these implications in future projects. As
the first step, a qualitative finding consistent with our theoretical
model arises from the introduction of the blockchain in the wine supply
chain. A consulting firm, EY Advisory \& Consulting Co. Ltd. (EY),
pioneered the blockchain-based administration of the quality of each
step of the production of wine, such as grape harvesting, fermentation
and bottling, wholesaling, and retail. From talks with corresponding
consultants, we confirm that the aim of introducing the technology
is to enhance the satisfaction of both customers and suppliers, by
guaranteeing the products quality.

Information on the wine blockchain was kindly shared by EY Japan.
It reveals the financial results of two clients in 2018. One client's
retail price per bottle increased from 7.00 to 9.20 Euros, whereas
other's increased from 7.00 to 7.46 Euros. Under the assumption that
the underlying trend of wine prices is constant, this finding is consistent
with empirical prediction (i). The report also contains information
on the investment and ROI, which ware 53,000 Euros and 7.92\%, and
113,000 Euros and 13.94\%, respectively. These numbers do not include
the value of improving business efficacy due to the blockchain, such
as digitalization and more efficient management. In sum, investment
in the blockchain generates positive return for the client firms.

\section{Conclusion}

We develop a simple model to analyze some economic implications of
the blockchain technology as a new transaction platform. Following
the notion of the smart contract, we consider the blockchain protocol
as a way to mitigate asymmetric information and investigate the effect
of technological sophistication (innovation) on the economy.

Firstly, the blockchain as a platform causes the segmentation of the
trading venues and the differentiation of both sides of the markets
(buyers and sellers). We consider asymmetric information among the
agents and show that the segmentation and differentiation endogenously
generate spreads in the price and quality of the assets traded in
segmented markets. 

We find that the sophistication and innovation of the blockchain have
non-monotonic effects on the trading value in the blockchain, the
fundamental value of the platform, the price of cryptocurrency, and
consumers' welfare. That is, innovation does not necessarily increase
the value of the blockchain and consumers' welfare. This is because
a more sophisticated blockchain attracts high-quality assets and boosts
their price. Since the price increases more than the quality, the
blockchain platform becomes ``an exclusive market.'' When the underlying
asymmetric information is not severe, innovation makes a large number
of consumers migrate away from the blockchain platform to the traditional
one, because they are willing to accept lower-quality assets to save
a price cost.

The non-monotonicity leads to a welfare loss when a platform manager,
who competes with the traditional market, controls the level of innovation.
Since a very sophisticated blockchain platform is not attractive for
most consumers, the platform cannot charge a high access fee. Thus,
the manager has an incentive to keep the innovation level lower than
the first best.

A few issues, such as empirical implications, cannot be investigated
well without the availability of further data. Nonetheless, this model
proposes the first theoretical framework to study the measurable outcomes
of new digital innovations. In addition, one possible future project
is the extension of this framework into a dynamic setup. Specifically,
this model can be modified to analyze a structure of overlapping generations
and time-varying stochastic dividends of the assets, as per the previous
literature. Together with the blockchain mechanism, the supply function
of cryptocurrency is another salient difference of the blockchain
from traditional cash, as \citet{schilling2018some} point out, and
incorporating both of these factors provides a more comprehensive
pricing theory for cryptocurrency. 

Even though the blockchain technology and cryptocurrency are still
in their nascent and pivot around speculation, their influence is
growing and their potential applications are being vigorously sought.
Therefore, we believe that the analyses of their fundamental effects
in our theoretical model will have important implications not only
for financial markets, but also for the entire economy.

\setstretch{1}\bibliographystyle{aer}
\bibliography{/Users/daisukeadachi/Dropbox/projects/Blokchain_Bitcoin/Papers/blockchain}
\setstretch{1}

\appendix

\section{Appendix: Motivating Examples for $\theta$\label{subs:motive}\label{sec:Appendix:-Motivating-Examples}}

\subsubsection*{Smart Contract: Reduction in Asymmetric Information}

Consider an agent who wants to purchase a good (say, a box of wines).
A value of the wine for the consumer depends on multiple dimensions
of state, $S=(s_{1},s_{2},\cdots,s_{N})$. We can think of them as
a brand of grape, a producer, vintage, storage conditions, and so
on. There are $N$-steps of intermediations between a wine producer
and consumer, and each step is operated by an anonymous intermediary
whose type is either $H$ or $L$ (see Figure \ref{Fig_IB}). The
state $s_{j}$ denotes the type of the intermediary and takes two
values, $s_{j}\in\{s^{H},s^{L}\}$, with $\Pr(s_{j}=s^{H})=p$. 

We simplify the arguments by assuming that the consumer's private
value of the good is positive if, and only if, all the states are
high, $S=S^{H}\equiv(s_{1}^{H},s_{2}^{H},\cdots,s_{N}^{H})$. Otherwise,
the good is valueless. Each intermediary is rewarded equally only
if the good is sold.\footnote{Rewards do not have to be specified in this example. Any positive
rewards, such as private value and monetary payoff, contingent on
the purchase of goods by buyers generate the same results.}

To describe asymmetric information about the quality of the good,
suppose that a ``label'' of the wine tells only an announced states
$\hat{s}$ alleged by intermediaries, and the true state is not verifiable:
the consumer gets to see only $\hat{S}=(\hat{s}_{1},\cdots,\hat{s}_{N})$.
Since the consumer's private value is positive only if $S=S^{H}$,
announcing $\hat{s}_{j}=s^{H}$ is optimal for all $j$, which results
in $\hat{S}=S^{H}$.\footnote{If the good contains $s^{L}$ for some step-$j$, the consumer does
not buy the product from the intermediary-$N$ (\emph{i.e.}, a retailer).
Then, if $\hat{s}_{j}=s^{H}$, $\forall j\le N-1$, it is optimal
for the retailer to announce $s^{H}$ and sell it at her store. On
the other hand, if there is some $j\le N-1$ who announced $s^{L}$,
then the retailer does not accept the goods knowing that she cannot
sell them to the consumer. By taking this argument backward, we can
say that the all goods sold by the retailer have the same label with
$\hat{S}=S^{H}$.} This describes a typical situation in which a consumer is devoid
of a comprehensive knowledge to value a good---it is hard to identify
the quality of a wine before she purchases and drinks it.\footnote{Adopting these arguments into Bitcoin blockchain is easy; The traded
asset is bitcoin itself, state $s_{j}$ represents the balance of
bitcoin on traders' accounts at date $t=j$, and traders may have
a transaction or liquidity shock (state) in each period, which determines
the state in the current period. For instance, $s_{t}$ is either
``spent $x$ amount of coin ($s_{t}=s_{t-1}-x$)'' or ``earn additional
$y$ amount of coin ($s_{t}=s_{t-1}+y$)'' with some probability.} This is represented by the first row of intermediations in Figure
\ref{Fig_IB}. 
\begin{figure}[H]
\begin{center}\caption{Intermediation by Cash and Blockchain}
\label{Fig_IB}

\includegraphics[scale=0.5]{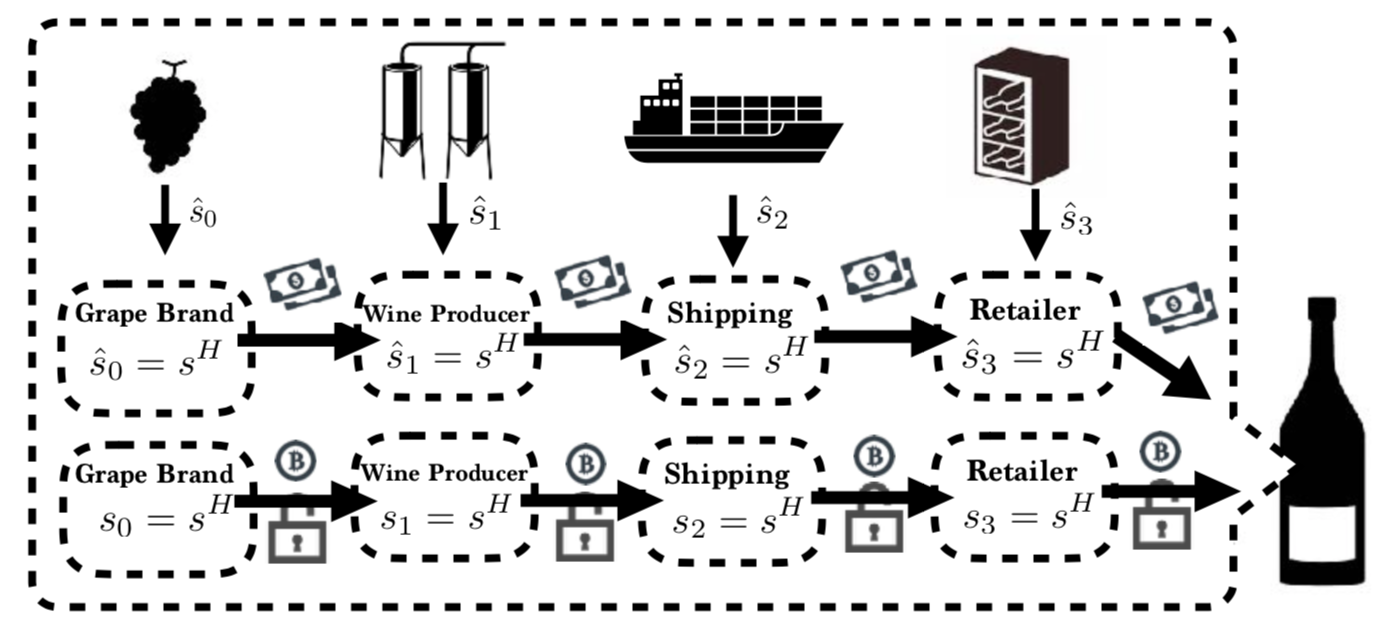}

\end{center}
\end{figure}

Now, we introduce the blockchain protocol (the second row of intermediations
in Figure \ref{Fig_IB}). If the transaction at step $j$ is consummated
through the blockchain, then the announced state $\hat{s}_{j}$ is
supposed to be credible, \emph{i.e.}, $\hat{s}_{j}=s_{j}$, and the
scripts on Ethereum make transactions take place only if all the past
states are $H$. 

We define $\hat{\theta}$ as the \textit{fraction of intermediations
that adopt blockchain transaction}, and $\theta:[0,1]\rightarrow[0,1]$
as the\textit{ fraction of low-quality goods rejected in the intermediations
chain} (as a function of $\hat{\theta}$). Define $\pi(\hat{\theta})\equiv\Pr(S=S^{H}|\hat{\theta})$.\footnote{A set of information also includes $\hat{S}$, but this does not convey
any information since all of the intermediaries have an incentive
to announce the high state regardless of their true types.} 

If there is no transaction conducted by the blockchain, we have $\pi(0)=p^{N}\equiv\pi$,
and $1-\pi$ fraction of the goods in the retail store (step $N$)
are of low-quality. We see this as a benchmark, in which only traditional
means of transaction is used. Now, suppose that $0<M\le N$ steps
of intermediations are conducted by the blockchain technology; $\hat{\theta}=\frac{M}{N}$.\footnote{Since we set $p_{j}=p$, we can assume, without loss of generality,
that first $M$ steps are executed through blockchain.} Then, the resulting probability of the high-quality goods in the
retail store is $\pi(\hat{\theta})=p^{N-M}.$ By the definition of
$\theta$, this can be expressed in terms of $\theta$:
\[
\pi(\hat{\theta})=\frac{\pi}{\pi+(1-\theta)(1-\pi)},
\]
which is the probability of the high-quality goods conditional on
the goods are sold in the retail store after $\theta$ fraction of
low-quality goods are rejected. By equating these two expressions,
we can rewrite $\theta$ as a monotonically increasing function of
$\hat{\theta}$: 
\[
\theta=\frac{1-p^{N\hat{\theta}}}{1-p^{N}}\in[0,1].
\]
Thus, in this discussion, as the number of transactions founded on
blockchain technology ($\hat{\theta}$) increases, the probability
of rejecting low-quality goods ($\theta$) monotonically increases. 

In the main model, we use $\theta$ as a metric of the blockchain
security, \emph{i.e.}, the power of the blockchain technology to reduce
the economy wide asymmetric information. In other words, we can think
of this example as a micro-foundation of $\theta$ in the main model
by making the step-$N$ retailer the ``sellers.'' In Appendix \ref{sec:Technology-Overview:-Cryptocurre-1},
we adopt this architecture into some real-world examples: Bitcoin
and Ethereum. It also provides examples of blockchain platforms not
associated with circulation of cryptocurrency.

\subsubsection*{Time-Consuming Transactions}

One of the most salient benefits of state-contingent transactions
manifests itself in international trade or remittance. It is well
known that it takes a huge cost and time to settle international trade
of goods because it involves mostly manual paper works, authorization
of banks in both countries, and jurisdiction problems. This also applies
to the international remittance in which we need to authenticate bank
accounts of both parties. 

We can describe this situation by using the example above. Suppose
that Alice in California wishes to send money to Bob in Africa, while
there is a chance that Bob's account is not authentic and he may run
away without sending back money or goods. The start of the chain ($j=0$)
is Bob who is either a good or bad agent (bank account is authentic
or not), and all other intermediaries ($j=2\sim N$, say banks) try
to verify that Bob at $j=0$ is authentic. 

For banks, verification may take a long time or even impossible ($s_{j}=L$)
with probability $1-p$. Alice is in need of immediacy, and $S\neq S^{H}$
takes a toll due to a delay cost. If the transaction is conducted
by Bitcoin, however, it can be dramatically secure because of the
above-mentioned mechanism and no longer takes a long time (it makes
$s_{j}=s_{H}$). We can interpret $\hat{\theta}$ as a fraction of
transactions that introduce the blockchain and reduce the time for
verification. Then, it reduces the possibility of delay by $\theta$,
making the trade more efficient. 

\newpage{}

\begin{center}\begin{Huge}Online Appendix\end{Huge}\end{center}\setcounter{page}{1}

\section{Online Appendix: Generalized Model with Uninformed Sellers\label{sec:Generalized-Model-with}}

Consider the same structure as in the main model. In addition, assume
that a seller is informed with probability $\lambda$ and uninformed
with probability $1-\lambda$. If one is informed, she knows a specific
characteristic of high-quality assets and can distinguish the lemons,
while uninformed agents cannot tell the difference.\footnote{In this case, assume, for simplicity, that the realization of $\alpha$
is independent of the realization of being informed or uninformed.} The optimal behavior of informed sellers is same as the main model.

\subsection{Optimal$\ $Behavior$\ $of$\ $Uninformed$\ $Sellers}

Behavior of an uninformed seller is determined by comparing the following
returns:
\begin{align}
W_{0}^{U} & =(\pi+\phi(1-\pi))\alpha,\nonumber \\
W_{C}^{U} & =P_{C},\nonumber \\
W_{B}^{U} & =(\pi+(1-\pi)(1-\theta))P_{B}+(1-\pi)\theta\phi\alpha.\label{WBU}
\end{align}
The first one is the return from consuming her own asset, the second
one is the return from selling in the $C$-market, and the last one
is the return from selling in the $B$-market. In the last case, she
obtains $P_{B}$ if the transaction is completed, while she ends up
consuming her asset if her order is rejected. The two coefficients
in (\ref{WBU}) represent the probability of successful trade and
rejection. Let 
\[
\tilde{\pi}\equiv\pi+\phi(1-\phi),\pi_{0}\equiv\pi+(1-\pi)(1-\theta)
\]
and define a parameter
\[
\xi\equiv\frac{\pi+(1-\pi)(1-\theta)}{\pi+\phi(1-\pi)(1-\theta)}\tilde{\pi}.
\]

The behavior of uninformed sellers is similar to those of informed
sellers with low-quality assets since both of them fear the risk of
detection. As we can see from (\ref{WBU}), however, the return from
selling in the $B$-market, $W_{B}^{U}$, is lower than that of informed
sellers, $W_{B}$ in (\ref{eq:WBL}), because the return is discounted
by the probability that her asset is of low-quality. On the other
hand, the return from selling in the $C$-market is not affected by
this. Namely, with 100\% probability, they can sell the asset of unknown
quality. As a consequences, once again, it becomes a price-liquidity
tradeoff given the expected continuation value of the asset, which
makes relatively low(resp. high)-$\alpha$ sellers trade assets in
the $C$-market (resp. $B$-market).

\subsubsection*{Sufficiently low price in the $B$-market}

Therefore, if the price in $B$-market is sufficiently low ($\xi P_{B}\le P_{C}$),
trading in the $B$-market is not optimal: they try to sell in the
$C$-market or stay inactive. Hence, there is a unique threshold 
\[
\alpha^{U}=\frac{P_{c}}{\tilde{\pi}}.
\]
This separates sellers who go to the $C$-market and stay inactive.
The amount of sell orders from \textit{uninformed} sellers is
\begin{align*}
S_{B}^{U} & =0,\\
S_{C}^{U} & =(1-\lambda)F\left(\frac{P_{C}}{\tilde{\pi}}\right),
\end{align*}
and it directly corresponds to the supply amount: $K_{j}^{U}=S_{j}^{U}$.

\subsubsection*{Sufficiently high price in the $B$-market}

On the other hand, if the price in the $B$-market is sufficiently
high, $\xi P_{B}>P_{C}$, the uninformed sellers use both of the two
markets because the higher price in the $B$-market strictly outweighs
the risk of holding lemons for high-$\alpha$ sellers. That is, there
are two thresholds,
\[
\alpha_{0}^{U}=\frac{P_{C}-\pi_{0}P_{B}}{\phi\theta(1-\pi)},\alpha_{1}^{U}=\frac{\pi_{0}}{\pi+\phi(1-\pi)(1-\theta)}P_{B},
\]
which separate uninformed sellers into three groups. As in the case
of informed sellers of low-quality assets, uninformed sellers (i)
sell in the $C$-market if $\alpha\le\alpha_{0}^{U}$, (ii) in the
$B$-market if $\alpha\in(\alpha_{0}^{U},\alpha_{1}^{U}]$, and (iii)
stay inactive otherwise. Hence, the amount of sell orders from uninformed
traders is 
\begin{align*}
S_{B}^{U} & =(1-\lambda)[F(\alpha_{1}^{U})-F(\alpha_{0}^{U})],\\
S_{C}^{U} & =(1-\lambda)F(\alpha_{0}^{U}),
\end{align*}
and the supply after the screening by the blockchain is
\begin{align*}
K_{B}^{U} & =(1-\lambda)\pi_{0}[F(\alpha_{1}^{U})-F(\alpha_{0}^{U})],\\
K_{C}^{U} & =(1-\lambda)F(\alpha_{0}^{U}).
\end{align*}

\subsection{Aggregate Supply and Market Quality}

The supply functions in the previous subsections determine the aggregate
supply, $K_{B}^{S}$ and $K_{C}^{S}$, as well as the market quality,
$\pi_{B}$ and $\pi_{C}$. Let $\chi$ be an indicator function for
$\xi P_{B}>P_{C}$, i.e., $\chi=\mathbb{I}_{\{\xi P_{B}>P_{C}\}}$.
The aggregate supply sums up the supply from both types of sellers:
\begin{align}
K_{C}^{S} & =\lambda(1-\pi)F\left(\alpha_{I}\right)+(1-\lambda)\left[\chi F(\alpha_{0}^{U})+(1-\chi)F\left(\frac{P_{C}}{\tilde{\pi}}\right)\right]\label{KCS}\\
K_{B}^{S} & =\lambda\left\{ \pi F(P_{B})+(1-\pi)(1-\theta)\left[F\left(\frac{P_{B}}{\phi}\right)-F\left(\alpha_{I}\right)\right]\right\} \label{KBS}\\
 & +(1-\lambda)\pi_{0}\chi[F(\alpha_{1}^{U})-F(\alpha_{0}^{U})].\nonumber 
\end{align}
By using these equations, we can derive the average quality in both
markets:
\begin{align}
\pi_{C} & =\frac{(1-\lambda)\pi\left[\chi F(\alpha_{0}^{U})+(1-\chi)F\left(\frac{P_{C}}{\tilde{\pi}}\right)\right]}{K_{C}^{S}}\label{pic}\\
\pi_{B} & =\frac{\lambda\pi F(P_{B})+(1-\lambda)\pi\chi[F(\alpha_{1}^{U})-F(\alpha_{0}^{U})]}{K_{B}^{S}}.\label{pib}
\end{align}
The determination of $Q$ is the same as before.

\subsection{Numerical Examples for the General Model}

Figure \ref{Fig_bigphi} plots the effect of $\theta$ on the economic
variables when asymmetric information is not severe ($\phi=0.7)$.\footnote{Parameter values for the numerical examples are given by $\lambda=1$
and $\pi=0.3$. } As we have anticipated, the improvement of the blockchain security
brings about the higher price $P_{B}$ and quality $\pi_{B}$ in the
$B$-market. However, the direct rejection of $\theta$ fraction of
low-quality assets, as well as the higher price, will have a negative
effect on the total trading volume in $B$-market and $Q$. The intuition
is the same as in the main model proposed in subsection \ref{subsec:Intuitions-and-Mechanism}. 

As asymmetric information becomes more severe ($\phi=0.5$), it becomes
more costly for buyers to switch to the $C$-market. Figure \ref{Fig_middlephi}
provides effects of improvement in the blockchain technology. When
$\theta$ is small, the difference between $\pi_{B}$ and $\pi_{C}$
is minimal. Thus, accepting a higher price in $B$-market is perceived
as more costly than improvement of the average quality. Therefore,
a marginal increase in $\theta$ wipes out more traders than it attracts,
leading to a larger decline in the trading volume in the $B$-market
than the increase in $P_{B}$. The resulting $Q$ is, therefore, downward
sloping.

In contrast, when $\theta$ is high, the quality spread, $\Delta\pi$,
becomes significant. Although a higher $\theta$ induces a higher
price $P_{B}$, this does not trigger a large migration since buyers
try to avoid the significant uncertainty in the $C$-market. In this
case, the increment in the price dominates the decline in the transaction
volume in the $B$-market, making the transaction value, $P_{B}K_{B}$,
and the cryptocurrency price, $Q$, increasing in $\theta$.
\begin{figure}[H]
\begin{center}\caption{$\phi = 0.7$}\label{Fig_bigphi}

\includegraphics[scale=0.35]{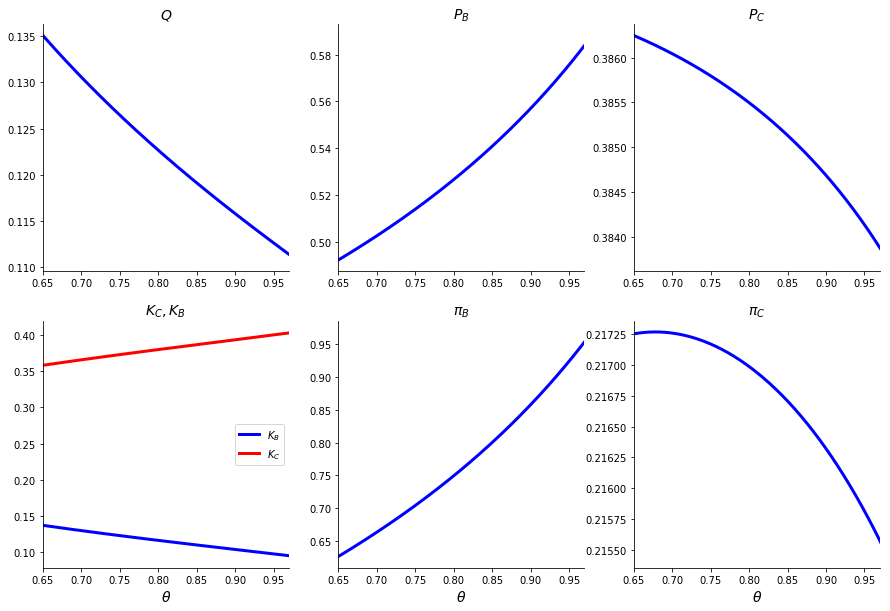}

\end{center}
\end{figure}

\begin{figure}[H]
\begin{center}\caption{$\phi$ = 0.5}\label{Fig_middlephi}

\includegraphics[scale=0.35]{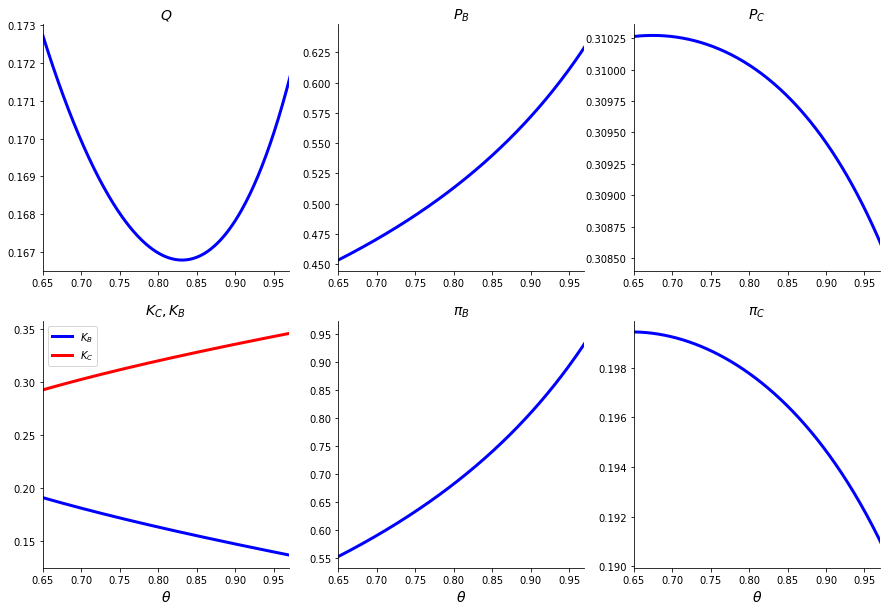}

\end{center}
\end{figure}

\section{Online Appendix : Proof\label{app_proof} }

\subsection{Proof for Proposition \ref{prop_pi} and \ref{prop_price} \label{app_dp_dpi}}

The following argument proves the claim under the generalized model
with $\lambda\in[0,1]$ whose equilibrium conditions are provided
in Appendix \ref{sec:Generalized-Model-with}. Making $\lambda=1$
proves the proposition for the benchmark model.

Our arguments start from two conditions. In the buyers' problem, our
guesses are
\begin{equation}
P_{B}\tilde{\pi}_{C}>P_{C}\tilde{\pi}_{B}\label{g1}
\end{equation}
and
\begin{equation}
\pi_{B}>\pi_{C}.\label{g2}
\end{equation}
Given these, the buyers' partial equilibrium implies that 
\[
\frac{P_{B}}{\tilde{\pi}_{B}}-\frac{P_{C}}{\tilde{\pi}_{C}}=(1-K)+(\tilde{\pi}_{B}-\tilde{\pi}_{C})K_{C}-(1-K)>0,
\]
where $K=K_{B}+K_{C}$. Therefore, we have shown that the inequality
(\ref{g1}) holds in the equilibrium as long as the guess (\ref{g2})
is correct (note that (\ref{g2}) and $\tilde{\pi}_{B}>\tilde{\pi}_{C}$
are equivalent). 

As the next step, we obtain $\pi_{B}$ and $\pi_{C}$ in the general
equilibrium under the guess (\ref{g2}) (and (\ref{g1})). By letting
$\Delta\pi\equiv\pi_{B}-\pi_{C}$ and $F$ be uniform, we have
\begin{equation}
\Delta\pi=\frac{\pi}{K_{B}K_{C}}\left[L-\lambda(1-\lambda)(1-\pi)(1-\theta)\beta_{0}^{U}\frac{\Delta P}{\phi\theta}\right]\label{dpi}
\end{equation}
where $\Delta P=P_{B}-P_{C}$ and 
\[
L=\lambda(1-\pi)\alpha_{I}(P_{B}+\beta_{1}^{U})+(1-\lambda)\beta_{0}^{U}(\lambda(1-\pi)P_{B}+(1-\lambda)(1-\pi_{0})\beta_{1}^{U}),
\]
\[
\beta_{0}^{U}=\frac{P_{C}}{\tilde{\pi}}+\chi\left(\alpha_{0}^{U}-\frac{P_{C}}{\tilde{\pi}}\right),\beta_{1}^{U}=\chi(\alpha_{1}^{U}-\alpha_{0}^{U}).
\]
Since both of $\alpha_{0}^{U}-P_{C}/\tilde{\pi}$ and $\alpha_{1}^{U}-\alpha_{0}^{U}$
are (positively) proportional to $\xi P_{B}-P_{C}$, we have $\beta_{0}^{U}>0$
and $\beta_{1}^{U}\ge0$. Therefore, $L>0$. Moreover, from (\ref{pc}),
the difference in prices is
\begin{align}
\Delta P & =(\tilde{\pi}_{B}-\tilde{\pi}_{C})(1-K_{B})=(1-K_{B})(1-\phi)\Delta\pi,\label{dp_dpi}
\end{align}
where we obviously have $K_{B}<1$. By plugging this into (\ref{dpi}),
we obtain
\[
\Delta\pi=\frac{\pi}{K_{B}K_{C}}\left[L-\lambda(1-\lambda)(1-\pi)(1-\theta)\beta_{0}^{U}\frac{(1-K_{B})(1-\phi)}{\phi\theta}\Delta\pi\right]
\]
\[
\therefore\Delta\pi=\frac{\frac{\pi}{K_{B}K_{C}}L}{1+\frac{\pi}{K_{B}K_{C}}\lambda(1-\lambda)(1-\pi)(1-\theta)\beta_{0}^{U}\frac{(1-K_{B})(1-\phi)}{\phi\theta}}>0.
\]
 Thus, the guess (\ref{g2}) holds in the general equilibrium, and
(\ref{dp_dpi}) implies $P_{B}>P_{C}$.

\subsection{Proof for Proposition \ref{prop_switch}\label{app_switch}\label{subsec:Proof-for-Proposition}}

Suppose that we have $\alpha_{I}>0$. Then the equilibrium solves
\begin{align}
K_{C}^{S} & =(1-\pi)\frac{P_{C}-(1-\theta)P_{B}}{\theta\phi},K_{B}^{S}=\pi P_{B}+(1-\pi)(1-\theta)\left(\frac{P_{B}-P_{C}}{\phi\theta}\right),\label{K_lam1-1}\\
K_{B}^{D} & =1-\frac{P_{B}-P_{C}}{(1-\phi)\pi_{B}},K_{C}^{D}=\frac{P_{B}-P_{C}}{(1-\phi)\pi_{B}}-\frac{P_{C}}{\phi},\nonumber \\
\pi_{B} & =\frac{\pi P_{B}}{K_{B}}.\nonumber 
\end{align}
Let $S=(P_{B}-P_{C})/P_{B}$ be the normalized spread across markets.
Then, rearranging the trading volumes gives
\begin{align*}
K_{B}^{D} & =1-\frac{S}{\pi(1-\phi)}K_{B}^{S},\frac{K_{B}^{S}}{P_{B}}=\pi+(1-\pi)(1-\theta)\frac{S}{\phi\theta},\\
K_{C}^{S} & =\frac{1-\pi}{\phi}P_{B}\left(1-\frac{S}{\theta}\right),K_{C}^{D}=\frac{SK_{B}^{S}}{\pi(1-\phi)}+\frac{P_{B}S}{\phi}-\frac{P_{B}}{\phi}.
\end{align*}
By equating $K_{C}^{S}=K_{C}^{D}$ and substituting $K_{B}^{i}$s,
we get a quadratic equation for $S$. Namely, in the equilibrium,
$S$ solves
\[
\frac{S}{1-\phi}+\frac{(1-\pi)(1-\theta)}{\phi\pi\theta(1-\phi)}S^{2}+\frac{S-1}{\phi}-\frac{1-\pi}{\phi}+\frac{1-\pi}{\theta\phi}S=0.
\]
Note that the LHS is monotonically increasing in $S(\ge0$), and the
condition $\alpha_{I}>0$ is identical to $S<\theta$ by definition
(\ref{eq:ThresSeller}). Thus, in the equilibrium, $\alpha_{I}>0$
if and only if
\[
\frac{\theta}{1-\phi}+\frac{(1-\pi)(1-\theta)}{\phi\pi\theta(1-\phi)}\theta^{2}+\frac{\theta-1}{\phi}-\frac{1-\pi}{\phi}+\frac{1-\pi}{\theta\phi}\theta>0,
\]
which can be rewritten as
\[
\theta^{2}(1-\pi)-\theta+\pi(1-\phi)<0.
\]
Note that if $\theta=0$ then the LHS of this inequality is positive,
while if $\theta=1$ then it is negative. Thus, the smaller solution
of the equation $\theta^{2}(1-\pi)-\theta+\pi(1-\phi)=0$ is between
0 and 1. We set this solution as $\theta_{0}$. Thus, $\alpha_{I}>0$
if and only if $\theta_{0}<\theta\le1.$ 

Next, suppose that $P_{C}-(1-\theta)P_{B}\le0$. This induces $\alpha_{I}=0$
by definition (\ref{eq:ThresSeller}), and the equilibrium solves
\begin{align}
K_{C}^{S} & =0,K_{B}^{S}=\pi P_{B}+(1-\pi)(1-\theta)\frac{P_{B}}{\phi},\label{K_lam1-2}\\
K_{B}^{D} & =1-\frac{P_{B}-P_{C}}{\tilde{\pi}_{B}-\phi},K_{C}^{D}=\frac{P_{B}-P_{C}}{\tilde{\pi}_{B}-\phi}-\frac{P_{C}}{\phi},\nonumber \\
\pi_{B} & =\frac{\pi P_{B}}{K_{B}}.\nonumber 
\end{align}
By using the market clearing in $C$-market and the definition of
$\pi_{B},$ we obtain
\[
K_{B}^{D}=1-\frac{mP_{B}-\phi}{(1-\phi)\pi P_{B}}K_{B}^{S},K_{B}^{S}=\left(\pi+\frac{(1-\theta)(1-\pi)}{\phi}\right)P_{B},
\]
with $m=1+\pi\phi+(1-\pi)(1-\theta)$. By clearing $B$-market, we
have
\begin{align}
P_{B} & =\frac{\phi(\pi+(1-\pi)(1-\theta))}{(2-\theta(1-\pi))(\phi\pi+(1-\pi)(1-\theta))},\label{pb_bd}\\
K_{B} & =\frac{\pi+(1-\pi)(1-\theta)}{2-\theta(1-\pi)}.\label{kb_bd}
\end{align}
Moreover, we can express the market clearing in $B$-market by using
$S$:
\[
K_{B}\left(1+\frac{S}{\pi(1-\phi)}\right)=1.
\]
By plugging the explicit solution of $K_{B},$ we have 
\[
S=\frac{\pi(1-\phi)}{\pi+(1-\pi)(1-\theta)}.
\]
Since $S$ is monotonically increasing in $\theta,$ the condition
$P_{C}-(1-\theta)P_{B}\le0$ is identical to $\theta<S,$ that is
\[
\theta^{2}(1-\pi)-\theta+\pi(1-\phi)\ge0.
\]
Therefore, the condition is $\theta\le\theta_{0}$, and we have established
that the equilibrium is continuous at $\theta=\theta_{0}$.
\begin{cor}
\label{cor_sub}When $\theta\le\theta_{0}$, $P_{B}$, $\pi_{B}$,
$Q$, and $v_{B}$ are monotonically increasing in $\theta$.
\end{cor}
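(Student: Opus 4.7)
The plan is to exploit the explicit closed-form expressions for the no-$C$-market regime already obtained in the proof of Proposition~\ref{prop_switch}, and to verify each monotonicity claim by log-differentiation, reducing everything to algebraic inequalities in three simple affine functions of $\theta$. Writing $a\equiv 1-(1-\pi)\theta$, $b\equiv 1+a$, and $c\equiv a-\pi(1-\phi)$, equations (\ref{pb_bd})--(\ref{kb_bd}) together with $\pi_B=\pi P_B/K_B$ and $Q=P_B K_B$ give $P_B=\phi a/(bc)$, $K_B=a/b$, $\pi_B=\pi\phi/c$, and $Q=\phi a^2/(b^2 c)$. All three of $a,b,c$ are strictly positive and strictly decreasing in $\theta$ on $[0,1]$ with common slope $-(1-\pi)$ (for $c$, note $c|_{\theta=1}=\pi\phi>0$). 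Integrating buyer surplus over $[\alpha^*,1]$ using the degenerate cutoff $\alpha^*=P_B/\tilde{\pi}_B=1/b$ collapses $v_B$ to $\tilde{\pi}_B(a/b)^2/2=\phi a^3/(2b^2 c)$, equivalently $v_B=(a/2)Q$.

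Monotonicity of $\pi_B=\pi\phi/c$ is immediate from monotonicity of $c$. For $P_B$, the log-derivative $\tfrac{d\log P_B}{d\theta}=-(1-\pi)(1/a-1/b-1/c)$ is positive iff $c<ab$, which follows from $c<a$ and $b>1$. For $Q$, the analogous condition reads $2c<ab$, i.e., $g(a)\equiv a^2-a+2\pi(1-\phi)>0$. This is where the restriction $\theta\le\theta_0$ genuinely enters: the standing Section~4 hypothesis $\pi(1-\pi)(1-\phi)<1/4$, combined with the defining quadratic for $\theta_0$, forces $\theta_0<1/(2(1-\pi))$ and hence $a\ge a_0>1/2$ on $[0,\theta_0]$. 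Since $g$ has its minimum at $a=1/2$ and is strictly increasing thereafter, it suffices to verify $g(a_0)>0$; substituting the defining identity $(1-\pi)\theta_0^2=\theta_0-\pi(1-\phi)$ into $a_0=1-(1-\pi)\theta_0$ collapses $a_0^2-a_0$ to $-\pi(1-\pi)(1-\phi)$ and yields $g(a_0)=\pi(1-\phi)(1+\pi)>0$.

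The step I expect to require the most care is the monotonicity of $v_B$, because $v_B=(a/2)Q$ is the product of the decreasing factor $a$ and the increasing factor $Q$, so the sign of its derivative is not decided by the preceding results. The plan is to apply the same substitution-and-boundary strategy: log-differentiation reduces the question to $3/a<2/b+1/c$, equivalently $c(3+a)<a(1+a)$ after simplification with $b=1+a$; I would again invoke $a\ge a_0>1/2$ on $[0,\theta_0]$, eliminate $a_0^2-a_0$ via the identity for $\theta_0$, and verify that the resulting polynomial in $a$ has the required sign. The algebra is structurally analogous to the $Q$ case but one polynomial degree higher, and this is where I expect the heaviest lifting; should the endpoint evaluation at $a_0$ not be tight enough on its own, a direct monotonicity argument in $a$ on $[a_0,1]$ using the derivative of the cubic should close the gap.
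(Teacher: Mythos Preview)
Your treatment of $P_B$, $\pi_B$, and $Q$ is correct and is essentially the same route the paper takes: the paper writes $Q$ in terms of $s=(1-\pi)(1-\theta)$ (your $a=\pi+s$, $b=1+\pi+s$, $c=\phi\pi+s$), computes $dQ/ds\propto 2c-ab$, and then argues this is negative for $\theta\le\theta_0$. Your reduction to $g(a)=a^2-a+2\pi(1-\phi)>0$ and the endpoint evaluation $g(a_0)=\pi(1-\phi)(1+\pi)$ via the defining quadratic of $\theta_0$ is a cleaner bookkeeping of the same computation.

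The genuine problem is with $v_B$. First, note that the paper's own proof of this corollary addresses only $P_B$, $\pi_B$, and $Q$; it never touches $v_B$. Second, your plan for $v_B$ will fail at the boundary-evaluation step, and not for lack of effort: the claim as stated does not hold in general. Your reduction $c(3+a)<a(1+a)$ is correct; substituting $c=a-\pi(1-\phi)$ it becomes $2a<\pi(1-\phi)(3+a)$. As $\theta\downarrow 0$ we have $a\uparrow 1$, and the inequality then demands $\pi(1-\phi)>\tfrac12$, which is \emph{not} implied by the standing hypothesis $\pi(1-\pi)(1-\phi)<\tfrac14$. Concretely, take $\pi=0.3$, $\phi=0.5$ (so $\pi(1-\pi)(1-\phi)=0.105<\tfrac14$ and $\theta_0\approx 0.170$). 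At $\theta=0.05$ one has $a=0.965$, $b=1.965$, $c=0.815$, and
\[
\frac{3}{a}-\frac{2}{b}-\frac{1}{c}\approx 3.109-1.018-1.227=0.864>0,
\]
so $d\log v_B/d\theta=-(1-\pi)\bigl(\tfrac{3}{a}-\tfrac{2}{b}-\tfrac{1}{c}\bigr)<0$ and $v_B$ is \emph{decreasing} there. Equivalently, using $v_B=\phi a^3/(2b^2c)$, one computes $v_B(0.01)\approx 0.0731$ versus $v_B(0.10)\approx 0.0692$. Thus no amount of sharpening the endpoint argument will rescue the $v_B$ claim on the full interval $(0,\theta_0]$; the monotonicity of $v_B$ in this regime requires the additional restriction $\pi(1-\phi)\ge\tfrac12$ (or else holds only on a subinterval near $\theta_0$), and the corollary as written overstates it.
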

\begin{proof}
Results for $P_{B}$ and $\pi_{B}$ are obvious from (\ref{pb_bd})
and (\ref{K_lam1-2}) in Appendix \ref{app_switch}. By using (\ref{pb_bd})
and (\ref{kb_bd}), we have
\[
Q=\left(\frac{\pi+s}{1+\pi+s}\right)^{2}\frac{\phi}{\phi\pi+s},\ s=(1-\pi)(1-\theta).
\]
Then
\[
\frac{dQ}{ds}\propto2(\phi\pi+s)-(\pi+s)(1+\pi+s)\equiv D_{Q},
\]
and 
\[
(1-\pi)D_{Q}=-\theta^{2}(1-\pi)+\theta-\frac{1+2\pi(1-\phi)}{1-\pi}<0
\]
where the last inequality comes from $\theta\le\theta_{0}.$ With
the fact that $ds/d\theta<0$, we have $dQ/d\theta>0$.
\end{proof}

\subsection{Proof for Proposition \ref{prop_lam1_theta} and \ref{prop_lam1_Q}
\label{app_proof_lam1}}

To see the uniqueness, we plot these $K_{B}$'s against $P_{B}$ (see
Figure \ref{Fig_thetaPB}). Obviously, $K_{B}^{S}$ is positive linear
function in $P_{B}$. We can also check that $K_{B}^{D}$ is concave,
has only one inflection point in $P_{B}>0$, and $\frac{dK_{B}^{D}}{dP_{B}}<0$
for a sufficiently large $P_{B}$. Since $K_{B}^{D}=1>K_{B}^{S}$
at $P_{B}$ such that $K_{B}^{S}=0$, these two curves cross only
once in $P_{B}>0$. 

First, by $K_{B}^{D}+K_{C}^{D}=1-P_{C}/\phi,$ and equating $K_{j}^{D}=K_{j}^{S}$,
we obtain
\begin{equation}
P_{C}=\frac{\phi}{2-\pi}(1-\pi P_{B}).\label{pcpb}
\end{equation}
Now, suppose that $\theta$ increases. This is represented by the
red curves in Figure \ref{Fig_thetaPB}. 
\begin{figure}[H]
\begin{center}\caption{Effect of $\theta$ on $B$-market}\label{Fig_thetaPB}

\includegraphics[scale=0.35]{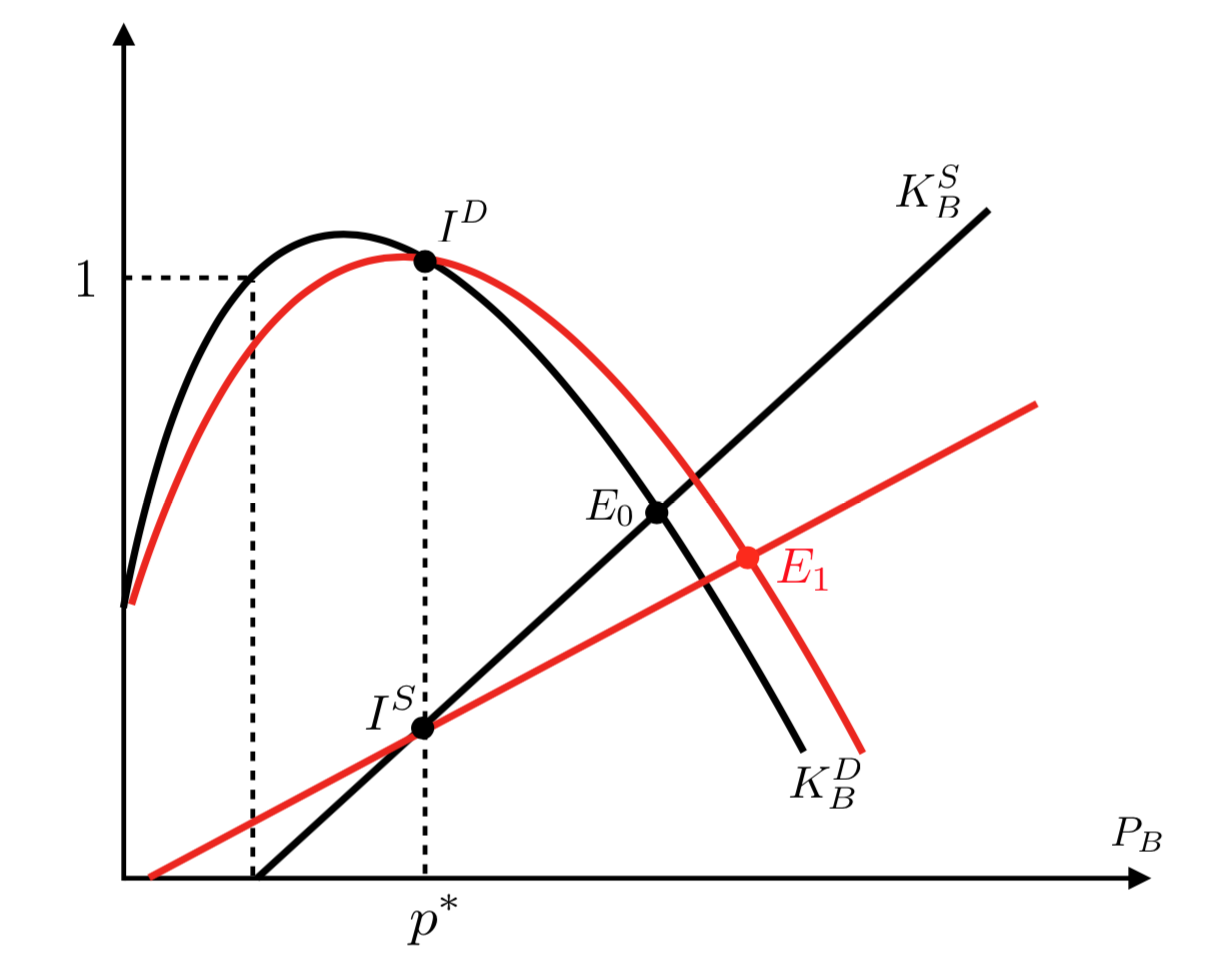}

\end{center}
\end{figure}
 We have 
\begin{equation}
p^{*}=\frac{\phi}{2-\pi(1-\phi)}\label{pstar}
\end{equation}
such that $P_{B}\gtrless P_{C}\Leftrightarrow P_{B}\gtrless p^{*}$.
Also, let $g\equiv\frac{1-\theta}{\theta},\ \eta\equiv1+\frac{\phi\pi}{2-\pi}.$
In the equilibrium, we have $K_{B}^{S}=K_{B}^{D}\equiv K_{B},$ so
that they are respectively expressed as
\begin{align}
K_{B} & =P_{B}\left[\pi+\frac{(1-\pi)}{\phi}g\eta\right]-\frac{1-\pi}{2-\pi}g,\nonumber \\
K_{B} & =\frac{(1-\phi)\pi P_{B}}{((1-\phi)\pi+\eta)P_{B}-\frac{\phi}{2-\pi}}.\label{KB2}
\end{align}
By equating these two equations and rearranging it in terms of $y\equiv P_{B}^{-1}$,
we obtain
\[
H(y,g)\equiv\left(\pi+\frac{1-\pi}{\phi}g\eta\right)-\frac{\pi(1-\phi)y}{((1-\phi)\pi+\eta)-\frac{\phi}{2-\pi}y}-\frac{1-\pi}{2-\pi}gy=0.
\]
For this function, we have
\begin{align}
\frac{\partial H}{\partial g} & =\frac{1-\pi}{\phi(2-\pi)}(2-\pi(1-\phi)-\phi y)>0,\label{H_g}\\
\frac{\partial H}{\partial y} & =-\frac{\pi(1-\phi)((1-\phi)\pi+\eta)}{[((1-\phi)\pi+\eta)-\frac{\phi}{2-\pi}y]^{2}}-\frac{1-\pi}{2-\pi}g<0.\label{H_y}
\end{align}
Note that both inequality comes from $P_{B}>P_{C}$ (equivalently,
$P_{B}>p^{*}$). These confirm, by the implicit function theorem,
$dP_{B}/d\theta>0$. 

We rearrange the equation for $\pi_{B}$ as
\[
\pi_{B}=\frac{\pi}{\pi+\frac{1-\pi}{\phi}g(1-\frac{P_{C}}{P_{B}})},
\]
which implies
\[
\text{sgn}\left(\frac{d\pi_{B}}{d\theta}\right)=-\text{sgn}\left(\frac{d\pi_{B}}{dg}\right)=\text{sgn}\left(\frac{d}{dg}\left[g(1-\frac{P_{C}}{P_{B}})\right]\right).
\]
By using (\ref{pcpb}), we can rewrite the inside of the last brackets:
\begin{align*}
1-\frac{P_{C}}{P_{B}} & =1-\frac{\frac{\phi}{2-\pi}(1-\pi P_{B})}{P_{B}}\propto\frac{2-\pi(1-\phi)-\phi y}{2-\pi}.
\end{align*}
Hence, the last term can be calculated as follows.
\begin{align*}
\frac{d}{dg}\left[g\left(2-\pi(1-\phi)-\phi y\right)\right] & =2-\pi(1-\phi)-\phi y-g\phi\frac{\partial H/\partial g}{\partial H/\partial y}\\
 & =\frac{2-\pi(1-\phi)-\phi y}{\partial H/\partial y}\frac{\pi(1-\phi)((1-\phi)\pi+\eta)}{[((1-\phi)\pi+\eta)-\frac{\phi}{2-\pi}y]^{2}}>0
\end{align*}
where the second line comes from the implicit function theorem, and
the third to last lines are due to (\ref{H_g}), (\ref{H_y}) and
$P_{B}>p^{*}$. Thus, we established that $\frac{d\pi_{B}}{d\theta}>0$.
Also, $K_{B}$ is decreasing in $\theta$, which is immediate from
(\ref{KB2}). The statement (iii) can be checked by the decreasing
$K_{B}$ and $\Delta P/\Delta\tilde{\pi}=1-K_{B}$ in the equilibrium.

As for the price $Q$, (\ref{KB2}) yields
\[
QB_{S}=P_{B}K_{B}=\frac{(1-\phi)\pi P_{B}^{2}}{((1-\phi)\pi+\eta)P_{B}-\frac{\phi}{2-\pi}}.
\]
Since the right hand side does not contain $\theta$, taking a derivative
of the last term is
\[
\frac{dQ}{d\theta}=\frac{dP_{B}}{d\theta}\frac{dQ}{dP_{B}}\propto(\eta+(1-\phi)\pi)P_{B}-\frac{2\phi}{2-\pi}.
\]
Therefore, there is an inflection point
\[
p^{**}=\frac{2\phi}{(\eta+(1-\phi)\pi)(2-\pi)},
\]
which determines the sign of the effect:
\begin{equation}
\frac{dQ}{d\theta}\gtrless0\Leftrightarrow P_{B}\gtrless p^{**}.\label{dQdt}
\end{equation}
Now, by using the implicit formula $H(P_{B}^{-1},g)=0$ and the fact
that $P_{B}H(P_{B}^{-1},g)$ is monotonically increasing in $P_{B}$,
the condition (\ref{dQdt}) is identical to 
\[
A(\theta)\equiv g(1-\pi)\left(2\eta-h\right)+2\pi[\phi-(1-\phi)(2-\pi)]\lessgtr0.
\]
Note that $A$ is monotonically decreasing in $\theta$ (this can
be confirmed by using $P_{B}>p^{*}$ again). By letting $\theta$
fluctuate from $\theta_{0}$ to $1$, we have the following result.

(i) If $\phi>(2-\pi)/(3-\pi),$ then $A(\theta)>0$ for all $\theta\in[\theta_{0},1]$,
which implies that $P_{B}>p^{**}$ always holds in the equilibrium,
leading to a monotonically decreasing $Q$. (ii) If $\phi\le(2-\pi)/(3-\pi),$
then $A(1)<0$, so $Q$ is decreasing in high-$\theta$ region. To
understand more global behavior, we need to check if $A(\theta_{0})\gtrless0$.
By seeing $A$ as a function of $g$, we can define $g^{*}$ that
makes $A(g)=0$ as
\[
g^{*}(\phi)=\frac{2\pi(2-\pi-\phi(3-\pi))}{(1-\pi)(1-\pi(1-\phi)+\frac{\phi\pi}{2-\pi})}.
\]
Since $A(g)$ is increasing in $\phi$, we have $dg^{*}/d\phi<0$.
Note that we are focusing on $\theta>\theta_{0}$, which means 
\[
g<g_{0}(\phi)\equiv\frac{1-\theta_{0}(\phi)}{\theta_{0}(\phi)}.
\]
From the definition of $\theta_{0},$ we know $\theta_{0}$ is decreasing
and $g_{0}$ is increasing in $\phi$. We also have $\lim_{\phi\rightarrow0}g^{*}(\phi)>0$
and $\lim_{\phi\rightarrow0}g_{0}(\phi)=\mathbb{I}_{\{\pi<1/2\}}\pi^{-1}$
because $\theta_{0}\rightarrow\mathbb{I}_{\{\pi\ge1/2\}}+\mathbb{I}_{\{\pi<1/2\}}\frac{\pi}{1-\pi}$.
Figure \ref{Fig_Q_proof} shows the effect of a smaller $\phi$ on
$g^{*}$ and $g_{0}$. We have following two possibilities. 
\begin{figure}[H]
\begin{center}\caption{Behavior of $A$}\label{Fig_Q_proof}

\includegraphics[scale=0.3]{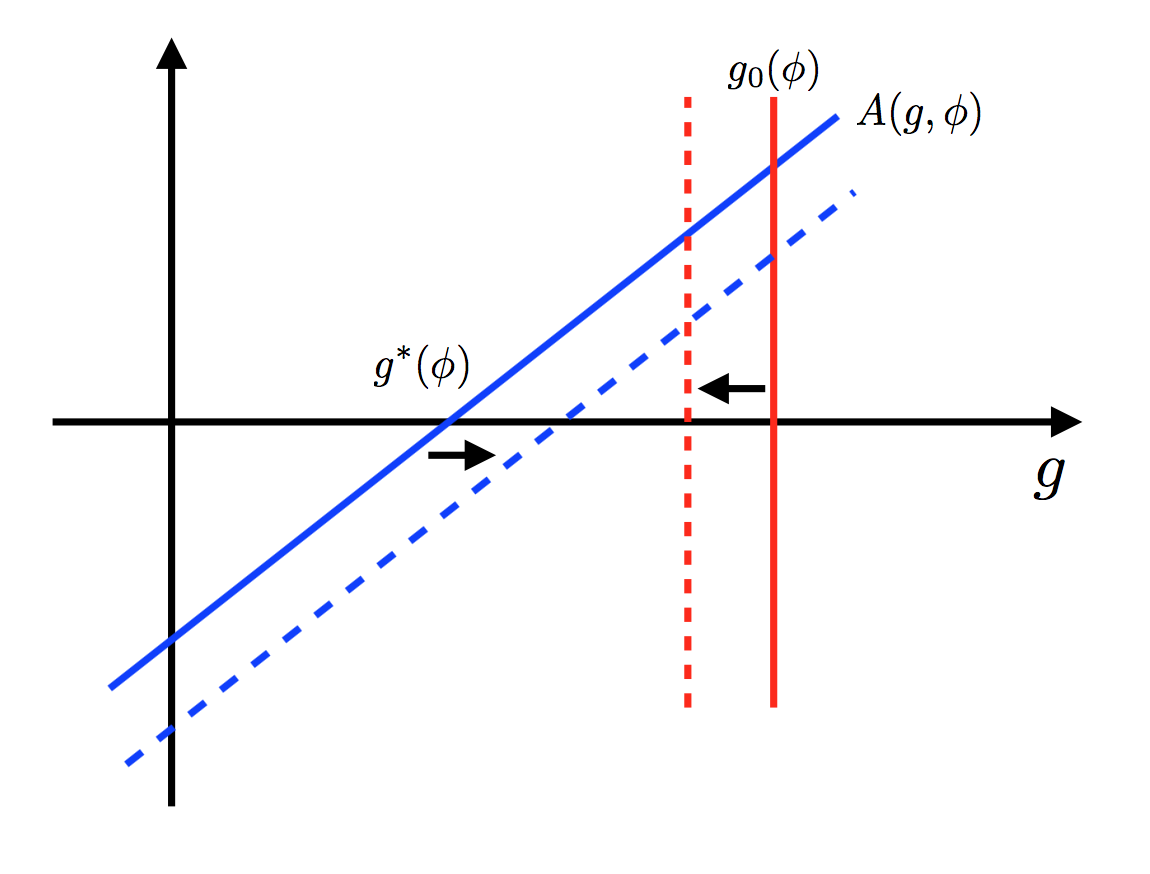}

\end{center}
\end{figure}

\begin{description}
\item [{(ii-a)\label{ii-a}}] Suppose that $\pi\ge1/2$. Then there is
$\phi_{0}$ that solves 
\begin{equation}
g^{*}(\phi)=g_{0}(\phi).\label{app_phi1}
\end{equation}
$\phi_{0}$ is uniquely determined from the discussion above. In this
case, if $\phi<\phi_{0}$, then $A(g)<0$ for all $g<g_{0}$. That
is $Q$ is monotonically increasing in $\theta$. If $\phi_{0}<\phi<\phi_{1}$,
then we have $A(g)\gtrless0\Leftrightarrow g\gtrless g^{*}.$ Thus,
we can define $\theta^{*}=1/(1+g^{*})\in(\theta_{0},1]$, and $Q$
is increasing when $\theta>\theta^{*}$ and decreasing when $\theta<\theta^{*}$. 
\item [{(ii-b)}] Also, consider the case with $\pi<1/2$. In this case,
we have a unique $\pi^{*}\in(0,1/2)$ that solves
\[
g^{*}(0)=\frac{2\pi(2-\pi)}{(1-\pi)^{2}}=\frac{1}{\pi}=g_{0}(0),
\]
or equivalently 
\[
2\pi^{3}-3\pi^{2}-2\pi+1=0.
\]
If $\pi^{*}\le\pi<1/2,$ then $g^{*}(0)>g_{0}(0)$. This implies that
we always have $\theta^{*}$ defined by $\theta^{*}=1/(1+g^{*})\in(\theta_{0},1]$,
and and $Q$ is increasing when $\theta>\theta^{*}$ and decreasing
when $\theta<\theta^{*}$. On the other hand, if $0\le\pi\le\pi^{*}$,
then the arguments go back to the case \textbf{(ii-a)} and the same
results hold.
\end{description}

\section{Welfare Analyses}

\subsection{Welfare Analyses for Buyers under $\lambda=1$\label{app_welfare_b}}

The buyers' welfare in aggregate is 
\begin{align}
v_{B} & =\int_{\alpha^{*}}(\Delta\tilde{\pi}\alpha-\Delta P)dF+\int_{P_{C}/\phi}(\phi\alpha-P_{C})dF\nonumber \\
 & =\frac{\Delta\tilde{\pi}}{2}(1-\alpha^{*})^{2}+\frac{\phi}{2}(1-\frac{P_{C}}{\phi})^{2}\nonumber \\
 & =\frac{1}{2}\left[\pi(1-\phi)P_{B}K_{B}+\frac{\phi}{(2-\pi)^{2}}(1-\pi+\pi P_{B})^{2}\right],\label{vb_Q}
\end{align}
where the second term comes from $\alpha^{*}=\Delta P/\Delta\tilde{\pi}$,
and the last term comes from (\ref{pcpb}), (\ref{K_lam1}), $K_{j}^{S}=K_{j}^{D}$,
and the definition of $\pi_{B}$:
\[
\Delta\tilde{\pi}=(1-\phi)\frac{\pi P_{B}}{K_{B}}.
\]
The property of the first term is given by Proposition \ref{prop_lam1_Q},
while the second term is monotonically increasing in $\theta$. Furthermore,
by using (\ref{KB2}),
\[
2v_{B}=((1-\phi)\pi)^{2}\frac{P_{B}^{2}}{P_{B}[\eta+\pi(1-\phi)]-\frac{\phi}{2-\pi}}+\frac{\phi}{(2-\pi)^{2}}(1-\pi+\pi P_{B})^{2}.
\]
Note that $\theta$ does not directly affect $v_{B}$ in this expression.
By letting $h\equiv\eta+\pi(1-\phi),$ we have
\[
\frac{d2v_{B}}{dP_{B}}\equiv D_{B}=((1-\phi)\pi)^{2}\frac{P_{B}(hP_{B}-2\frac{\phi}{2-\pi})}{(hP_{B}-\frac{\phi}{2-\pi})^{2}}+\frac{2\phi\pi}{(2-\pi)^{2}}(1-\pi+\pi P_{B}).
\]
The second order derivative yields
\begin{align*}
\frac{dD_{B}}{dP_{B}} & =\frac{2((1-\phi)\pi)^{2}}{(hP_{B}-\frac{\phi}{2-\pi})^{3}}\left[\left(hP_{B}-\frac{\phi}{2-\pi}\right)^{2}-hP_{B}\left(hP_{B}-\frac{2\phi}{2-\pi}\right)\right]+\frac{2\phi\pi^{2}}{(2-\pi)^{2}}\\
 & =\frac{2((1-\phi)\pi)^{2}}{(hP_{B}-\frac{\phi}{2-\pi})^{3}}\frac{\phi^{2}}{(2-\pi)^{2}}+\frac{2\phi\pi^{2}}{(2-\pi)^{2}}>0.
\end{align*}
We also have $P_{B}(\theta=1)\equiv\hat{p}_{1}=\frac{1-\phi+\frac{\phi}{2-\pi}}{h}$
and can check $D_{B}(P_{B}=\hat{p}_{1})>0$. Thus, if $\lim_{\theta\rightarrow\theta_{0}}D_{B}<0$,
there is a unique $\theta^{*}$ such that $D_{B}\gtrless0\Leftrightarrow\theta\gtrless\theta^{*}$,
while if $\lim_{\theta\rightarrow\theta_{0}}D_{B}>0$, then $D_{B}>0$
for all $\theta$. 

The following formulas at $\theta=\theta_{0}$ simplify the analyses.
First, as $\theta\searrow\theta_{0},$ we have
\begin{equation}
P_{C}=\frac{\phi}{2-\pi}\left(1-\pi P_{B}\right)=(1-\theta_{0})P_{B},\label{t_big_t0}
\end{equation}
\begin{equation}
\therefore P_{B}=\tilde{p}\equiv\frac{\phi}{\pi\phi+(1-\theta_{0})(2-\pi)}.\label{tildep}
\end{equation}
Moreover, at $\theta\rightarrow\theta_{0},$ we have $\alpha_{I}\rightarrow0$
by definition. Since the markets have to clear, at the limit,
\begin{align}
\lim_{\theta\searrow\theta_{0}}K_{B}^{D} & =\lim_{\theta\searrow\theta_{0}}\left(1-\frac{P_{B}-P_{C}}{\pi_{B}(1-\phi)}\right)=\lim_{\theta\searrow\theta_{0}}\left(1-K_{C}^{D}-\frac{P_{C}}{\phi}\right)\nonumber \\
 & =\lim_{\theta\searrow\theta_{0}}\left(1-(1-\pi)\alpha_{I}-\frac{P_{C}}{\phi}\right)\nonumber \\
 & =1-\frac{1-\theta_{0}}{\phi}\tilde{p}=\frac{1-\pi+\pi\tilde{p}}{2-\pi}\label{K0}
\end{align}
The first line is the definition, the second line is from the definition
of $K_{C}^{D},$ the third line is from the market clearing condition
in $C$-market, and the fourth and fifth lines are from the definition
of $\theta_{0}$ that gives $\alpha_{I}=0$ and (\ref{t_big_t0}).
The last line is the other expression from (\ref{t_big_t0}). Also,
from (\ref{KB2})
\begin{equation}
\lim_{\theta\searrow\theta_{0}}K_{B}=\frac{(1-\phi)\pi\tilde{p}}{((1-\phi)\pi+\eta)\tilde{p}-\frac{\phi}{2-\pi}}.\label{K1}
\end{equation}
Since markets have to clear, all of these expressions (\ref{K0},
\ref{K1}) have to be identical. That is
\begin{equation}
\frac{(1-\phi)\pi\tilde{p}}{h\tilde{p}-\frac{\phi}{2-\pi}}=1-\frac{1-\theta_{0}}{\phi}\tilde{p}=\frac{1-\pi+\pi\tilde{p}}{2-\pi},\label{three}
\end{equation}
at (\ref{tildep}). 

Let $D_{B,0}\equiv\lim_{\theta\searrow\theta_{0}}D_{B}$. By using
the equality of the first and last term in (\ref{three}),
\begin{align*}
D_{B,0} & \propto(1-\phi)\pi\left(1-\frac{\frac{\phi}{2-\pi}}{h\tilde{p}-\frac{\phi}{2-\pi}}\right)+\frac{2\phi\pi}{2-\pi}.
\end{align*}
By using (\ref{three}) once again, $h\tilde{p}-\frac{\phi}{2-\pi}=\frac{(1-\phi)\pi\tilde{p}}{1-\frac{1-\theta_{0}}{\phi}\tilde{p}}$.
Thus,
\begin{align*}
D_{B,0} & \propto(1-\phi)\pi\left(1-\frac{\phi}{2-\pi}\frac{1-\frac{1-\theta_{0}}{\phi}\tilde{p}}{(1-\phi)\pi\tilde{p}}\right)+\frac{2\phi\pi}{2-\pi}\\
 & \propto[(1-\theta_{0})+(2-\pi)(1-\phi)]+2\pi\phi-\frac{\phi}{\tilde{p}}\\
 & =1+(1-\pi)(\theta_{0}-2\phi).
\end{align*}
Note that $\theta_{0}$ is decreasing function of $\phi$ and $\lim_{\phi\rightarrow1}\theta_{0}=0$.
Then, $\min_{\phi}D_{B,0}=\lim_{\phi\rightarrow1}D_{B,0}=2\pi-1.$
Therefore, if $\pi<\frac{1}{2}$, we can define a unique $\phi=\phi_{2}$
that solves 
\[
1+(1-\pi)(\theta_{0}-2\phi)=0.
\]
If $\phi\le\phi_{2}$ or $\pi>1/2$, then $D_{B}>0$ for all $\theta\in(\theta_{0},1]$,
and $v_{B}$ is monotonically increasing. On the other hand, if $\pi\le1/2$
and $\phi>\phi_{2}$, then there is a unique $\theta^{**}$ such that
$D_{B}\gtrless0\Leftrightarrow\theta\gtrless\theta^{**}$.

\subsection{Welfare of Sellers}

The welfare of sellers hinges on the quality of assets they are allocated
upon their arrival at the economy. The aggregate welfare of $H$-
and $L$-type sellers are defined as 
\begin{align*}
v_{S,H} & =\int_{P_{B}}\alpha dF+\int^{P_{B}}P_{B}dF,\\
v_{S,L} & =\int_{\frac{P_{B}}{\phi}}\phi\alpha dF+\int_{\alpha^{I}}^{\frac{P_{B}}{\phi}}((1-\theta)P_{B}+\theta\phi\alpha)dF+\int^{\alpha^{I}}P_{C}dF.
\end{align*}
In both expressions, the first term is the welfare of inactive sellers,
while the second term is the welfare of sellers in $B$-market. The
last term of $v_{S,L}$ comes from sellers of $L$-asset in $C$-market.
It is easy to check the following proposition:
\begin{prop}
\label{prop:The-welfare-of-1-1}The welfare of sellers with high-quality
assets, $v_{S,H}$, is monotonically increasing in $\theta$. 
\end{prop}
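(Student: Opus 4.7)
The plan is to reduce the proposition to the already-established monotonicity of $P_B$ in $\theta$. First I would evaluate the two integrals explicitly using the uniform distribution assumption on $F$ over $[0,1]$. Since the inactive sellers have $\alpha \ge P_B$ and the active ones have $\alpha < P_B$, the computation gives
\begin{equation*}
v_{S,H} = \int_{P_B}^{1} \alpha\, d\alpha + \int_{0}^{P_B} P_B\, d\alpha = \frac{1 - P_B^2}{2} + P_B^2 = \frac{1 + P_B^2}{2}.
\end{equation*}
The key observation is that $\theta$ does not enter $v_{S,H}$ directly but only through $P_B$, because high-quality sellers face no rejection risk and therefore neither $\theta$ nor $\phi$ nor $P_C$ appears in their optimization.

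Next I would differentiate: $\frac{dv_{S,H}}{d\theta} = P_B \cdot \frac{dP_B}{d\theta}$. Since $P_B > 0$ always in any equilibrium (trade is voluntary), the sign of $\frac{dv_{S,H}}{d\theta}$ coincides with the sign of $\frac{dP_B}{d\theta}$. At this point I would invoke the two monotonicity results already in hand: Proposition \ref{prop_lam1_theta}(ii) gives $\frac{dP_B}{d\theta} > 0$ in the coexistence region $\theta > \theta_0$, and Corollary \ref{cor_sub} gives the same conclusion on $[0,\theta_0]$ where the $C$-market has broken down. Continuity of the equilibrium at $\theta = \theta_0$ (established at the end of Appendix \ref{app_switch}) then covers the boundary, yielding strict monotonicity of $P_B$, and therefore of $v_{S,H}$, on all of $(0,1]$.

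There is essentially no obstacle: the proposition is a direct corollary of results already proved, and the only content beyond quoting them is the elementary observation that $v_{S,H}$ is a monotonically increasing function of $P_B$ alone. The one thing worth being careful about is to note explicitly that high-type sellers' problem does not depend on $\theta$, $\phi$, or any $C$-market variable (because they never face rejection and would never trade at $P_C < P_B$), which is what eliminates any indirect channel through which an increase in $\theta$ could hurt them. Once that is stated, the chain $\theta \uparrow \Rightarrow P_B \uparrow \Rightarrow v_{S,H} \uparrow$ closes the argument.
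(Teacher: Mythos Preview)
Your proposal is correct and follows exactly the approach the paper has in mind: the paper states this proposition as ``easy to check'' and justifies it only by the sentence that high-quality sellers can always sell in the $B$-market at the higher price $P_B$. Your explicit computation $v_{S,H}=(1+P_B^2)/2$ and appeal to Proposition~\ref{prop_lam1_theta}(ii) and Corollary~\ref{cor_sub} simply make that one-line intuition rigorous.
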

The innovation in blockchain always benefits sellers of high-quality
assets because they can always sell their asset in $B$-market at
the higher price $P_{B}$. On the other hand, the global effect of
$\theta$ on $v_{S,L}$ is hard to determine, though we can obtain
the following local result:
\begin{prop}
\label{prop:-cannot-be-1-1}$\left.\frac{dv_{S,L}}{d\theta}\right|_{\theta=1}<0$,
that is, $\theta=1$ cannot be the maximizer of $v_{S,L}$.
\end{prop}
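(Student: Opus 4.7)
The plan is to differentiate $v_{S,L}$ with respect to $\theta$, exploit an envelope-type argument to discard the boundary terms, and evaluate the surviving expression at $\theta = 1$. The three integrals in $v_{S,L}$ are demarcated by cutoffs $\alpha^I(\theta)$ and $P_B(\theta)/\phi$, both of which move with $\theta$; Leibniz's rule thus produces a boundary contribution from each cutoff for both adjacent integrals. These contributions cancel pairwise because each cutoff is defined by an indifference condition for the L-seller: at $\alpha = P_B/\phi$, the integrands of the inactive and B-market pieces both equal $P_B$, and at $\alpha = \alpha^I$, the integrands of the C-market and B-market pieces both equal $P_C$ by the defining equation (\ref{eq:ThresSeller}) of $\alpha^I$.

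After cancellation, only the partial derivatives of the integrands in $\theta$ remain, giving
\begin{equation*}
\frac{dv_{S,L}}{d\theta} = \alpha^I(\theta)\, P_C'(\theta) + \int_{\alpha^I(\theta)}^{P_B(\theta)/\phi}\bigl(-P_B(\theta) + (1-\theta)P_B'(\theta) + \phi\alpha\bigr)\,d\alpha.
\end{equation*}
Evaluating at $\theta = 1$, the $(1-\theta)P_B'(\theta)$ term vanishes and $\alpha^I(1) = P_C/\phi$ by direct substitution in (\ref{eq:ThresSeller}). A closed-form evaluation of the remaining integral (a quadratic in $\alpha$) yields
\begin{equation*}
\left.\frac{dv_{S,L}}{d\theta}\right|_{\theta=1} = \frac{P_C\, P_C'(1)}{\phi} - \frac{(P_B - P_C)^2}{2\phi}.
\end{equation*}

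Both terms are strictly negative: the second by Proposition~\ref{prop_price} ($P_B > P_C$), and the first because Proposition~\ref{prop_lam1_theta}(ii) gives $dP_C/d\theta < 0$ throughout the coexistence region $\theta > \theta_0$, which applies at $\theta = 1$ since $\theta^2(1-\pi) - \theta + \pi(1-\phi)$ evaluated at $\theta = 1$ equals $-\pi\phi < 0$, forcing $\theta_0 < 1$. Summing two negative contributions proves the claim. The one delicate step is the envelope cancellation, which would be tedious to verify by brute algebra but becomes immediate once each cutoff is recognized as an indifference boundary; the remainder of the argument is elementary.
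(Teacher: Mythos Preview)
Your proof is correct. The envelope cancellation at the two cutoffs is exactly right: each cutoff is an indifference point for the $L$-seller, so the paired boundary terms from Leibniz's rule vanish identically, leaving only the partial derivatives of the integrands. Your evaluation at $\theta=1$ and the closed-form computation of the surviving integral are accurate, and the two sign arguments (via Propositions~\ref{prop_price} and~\ref{prop_lam1_theta}) complete the claim cleanly.

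The paper's route is slightly different in packaging, though it lands on the same expression. Rather than differentiating the three-piece integral directly, the appendix first decomposes $v_{S,L}=v_{S,L}^{0}+\Delta v_{S,L}$ into a reservation term (trading only in the $C$-market) and a gain term, and then exploits the closed form $\Delta v_{S,L}=(1-\theta)\Delta P^{2}/(2\phi\theta)$. Writing $g=(1-\theta)/\theta$ and evaluating at $\theta=1$ (so $g=0$, $dg/d\theta=-1$) gives $d\Delta v_{S,L}/d\theta|_{\theta=1}=-\Delta P^{2}/(2\phi)$, while $v_{S,L}^{0}=P_{C}^{2}/(2\phi)+\phi/2$ contributes $P_{C}P_{C}'/\phi$. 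Summing reproduces your formula. Your envelope argument is more self-contained and avoids the detour through the closed form for $\Delta v_{S,L}$; the paper's decomposition, on the other hand, makes the economic interpretation (reservation welfare versus gain from $B$-market access) more transparent and feeds directly into the fee discussion that follows.

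One minor caveat worth noting: your final formula $-\,(P_{B}-P_{C})^{2}/(2\phi)$ for the integral term implicitly uses $P_{B}/\phi\le 1$ as the upper limit of integration. The paper makes the same implicit assumption in its closed form for $\Delta v_{S,L}$, so you are at the same level of rigor. If instead $P_{B}/\phi>1$, the upper limit is capped at $1$, but the integrand $\phi\alpha-P_{B}$ remains strictly negative on the whole interval, so the sign conclusion is unaffected.
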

\begin{proof}
See Appendix \ref{subsec:Welfare-Gain-for Sellers}.
\end{proof}
Together with Proposition \ref{prop:The-welfare-of-1-1}, this implies
that ex-post welfare of $L$-type sellers cannot agree with the welfare
of $H$-type sellers regarding the optimal $\theta$. 

To obtain more intuitions, we can separate $v_{S,L}$ into the welfare
gain parts and the reservation welfare, as in the case of buyers'
welfare, 
\begin{equation}
v_{S,L}=P_{C}+\int_{\alpha^{I}}^{1}((1-\theta)P_{B}-P_{C}+\theta\phi\alpha)dF+\int_{\frac{P_{B}}{\phi}}^{1}(1-\theta)(\phi\alpha-P_{B})dF.\label{vsl-1-1}
\end{equation}
First, all the $L$-type sellers certainly can obtain the reservation
welfare of $P_{C}$ by selling the asset in $C$-market (the first
term in \ref{vsl-1-1}). If $\alpha>\alpha_{I}$, the sellers change
the behavior to either selling in $B$-market or keeping it. The second
term in (\ref{vsl-1-1}) represents the welfare gain of sellers who
will opt-out from $C$-market: all of them ($\alpha>\alpha_{I}$)
can potentially obtain the additional welfare by selling in $B$-market.
Within this subgroup, agents with relatively high $\alpha$ (such
that $\alpha>\frac{P_{B}}{\phi}$) prefer to keep the asset by giving
up the revenue $P_{B}$, which yields the further welfare gain exhibited
by the last term of $v_{S,L}$. 

The first and last terms are monotonically decreasing in $\theta$.
That is, the reservation welfare (the first term) and the gain from
changing behavior from selling in $C$-market to being inactive decline
as the blockchain market becomes more profitable. The sign of the
impact on the middle term is affected by two competing effects. On
one hand, a higher $\theta$ boots the revenue by heightening $P_{B}$.
On the other hand, it reduces the expected revenue by making the rejection
risk higher. The total effect depends on how large the positive welfare
gain by the traders in $B$-market will be, and it is more likely
to happen when the migration of buyers from $B$-market is not so
large due to the severe information asymmetry and large quality spread.
In Appendix \ref{subsec_seller_gain}, we provide further analyses
regarding the welfare gain of sellers to complement Proposition \ref{prop:-cannot-be-1-1}
and show that the effect of $\theta$ on $v_{S,L}$ depends on the
elasticity of $P_{B}$ with respect to $\theta$.

\subsection{Welfare Gain for Sellers\label{subsec_seller_gain}\label{subsec:Welfare-Gain-for Sellers}}

Hypothetically, consider a randomly picked seller who is deprived
of the access to $B$-market. Ex-ante (before she is endowed with
the asset), she expects to have $v_{S}^{0}=\pi v_{S,H}^{0}+(1-\pi)v_{S,L}^{0}$,
where $v_{S,i}^{0}$ represents the reservation welfare of the seller
when she obtains the asset-$i$ with $i\in\{L,H\}$. Specifically,
\begin{align*}
v_{S,i}^{0} & =\begin{cases}
\int_{0}^{P_{C}}P_{C}dF+\int_{P_{C}}^{1}\alpha dF & \text{for \ensuremath{i=H}}\\
\int_{0}^{\frac{P_{C}}{\phi}}P_{C}dF+\int_{\frac{P_{C}}{\phi}}^{1}\phi\alpha dF & \text{for \ensuremath{i=L.}}
\end{cases}
\end{align*}
Note that $P_{C}$ is the equilibrium price in the segmented market
economy rather than the single market economy since we consider a
non atomic agent who does not have any impact on the segmented market
equilibrium.

For this agent, the expected welfare gain from having access to $B$-market
is given by 
\begin{align}
\Delta v_{S} & =v_{S}-v_{S}^{0}\nonumber \\
 & =\pi\Delta v_{S,H}+(1-\pi)\Delta v_{S,L}.\label{eq:Dv_S}
\end{align}
By applying uniform assumption, we obtain simple formulae:
\begin{align}
\Delta v_{S,j} & =\begin{cases}
\frac{1}{2}(P_{B}^{2}-P_{C}^{2}) & \text{if }j=H\\
(1-\theta)\frac{\Delta P^{2}}{2\phi\theta} & \text{if \ensuremath{j=L}.}
\end{cases}\label{eq:dvsh}
\end{align}
Obviously, $\Delta v_{S,H}$ is monotonically increasing in $\theta$
since $P_{C}$ is monotonically decreasing in $\theta$. Intuitively,
the \textit{reservation} welfare for sellers with $H$-asset is decreasing
in $\theta$ since the terms of trade in $C$-market will deteriorate
if the blockchain technology improves. That is, the more secure the
blockchain becomes, the larger the gain from having the access to
$B$-market will be for $H$-type sellers.

On the other hand, as for $\Delta v_{S,L}$, we have 
\[
\frac{d\Delta v_{S,L}}{d\theta}=\frac{\Delta P}{2\phi}\left[\Delta P\frac{dg}{d\theta}+2g\frac{d\Delta P}{d\theta}\right]
\]
where $g=(1-\theta)/\theta$. Since, $P_{C}=\phi(1-\pi P_{B})/(2-\pi)$
in the equilibrium, it becomes 
\begin{equation}
\frac{d\Delta v_{S,L}}{d\theta}=\frac{\Delta P}{2\phi}\frac{dg}{d\theta}\frac{1}{2-\pi}\left[[2-\pi(1-\phi)]P_{B}(1-\varepsilon_{P})-\phi\right],\text{ }\varepsilon_{P}\equiv-\frac{dP_{B}/dg}{P_{B}/g}>0.\label{eq:elasticity}
\end{equation}
$\varepsilon_{P}$ represents the elasticity of $P_{B}$ regarding
the change in the security $\theta$ (since $g$ and $\theta$ have
negative monotone relationship, we consider it as the effect of $\theta$).
When the elasticity is high, \emph{i.e.}, $\varepsilon_{P}$ is large,
$\Delta v_{S,L}$ is increasing in $\theta$. Otherwise, it is decreasing
in $\theta$. Since the welfare gain of sellers with $L$-asset comes
only from the transaction through $B$-market, a higher $\theta$
has two competing effects. First, a higher $\theta$ increases the
offer price $P_{B}$ in $B$-market, which has a positive impact on
the sellers' welfare through a higher return from selling. This higher
$P_{B}$ proliferates the positive impact on $\Delta v_{S,L}$ by
inducing a higher probability of submitting selling order into $B$-market.
On the other hand, higher security level makes the rejection probability
higher. This effect reduces the gain for sellers with $L$-asset.
Given that the latter effect is a direct consequence of $\theta$,
the first positive effect dominates the latter effect when the increment
of $P_{B}$ is large, namely, $\varepsilon_{P}$ is high. 

When is the elasticity more likely to be high? It can be translated
into the market equilibrium: a higher $P_{B}$ confounds the demand
when the cost of migration for the buyers is low. On the other hand,
if the cost of migration is high, the higher price can sustain itself,
making the elasticity of $P_{B}$ high. Thus, $\Delta v_{S,L}$ exhibits
upward sloping curve when (i) $\phi$ is low or (ii) $\Delta\pi$
is large.

\section{Technology Overview\label{sec:Technology-Overview:-Cryptocurre-1}}

\subsection{Bitcoin}

The leading example of cryptocurrency is Bitcoin. The idea of Bitcoin
is first introduced by \citet{nakamoto2008bitcoin}, who proposes
the blockchain technology for the first time. The part of the objectives
of this proposal is to offer a solution to the ``double spending''
problem. Bitcoin is the first success after a long history of proposals
of decentralized media of transactions, making it the largest market
capitalization in the cryptocurrency trading market (\citealp{narayanan2016bitcoin}).\footnote{As of February 8, 2018} 

The Bitcoin blockchain has recorded the information of the flow of
bitcoins across participants (``Alice paid X bitcoin to Bob'') in
a tamper-proof manner. In this platform, the traded good is bitcoin
itself. To have a concrete idea, we aline the transaction of bitcoin
with the example in Appendix \ref{sec:Appendix:-Motivating-Examples}. 

Suppose that, at date $t=0$, liquidity providers have liquid assets
(cash or bitcoin), and liquidity takers are endowed with illiquid
assets whose common value is $k$. At date $t=1$, takers are hit
by a liquidity shock and want to offload (liquidate) their asset holding
to obtain (net) utility $vs-k$ from $s$ amount of liquid cash (or
coin), where $v$ is some positive private value. The state of liquidity
providers at date $t=1$ is either $s_{t}\in\{m,0\}$, where $s_{t}$
represents the amount of cash or bitcoin she holds. 

To bridge the argument to the example introduced in Appendix \ref{sec:Appendix:-Motivating-Examples},
we can think of the realization of $s_{1}$ as the result of cumulative
transactions: there are dates $t\in\{-N,-(N-1),\cdots,-1,0\}$, and
each has a state $s_{t}$ that represents cash flow at each date. 

Suppose that $s_{1}=0$ realizes (she already spent her cash or coin
in the past) for $1-\pi$ fraction of liquidity providers, and rest
of them have $s_{1}=m$.\footnote{Assume that $\pi$ is common knowledge, and $m<k<\pi vm$ hold so
that transactions take place.} Since announcing $\hat{s}_{1}=m$ is strictly dominant for all of
the buyers, $1-\pi$ fraction of them are fraudulent who attempt to
use the coin or money they already spent (double-spend). In the traditional
cash market without a bank that monitors the accounts of her customers
and transactions, fraudulent agents easily spend their money twice
(or more) as long as $\pi vm>k$, because this inequality means that
sellers want to sell the asset. 

On the other hand, in the Bitcoin's network, it is extremely difficult
to spend the coin twice because even if the agent with $s_{1}=0$
claims $\hat{s}_{1}=m$, this cannot be an agreement. That is, $\theta$
fraction of $1-\pi$ agents fail to accomplish their fraud transactions,
which makes the fraction of honest sellers $\pi(\theta)=\frac{\pi}{\pi+(1-\theta)(1-\pi)}>\pi$.
This provides a higher expected return for liquidity takers, and hence
they have an incentive to utilize the blockchain platform rather than
the traditional transaction.\footnote{Of course, knowing that $\theta$ fraction of ``double spending''
fails, the behavior of liquidity providers also changes. We do not
go into detail of this point here and leave it for the formal analyses
in Section 3 of the main text.} In the example of Bitcoin, $\theta$ is very high since the double
spending is precluded unless an agent has a prohibitively strong computing
power. 

\subsection{Ethereum}

As explained in the main text, precluding double spending is not the
only feature enabled by the blockchain technology. It also allows
us to write complex scripts to determine what kind of information
is regarded and added as ``relevant'' one.\footnote{Technically speaking, the language for the scripts in Ethereum transactions
is Turing-complete, a class of language that allows complex statements.
This capacity of allowing complexity makes state-contingent contracts
possible.} First, as mentioned earlier, the state information recorded on the
blockchain is highly credible. Second, by writing codes such that
``transaction takes place if and only if the state $s$ satisfies
conditions (1)..., (2)...., and (N),'' we can make a transaction
contingent on desirable conditions (1)-(N). This technology has many
applications to mitigate informational problem in assets transactions,
information storage, and allocation. 

The wine blockchain, founded by EY Advisory \& Consulting Co. Ltd.
offers a concrete example.\footnote{Other sectors outside the financial service industry, such as supply
chain management, are also interested in digitizing and tracking information
of products. For example, Walmart Stores Inc. is testing a transaction
system on IBM blockchain technology to manage supply-chain data (\url{https://www-03.ibm.com/press/us/en/presskit/50610.wss},
visited on May 10, 2018). The products include porks, mangoes, berries
and a dozens of other products. It is aimed to identify bad sources
throughout the overall chains of food product intermediations. } Traditionally, wine market is exposed to a risk of counterfeit (``lemons''
in the sense of \citeauthor{akerlof1970market} {[}\citeyear{akerlof1970market}{]}),
whose economic losses is said to be \$1-5 billion per year.\footnote{For example, see \citet{holmberg2010wine} and \citet{przyswa2014counterfeiting}.}
The problem of low-quality wines is severe since many intermediaries
are involved in a supply chain of wine, making it difficult to keep
track of all the transactions from ingredient firms to retail stores.
By utilizing the blockchain and smart contract, however, transactions
of wines become almost free from the lemons' problem without any credible
third-parties or interventions.\footnote{See \citet{buterin2016ethereum} for more details.} 

In contrast to the example of Bitcoin, which records a flow of coin
as a state variable, this example can be directly adopted to the preceding
example in Appendix \ref{sec:Appendix:-Motivating-Examples}. That
is, state information can take a range of characteristics: it can
record a brand of ingredient, name of wine-producer, in what temperature
and how long a wine has been stored, and so on. This can also be applied
to other classes of assets whose value is difficult to identify for
consumers. Given the descriptions of state information, Ethereum allows
us to make transactions conditional on realization of desirable states. 

\subsection{Connection of Blockchain and Cryptocurrency}

The two examples the blockchain above use cryptocurrency as a means
of transaction. This class of blockchain platforms includes the one
for transactions of wines (EY, based on Ethereum), security (tZERO),
international remittance (Bitcoin), arts and photography (Kodak, based
on KodakOne and KodakCoin), and more. 

Another interesting example is Ripple. Although their underlying technology
is not exactly the blockchain, Ripple also utilizes a distributed
ledger to provide secure transactions between banks and commercial
firms, in which cryptocurrency XRP is used. An approval of transaction
is not made by Proof of Work (PoW) as in the Bitcoin system, but it
is done by a certified set of validating nodes. Hence a transaction
is settled faster than in the Bitcoin system, and waste of electricity
inherent in the PoW system is relaxed. 

There are also blockchain platforms that do not need circulation of
cryptocurrencies as a medium of transactions. For instance, EverLedger,
providing the blockchain platform for exchanging a variety of assets
(wine, art, jewelry), claims that they have no interest in building
their own cryptocurrency since they want to avoid many political challenges.\footnote{\href{urlto:http://www.eweek.com/cloud/hyperledger-blockchain-project-is-not-about-bitcoin}{http://www.eweek.com/cloud/hyperledger-blockchain-project-is-not-about-bitcoin}}
Moreover, a number of ``permissioned blockchain'' platforms do not
need to use digital currency or mining process to record information. 

For a blockchain platform whose transactions are not necessarily executed
by cryptocurrency, the model provides an implication for the fundamental
value (price) of the blockchain itself. Specifically, Proposition
\ref{prop:The-fundamental-price} of the main text proposes a theoretical
measure for the price of these types of blockchain technology and
show that it corresponds to the welfare gain of participants in the
network.

\section{Imposing Fee on Sellers}

What if the suppliers (or producers of goods) have to pay the fee
to use the blockchain? We can fall back on the same logic to derive
the maximum possible fee that the manager can charge on sell-side
of the market, which we denote as $f_{S}$:
\begin{align*}
f_{S} & =\Delta v_{S}
\end{align*}
with $\Delta v_{S}$ in (16) of the main model. 

When $\Delta v_{S,L}$ is increasing in $\theta$, the total fee $f_{S}$
is also increasing, while if $\Delta v_{S,L}$ is decreasing, the
form of $f_{S}$ is ambiguous since it depends on the level of $\pi$.
Suppose that the parameter values make $\Delta v_{S,L}$ decreasing
in $\theta$. Under this situation, it seems natural to conclude that
a lower $\text{\ensuremath{\pi}}$ makes $f_{S}$ downward sloping
because it puts more weight on $\Delta v_{S,L}$. However, this is
not necessarily the case. For example, if we make $\pi\rightarrow0$,
we have $d\Delta v_{S,L}/d\theta\rightarrow0$ and $df_{S}/d\theta\rightarrow0$.
This is because of the dominating $L$-asset in the market. As the
level of $\pi$ diminishes, the share of $L$-asset increases, and,
at the limit, there are only $L$-asset in both of the markets. This
implies that having the access to $B$-market does not payout: the
welfare gain converges to zero. Further analyses on the sellers' willingness
to pay are provided in Appendix D as numerical experiments because
of the difficulty of an analytical characterization.

\subsection{Manager vs. Sellers\label{subsec:Manager-vs.-Sellers}}

Next, suppose that the manager makes money by imposing the fee on
the sell-side of the market, while the government tries to maximize
sellers' welfare. From (16), (17), and (18), we know the followings:
the welfare gain of $H$-asset holders is monotonically increasing,
imposing a positive pressure on $f_{S}$, while that of $L$-asset
holders has an ambiguous effect. Moreover, Proposition 8 implies that,
as long as $f_{S}$ is monotonically increasing in $\theta$ due to
the dominating effect from $d\Delta v_{S,H}/d\theta$, the optimal
$\theta$ set by the manager cannot agree with $\theta$ that maximizes
$v_{S,L}$. Thus, the welfare of $H$-asset holders is maximized,
while sellers with $L$-quality asset incur welfare loss. Proposition
8 implies that there is a conflict between the welfare of $H$-asset
holders and $L$-asset holders: once the type of asset is realized,
even the social planner cannot maximize the welfare of both types
of traders. 

Since we cannot analytically characterize the properties of sellers'
welfare further, we rely on the numerical examples. We find the total
fee revenue is upward sloping, and the maximizing $f_{S}$ agrees
with maximizing $v_{S,H}$ in most ranges of parameters. Note that
the discussion around elasticity makes clear in what situation this
welfare loss tends to occur. 

\subsection*{Figures and Tables\label{sec:Figures-and-Tables}}

The following figures provide the numerical examples for the sellers'
welfare and fees. Parameters take $\pi\in\{0.01,0.1,0.4,0.7,0.9\},\text{ and }\phi\in\{0.35,0.5,0.7\}$.
The first (second) column shows the total and reservation welfare
of $H$-type ($L$-type) sellers, as well as the fee imposed by the
manager, $f_{S}$. The third column is the plot of the total (ex-ante)
welfare of sellers and $f_{S}$. 

As suggested by the theory, $v_{H}$ is monotonically increasing in
$\theta$, while $v_{S,L}$ is either monotonically decreasing or
hump-shaped. If $f_{S}$ can be decreasing in $\theta$, that should
occur when $\theta$ is relatively high. However, even if we set the
share of $L$-type sellers large ($\pi=0.01$), the configuration
of $f_{S}$ is upward sloping. This is because the change in $\theta$
affects $v_{S,L}$ mostly through the change in $v_{S,L}^{0}$ when
$\pi$ is small. That is, the welfare gain for $L$-type seller, $v_{S,L}-v_{S,L}^{0}$,
is not affected by $\theta$ (see the difference between blue and
green-dotted lines). Of course, a higher $\theta$ increases the welfare
gain from trading in $B$-market. Meanwhile, it reduces the welfare
gain in $C$-market, which has a dominating effect on the welfare
because only $1-\theta$ fraction of selling attempt get to have a
benefit of a higher $\theta$.  

\begin{figure}[H]
\begin{center}\caption{Welfare of Sellers and Fee: $\pi = 0.01$}\label{Fig_sell}

\includegraphics[scale=0.15]{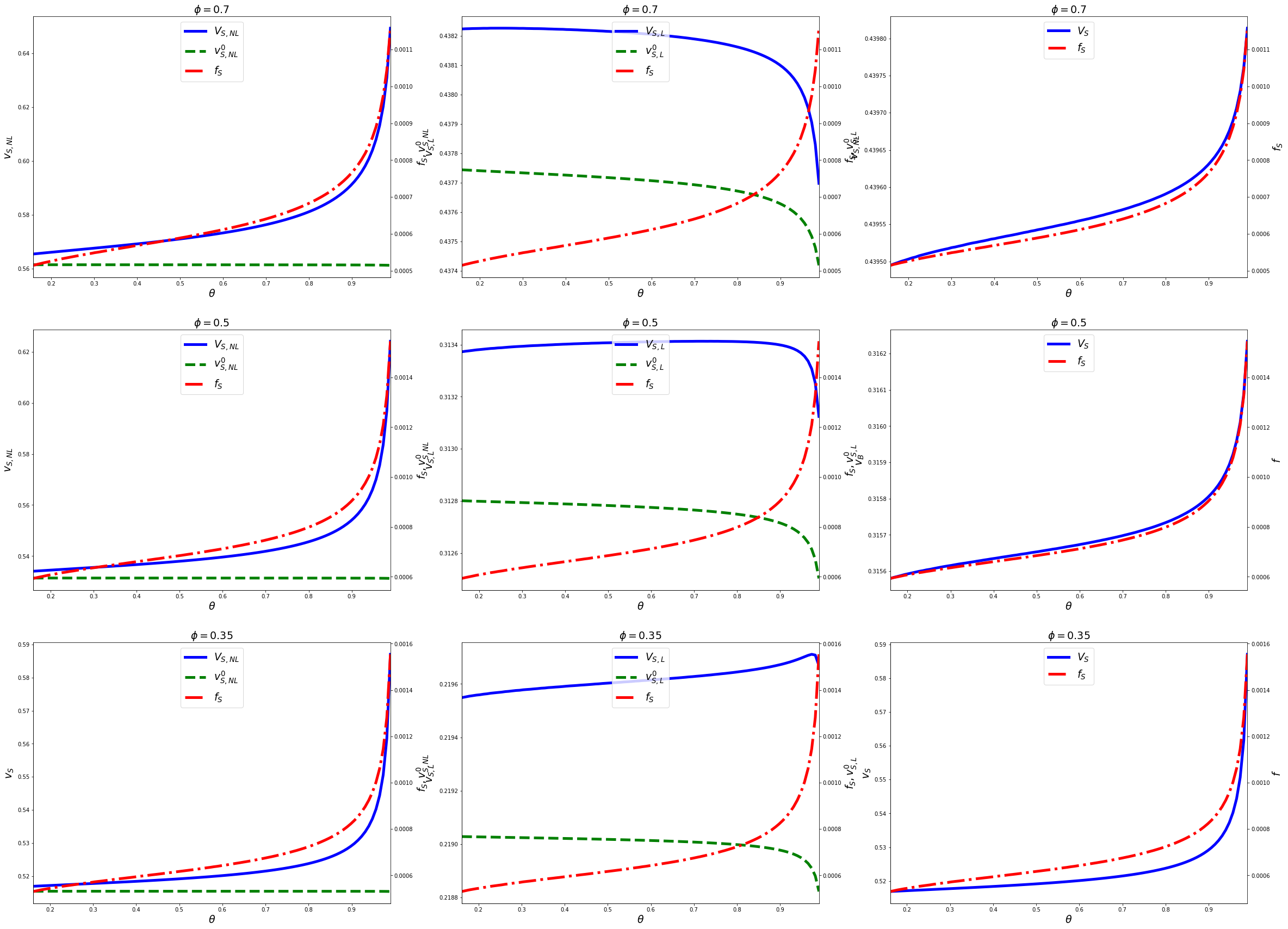}

\end{center}
\end{figure}

\begin{figure}[H]
\begin{center}\caption{Welfare of Sellers and Fee: $\pi = 0.1$}\label{Fig_sell}

\includegraphics[scale=0.15]{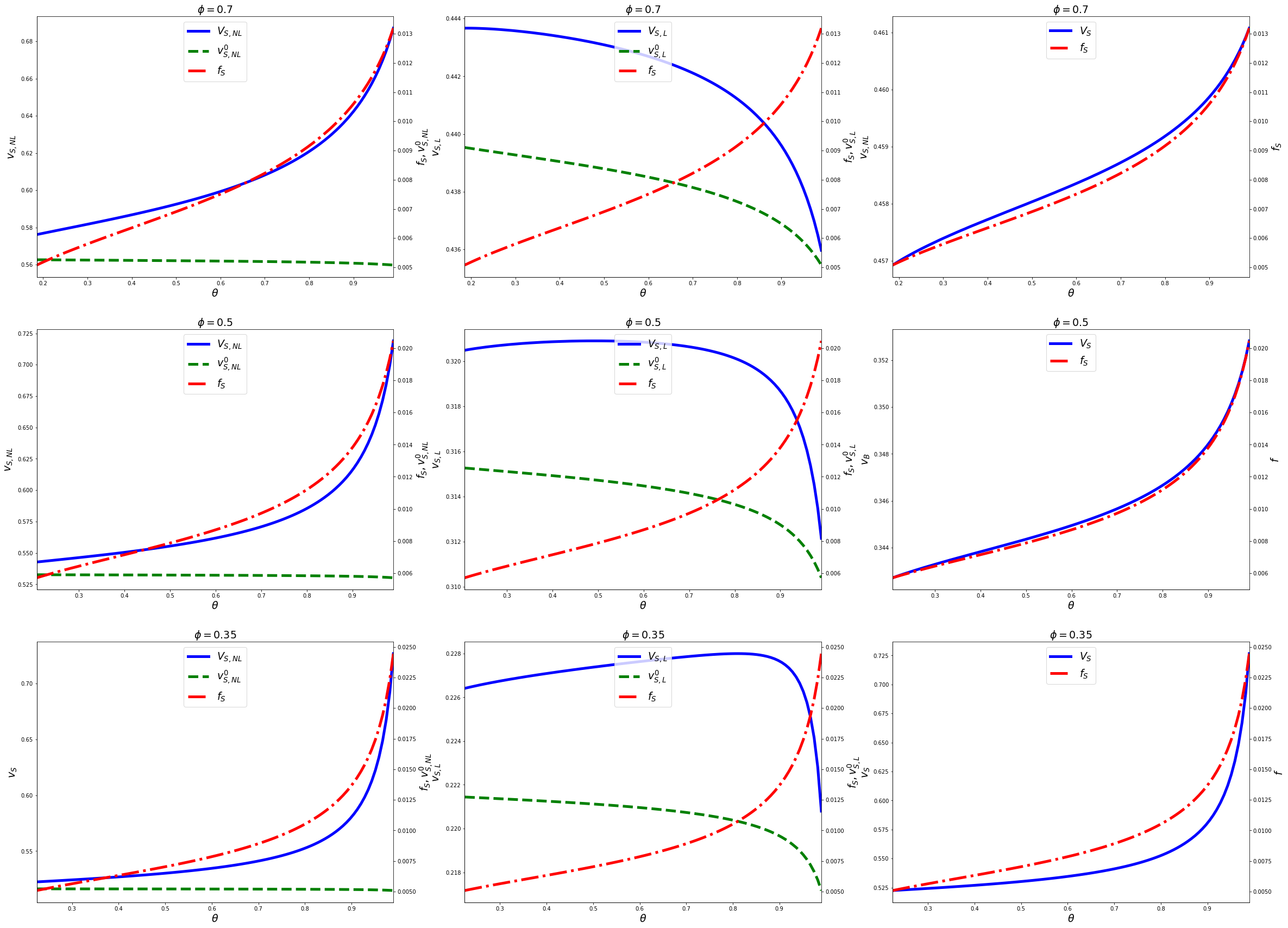}

\end{center}
\end{figure}

\begin{figure}[H]
\begin{center}\caption{Welfare of Sellers and Fee: $\pi = 0.4$}\label{Fig_sell}

\includegraphics[scale=0.15]{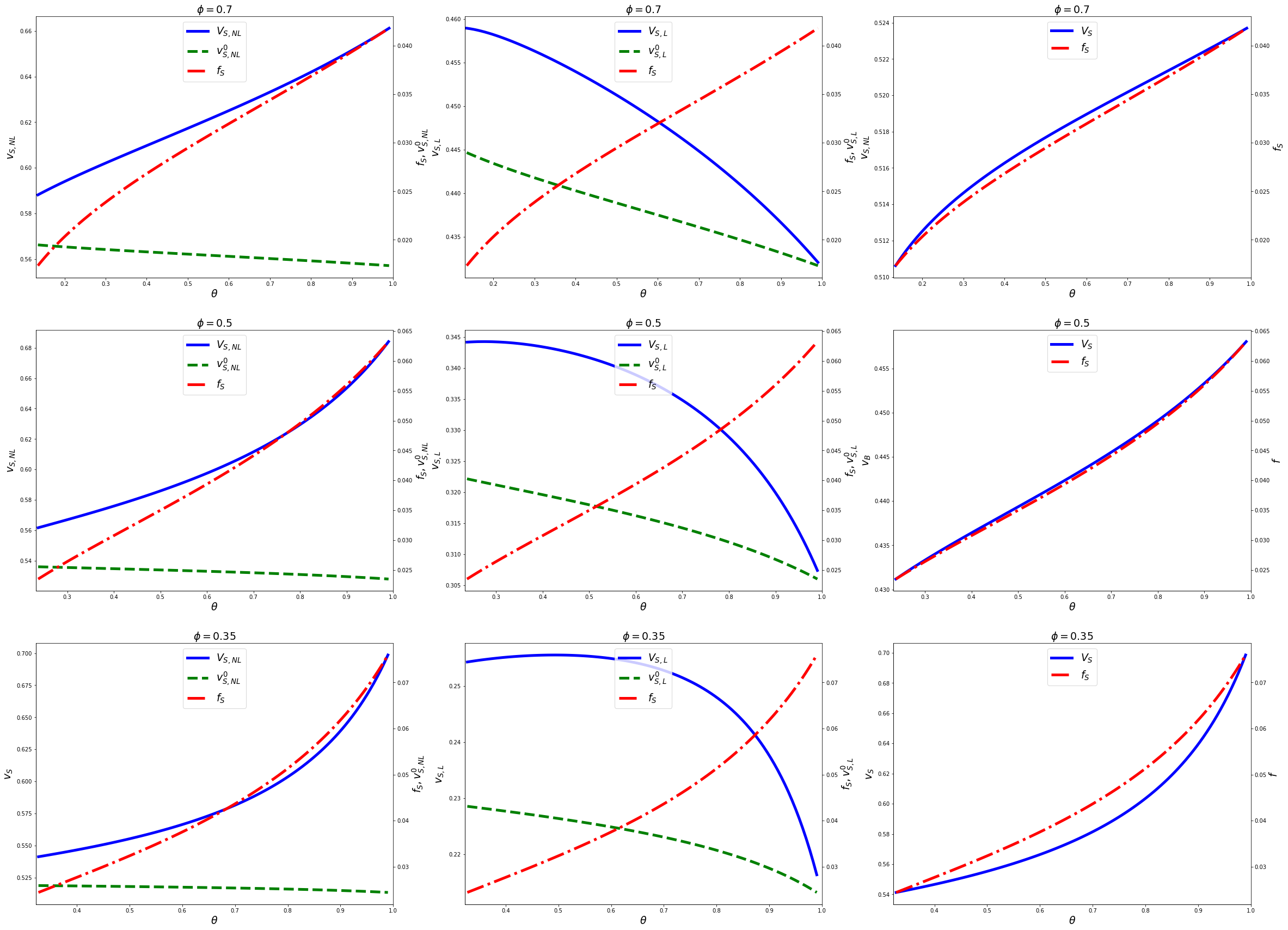}

\end{center}
\end{figure}

\begin{figure}[H]
\begin{center}\caption{Welfare of Sellers and Fee: $\pi = 0.7$}\label{Fig_sell}

\includegraphics[scale=0.15]{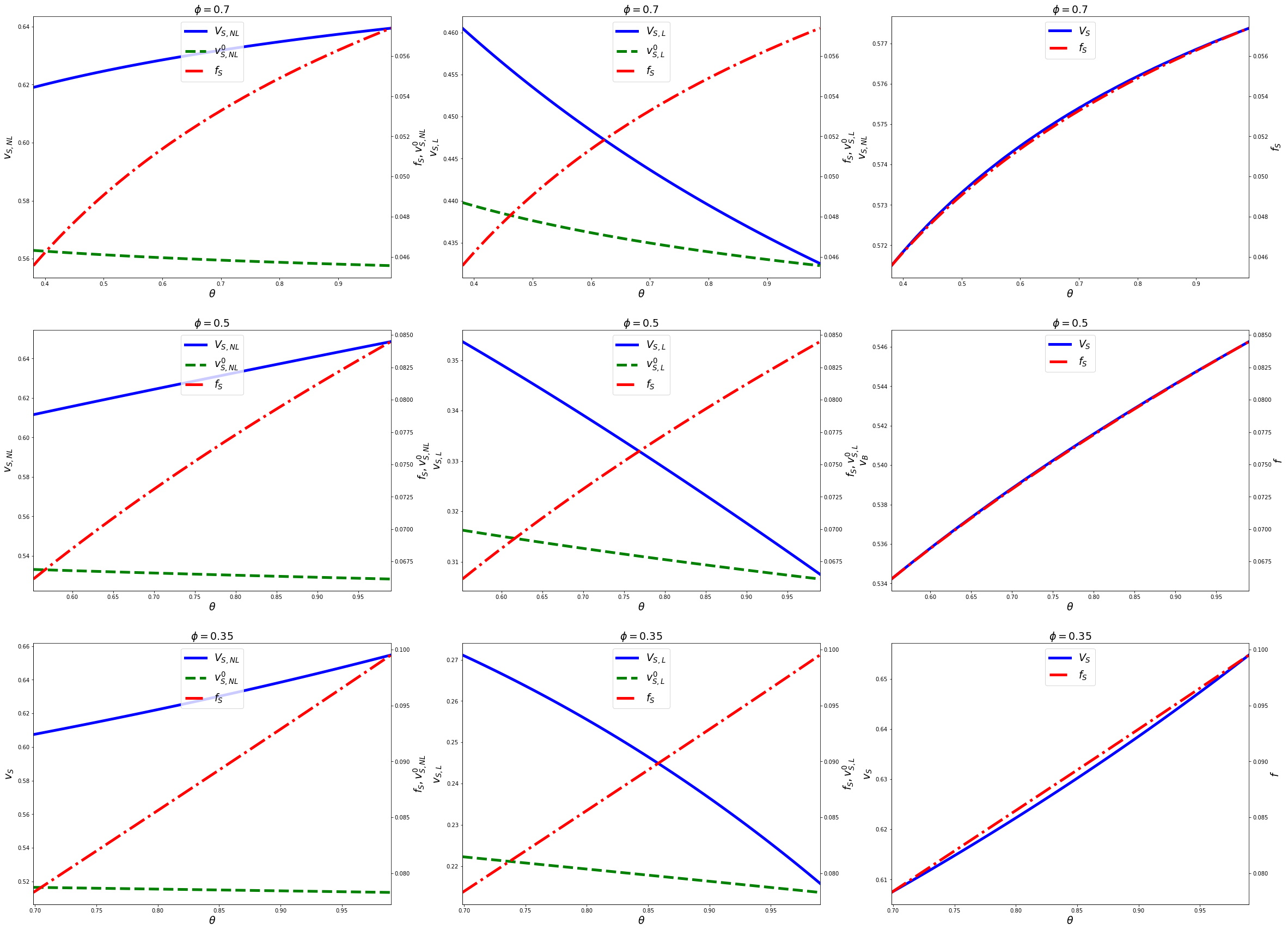}

\end{center}
\end{figure}

\begin{figure}[H]
\begin{center}\caption{Welfare of Sellers and Fee: $\pi = 0.9$}\label{Fig_sell}

\includegraphics[scale=0.15]{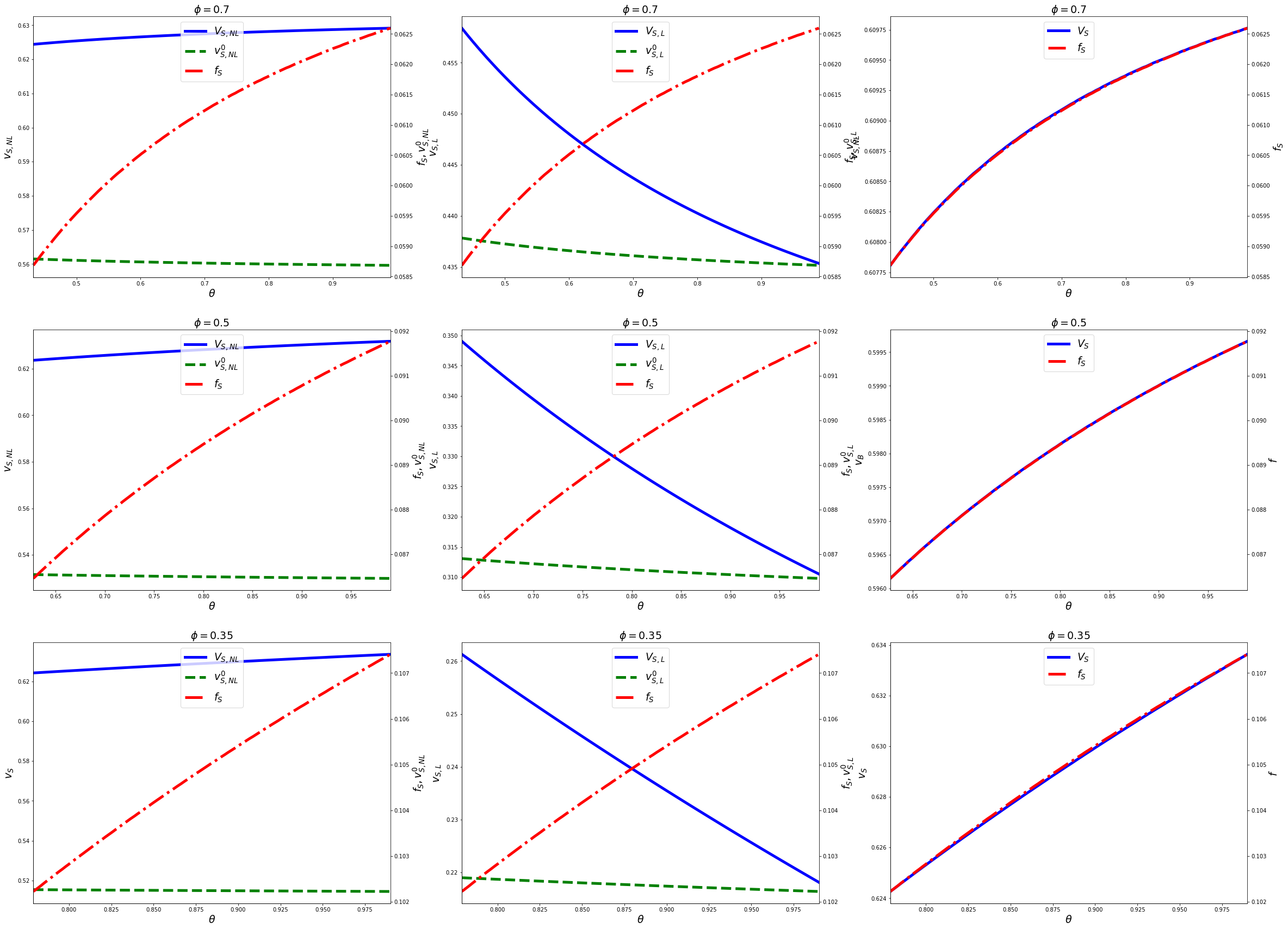}

\end{center}
\end{figure}

\end{document}